\newtheorem{proposition}{Proposition}
\newtheorem{theorem}{Theorem}
\newtheorem{lemma}{Lemma}
\newtheorem{cor}{Corollary}
\tikzstyle{transition}=[circle,draw=black!50,fill=black!20,thick,
\renewcommand{\inf}{\infty}
\newcommand{\mb}{\mathbf}
\newcommand{\tr}{\text{tr}}
\newcommand{\norm}[1]{\lVert#1\rVert}
\begin{document}

\title[]{The Interplay of Finite and Infinite Size Stability in Quadratic Bosonic Lindbladians}

\author{Mariam Ughrelidze}
\affiliation{\mbox{Department of Physics and Astronomy, Dartmouth College, 6127 Wilder Laboratory, Hanover, New Hampshire 03755, USA}} 

\author{Vincent P. Flynn}
\affiliation{\mbox{Department of Physics and Astronomy, Dartmouth College, 6127 Wilder Laboratory, Hanover, New Hampshire 03755, USA}} 

\author{Emilio Cobanera}
\affiliation{\mbox{Department of Mathematics and Physics, SUNY Polytechnic Institute, 100 Seymour Avenue, Utica, NY 13502, USA}}
\affiliation{\mbox{Department of Physics and Astronomy, Dartmouth College, 6127 Wilder Laboratory, Hanover, New Hampshire 03755, USA}} 

\author{Lorenza Viola}
\affiliation{\mbox{Department of Physics and Astronomy, Dartmouth College, 6127 Wilder Laboratory, Hanover, New Hampshire 03755, USA}} 

\begin{abstract}
We provide a framework for understanding {\em dynamical metastability} in open many-body systems of free bosons, whereby the dynamical stability properties of the system in the infinite-size (thermodynamic) limit may sharply differ from those of any finite-size truncation, and anomalous transient dynamics may arise. By leveraging pseudospectral techniques, we trace the discrepancy between asymptotic and transient dynamics to the {\em non-normality} of the underlying quadratic bosonic Lindbladian (QBL) generator, and show that two distinct flavors of dynamical metastability can arise. QBLs exhibiting \textit{type I dynamical metastability}, previously discussed in the context of anomalous transient amplification [Phys. Rev. Lett. \textbf{127}, 245701 (2021)], are dynamically unstable in the infinite-size limit, yet stable once open boundaries are imposed. \textit{Type II-dynamically metastable} QBLs, which we uncover in this work, are dynamically stable for infinite size, but become unstable under open boundary conditions for arbitrary finite system size. We exhibit representative models for both types of metastability in the dissipative, as well as the limiting closed-system (Hamiltonian) settings, and analyze distinctive physical behavior they can engender. We show that dynamical metastability manifests itself in the generation of \textit{entanglement entropy}, by way of a transient which reflects the stability phase of the infinite (rather than the actual finite) system and, as a result, is directly tied to the emergence of {\em super-volume scaling} in type I systems. Finally, we demonstrate how, even in Hermitian, and especially in highly non-normal regimes, the spectral properties of an infinite-size QBL are reflected in the linear response functions of the corresponding finite QBLs, by way of {\em resonant pseudospectral modes}.
\end{abstract}

\date{\today}

\maketitle 


\section{Introduction}

\subsection{Context and main results}

While realistic many-body systems always comprise a finite number of constituents and have a finite spatial extent, taking the infinite-size (thermodynamic) limit plays a foundational role and is standard practice across statistical mechanics. In particular, only in the infinite-size limit is it possible to use statistical ensembles for recovering well-defined macroscopic observables or describe phase transitions \cite{Kardar}. In the infinite-size limit, the boundary conditions (BCs) that are imposed on the system are taken to have negligible influence on its thermodynamic properties; in practice, finite-size scaling methods are used to infer quantities that are defined in the thermodynamic limit (for example, critical exponents) from experimental or numerical data on finite systems, under the assumption that the system size may be treated as a scaling parameter \cite{FiniteSizeScaling}. Although the existence of the thermodynamic limit is not guaranteed and its justification is not devoid of subtleties \cite{Kuzemski,Palacios}, it is fair to say that our intuition for how properties of many-body quantum systems change with {system size} and/or BCs is derived from our experience with Hamiltonian -- hence, {Hermitian} (and normal) dynamical systems. To what extent does this intuition carry over to more general -- {\em non-Hermitian} (NH) and possibly {\em non-normal} -- dynamical systems, and why would this be relevant?

Scenarios where the dynamics are governed by a NH generator arise in many quantum settings of interest. On the one hand, non-Hermiticity appears naturally in open quantum systems \cite{Breuer}, a prominent class being Markovian dissipative systems, that evolve under a {\em Lindblad master equation}   \cite{LindbladGenerators}. In either a semiclassical or a measurement-post-selected regime where ``quantum jumps'' can be neglected, the latter simplifies to a probability-non-conserving evolution described by an {\em NH effective Hamiltonian}. Likewise, explicitly NH descriptions have long been used to model open-system behavior phenomenologically -- from the decay of unstable states to anomalous wave propagation and localization, with implications ranging from photonic, electrical, and mechanical systems all the way to quantum materials \cite{UedaReview}. On the other hand, even for a closed system, unitary dynamics can still be retained for a class of NH, but parity-time ``${\cal PT}$-symmetric'' \cite{BenderPTRep}, or ``pseudo-Hermitian'' \cite{Ali} Hamiltonians, if one allows for a modified inner product. Most remarkably, as a {\em sole} consequence of quantum statistics, {\em effectively NH} dynamics arises for systems of non-interacting (free) bosons under ``pairing interactions'', despite their physical Hamiltonian remaining Hermitian at the many-body level \cite{bla86}.
What, then, is the correct picture of the thermodynamic limit and its interplay with finite size in such more general settings? 

One of the counter-intuitive features of 
NH systems is the eponymous \textit{NH skin effect} (NHSE) \cite{NHSE0,YaoSkin,SatoTopoSkin}, whereby the spectrum of a spatially extended, ``bulk'' system, that one may think of as being in the thermodynamic limit, changes drastically upon imposing system terminations via open BCs (OBCs), and a macroscopic number of eigenstates localize at the boundary.
An illustrative example is the paradigmatic Hatano-Nelson (HN) chain \cite{HNChain}. The eigenvalue spectrum under periodic or bi-infinite BCs (PBCs/BIBCs) forms an ellipse in the complex plane; in dramatic opposition, under OBCs, the spectrum of the HN chain is purely real. Stranger still, the spectrum under semi-infinite BCs (SIBCs) contains the ellipse of the BIBC spectrum, in addition to {\em all} the complex points in its interior \cite{UedaTopo,SatoTopoSkin}. 

Dramatic as it is, the NHSE is in fact a manifestation of a more general phenomenon: The spectrum of a non-normal dynamical generator lacks robustness against perturbations \cite{TrefethenPS}. Rather, under changes in, say, BCs, system size, or other system-parameters, it can be strongly deformed in seemingly unpredictable ways. Therefore, the complete characterization of such systems requires going beyond the spectrum and adopting mathematical tools specifically designed for understanding highly non-normal dynamical systems. Such tools are predicated on the fundamental notion of the {\em pseudospectrum} \cite{TrefethenPS,BottcherToe}. In essence, the pseudospectrum is a norm-dependent generalization of the usual spectrum of a linear transformation. Among its many useful features, it can provide an understanding of how the spectrum of a dynamical system will change under a perturbation of a given size, with the extent of the resulting spectral deformation depending on the degree of non-normality. 

In this work, we will make use of pseudospectral techniques to tackle the above questions for Markovian systems of free bosons, described by \textit{Quadratic bosonic Lindbladians} (QBLs), that is, Lindblad generators that are quadratic in some set of canonical bosonic creation and annihilation operators. The effects of strong non-normality are particularly rich in bosonic systems, as such systems have the potential to be dynamically unstable \cite{Prosen3QBoson,BarthelQuadLindblad}. Consequently, in addition to purely unstable ones, there exist bosonic systems which have stable asymptotic dynamics, yet display unstable, amplifying dynamics for transient regimes whose duration scales with system size. Such systems are of interest from the point of view of nonreciprocal transport and topological amplification \cite{ClerkBKC,PorrasTopoAmp,NunnenkampTopoAmp}, 
along with, as of recently, the search of topological bosonic zero modes \cite{Bosoranas,PostBosoranas}. In the latter context, in particular, the notion of {\em dynamical metastability} was put forward to describe situations where the infinite-size dynamics controls the transient of finite-size dynamics -- in the sense that the dynamical stability properties of a QBL ``in the thermodynamic limit" can drastically differ from those of {\em any} finite portion of the same system; yet,  they nonetheless ``imprint'' themselves in and are observable throughout the transient window.

More specifically, throughout this work we shall focus on one-dimensional (1D) ``bulk-translationally invariant'' QBLs, for which (discrete) translation symmetry is possibly broken {\em only} by BCs \cite{PostBosoranas}, and mathematical results from Toeplitz operator theory can be brought to bear on the problem of characterizing the relevant spectral and pseudospectral properties \cite{TrefethenPS,BottcherToe}. Our key findings can be summarized as follows. 

\textbf{(1)} By casting the discrepancy between asymptotic and transient regimes of a system as an interplay between finite-size effects and a non-trivial imprint of the infinite-size limit, we establish two distinct ways in which the dynamics of a half-infinite system can differ from that of its finite-size, open-boundary incarnations. Proper retention of boundary information requires making the notion of ``semi-infinite BCs'' mathematically precise; we do so by considering a semi-infinite system to be the {\em union} of two 
systems, each with one (left or right) boundary (see also Fig.\,\ref{fig:BCs}). 
In this way, one mechanism for disagreement is provided by a {\em spectral discontinuity}, whereby the limit of the finite-size spectra for increasing system size differs from the spectrum of the corresponding infinite system. Such a spectral discontinuity does not necessarily translate into a disagreement between dynamical stability phases, however. Hence, an independent notion of disagreement arises with regards to the dynamical stability in the finite-size, as compared to the infinite-size limit, {\em regardless} of spectral convergence. We dub systems exhibiting this kind of dynamical-stability disagreement as {\em dynamically metastable} (DM), and concentrate on them in the remainder of our analysis. 

\smallskip

\textbf{(2)} We identify {\em two} different types of dynamical metastability, which we refer to as {\em Type I} and {\em Type II} DM, respectively. Type I DM systems are characterized by stable asymptotic dynamics for any finite system size, but a dynamically unstable semi-infinite limit. In contrast, type II DM systems are dynamically stable in the semi-infinite limit, but display asymptotically unstable finite-size dynamics. Both type I and type II systems host {\em anomalous transient} dynamics. Type I systems have amplifying transient dynamics, whereby, despite asymptotic stability, they appear dynamically unstable. In this regime, observable expectation values can grow exponentially for a period of time that diverges with system size. Likewise, type II systems appear stable, for increasingly long times with growing system-size, until exponential instabilities eventually set in. In both scenarios, the transient dynamics of a finite-size truncation reflects the stability phase of the corresponding (semi-)infinite system, more and more reliably so as the system size grows. From a physical standpoint, one may think of the the above as reflecting the fact that, the larger the system, the longer its dynamics takes to become aware of its boundaries; therefore, the finite system temporarily behaves akin to its infinite-size counterpart. Mathematically, the transient behavior can be gleaned from the pseudospectrum of the system, which acts like the spectrum for a transient period of time. We introduce several model QBLs to concretely illustrate these ideas, both in the open Markovian setting and in the closed-system limit, whereby the dynamics is generated by a quadratic bosonic Hamiltonian (QBH).  

\smallskip

\textbf{(3)} We conjecture that the discrepancy between the transient and asymptotic dynamics in a DM system of both type I and type II can be detected by any physical quantity whose scaling with system size changes depending on the dynamical stability phase of the underlying QBL (or QBH). We test this conjecture by examining the rate of generation of bipartite {\em entanglement entropy} (EE) over time, for different system sizes. For QBHs initialized in a pure Gaussian state, the scaling of asymptotic EE has been shown to be directly tied to the dynamical stability properties of the Hamiltonian \cite{Rigol2018}. Consequently, it stands to reason that in the transient period, during which a DM system appears to be
in a stability phase different from the true one, the EE will also scale in accordance with the stability phase of the semi-infinite system. Indeed, for our model QBHs, we find that type I DM systems experience size-dependent transient EE growth with the rate characteristic of their unstable infinite-size limits, whereas type II ones exhibit transiently stable EE behavior, before an unstable EE growth regime sets in. Notably, these results provide an explanation for the recently reported ``super-volume scaling'' law of the asymptotic EE in a family of QBH models \cite{ClerkInterpolation}, which we recognize as belonging to our type I DM class.

\smallskip
\textbf{(4)} 
We show that the nature of the semi-infinite limit has direct implications for the {\em linear response} of the finite system to a weak external drive, computed under OBCs. Surprisingly, we find that, even in the Hermitian (normal) limit, the susceptibility matrix can display resonant-like behavior for a range of drive frequencies that are not in the exact spectrum of the physical system. Instead, despite the finite system size, such ``pseudoresonances" arise for drive frequencies that are in the infinite-size spectrum, hence the pseudospectrum of the physical system. To drive this idea home, we verify that the response of the system can change drastically as the pseudospectrum is varied, but the spectrum is kept unchanged. Furthermore, we show that the the response function exactly reflects the spatial structure of the pseudospectral modes at the drive frequency, away from the spectrum of the system.
Our results strongly support the idea that the pseudospectrum, rather than the spectrum, provides the appropriate tool for predicting and interpreting the linear response of non-normal dynamical systems in general. 

In more detail, the content of the paper is organized as follows. In Sec.\,\ref{sec:background}, we describe 
the class of QBLs of interest, recall relevant notions of stability, and introduce the different kinds of BCs we will use; a self-contained introduction to the pseudospectrum and related mathematical tools is also provided. The exposition of original work starts in Sec.\,\ref{sec:dynmeta}, with an in-depth exploration of the notion of dynamical metastability and its relationship with the finite/infinite system dichotomy. The two main types of dynamical metastability are introduced and discussed in this section, as well as the relationship to anomalous relaxation. In Sec.\,\ref{illmod}, we show how our general framework is actionable by turning some of its tenets into design principles for constructing concrete models of interest, which serve to illustrate various facets of dynamical metastability. Finally, in Sec.\,\ref{sec:further}, we explore the implications of dynamical metastability from the point of view of EE generation and linear response behavior. We conclude in Sec.\,\ref{sec:conclusions} with a summary and an outlook to future research. 

The Appendices collect important supporting calculations, along with mathematical results which may be of independent interest. In particular, in Appendix\,\ref{app:WH}, we adapt the Wiener-Hopf matrix factorization method, introduced in \cite{WienerHopf} in the context of free-fermion physics, for determining partial indices of a class of Toeplitz operators. In Appendix\,\ref{BoundProof}, we rigorously justify an upper-bound relating the dynamical stability properties of finite- vs. infinite-size block-Toeplitz systems, while in Appendix\,\ref{app:LR}, the general formalism for linear response of Markovian open quantum systems \cite{ZanardiResponse} is specialized to QBLs.

\subsection{Relation to existing work}

Pseudospectra have been extensively used in applied mathematics \cite{TrefethenPS} and various areas of physics and engineering \cite{Reddy,Schmid,Baggett,ViolaJ,Baggio}, as a method for studying the sensitivity of a non-normal matrix or an operator to perturbations, in addition to describing the transient evolution of a dynamical system. The pseudospectrum of a non-normal dynamical generator, which can be very different from its spectrum, is instrumental for making sense of the intuition-defying consequences of non-normality that have been observed in different contexts. Applications of pseudospectra to non-normal dynamics of {\em many-body} quantum systems have been considered only recently. First employed in \cite{SatoOkumaPseudo} to elucidate the robustness of emerging topological zero modes, pseudospectra have then been further used to explore the interplay between robustness and sensitivity \cite{Makris} in lattice systems described by explicitly NH Hamiltonians. In a dynamical context, a non-trivial pseudospectrum was conjectured in \cite{SatoOkumaNHSE} to be responsible for the anomalous relaxation behavior reported in \cite{UedaSkin} for dissipative systems exhibiting a ``Liouvillian NHSE''. For discrete-time dynamics, similar ``two-step'' relaxation behavior has been reported for random quantum circuits and eventually interpreted, again, on the basis of the pseudospectrum \cite{ZnidaricPseudo0, ZnidaricPseudo}. Our use of pseudospectral techniques in this work builds and expands upon our previous contributions \cite{Bosoranas, PostBosoranas}, by maintaining a specific emphasis on non-interacting \emph{many-body bosonic systems}, evolving under a QBL or QBH. 

We stress that other notions of metastability have also been explored, in particular, in the context of generalizing notions from classical stochastic dynamics to Markovian open quantum systems \cite{GarrahanMeta,GarrahanClassMetast}. The key observation therein is that, in certain parameter regimes of some {\em fixed} (and finite-dimensional) Lindblad generators, a separation in dynamical time scales can be engendered by a separation of the real parts of a number of eigenvalues from those of the remaining ones. 
As a consequence, a manifold of long-lived, ``metastable states'' emerges, which appear stationary for a long time before the system relaxes to the true steady state at a much larger time scale, set by an eigenvalue with the smallest (in magnitude) real part. While in our case a delayed relaxation to the steady state manifold can also occur, the notion of dynamical metastability we consider is fundamentally different, in that it is associated to the ``spontaneous'' loss or restoration of dynamical stability in the infinite
system-size limit. Further to that, finite-dimensional systems as considered in the above works are guaranteed to be dynamically stable, since their spectra are bounded in the left-half complex plane.  Thus, the behavior we characterize here as dynamical metastability is distinctively afforded to us by the {\em infinite-dimensional} nature of the bosonic Fock space.

\section{Background}
\label{sec:background}

\subsection{Quadratic bosonic Lindbladians}

We focus on a class of multi-mode, non-interacting bosonic systems linearly coupled to a Markovian reservoir. The dynamics of any Markovian open quantum system is governed by the Lindblad master equation, $\dot{\rho}(t)=\mathcal{L}(\rho(t))$, with the density operator $\rho(t)$ describing the state of the system at time $t\geq 0$ and the Lindbladian generator having, in units $\hbar=1$, the canonical (diagonal) form
\begin{align}
\mathcal{L}(\rho)\nonumber&=-i[H,\rho]+\sum_{\mu=1}^d\Big(L_\mu\rho L_\mu^{\dag}-\frac{1}{2}\{L_\mu^{\dag}L_\mu,\rho\}\Big)\\
&\equiv -i[H,\rho]+\sum_{\mu=1}^d\mathcal{D}[L_\mu](\rho).
\end{align}
Here, the commutator term, with $H=H^{\dag}$, accounts for the unitary contribution to the dynamics, whereas the $d$ dissipators $\mathcal{D}[L_\mu]$ encode the action of $d$ different dissipative channels, characterized by Lindblad (or jump) operators that we take to be traceless. We further assume that none of the operators $H, L_\mu$ are explicitly time-dependent, making ${\cal L}$ a linear, time-invariant superoperator. Oftentimes it is more convenient to express the dissipative part in terms of a set of physically relevant system operators, say, $\{A_j\}_{j=1}^n$, such that $L_{\mu}\equiv \sum_{j=1}^n l_{j\mu}A_j,$ $l_{j\mu}\in\mathbb{C}$. In this way, we arrive at the Gorini-Kossakowski-Lindblad-Sudarshan (GKLS, non-diagonal) representation of the Lindbladian:
\begin{align*}
\mathcal{L}(\rho)&=-i[H,\rho]+\sum_{j,k=1}^n\mb M_{jk} \Big(A_k\rho A_j^{\dag}-\frac{1}{2}\{A_j^{\dag}A_k,\rho\}\Big), 
\end{align*}
where the $n\times n$ GKLS matrix $\mb M_{jk}\equiv \sum_{\mu=1}^dl^{*}_{j\mu}l_{k\mu}$ is positive-semidefinite. Equivalently, we may describe the system dynamics in the Heisenberg picture, where states are stationary and instead, operators representing observables, $B=B^\dag$, evolve in time according to $\dot{B}(t)=\mathcal{L}^{\star}(B(t))$, with
\begin{align*}
\mathcal{L}^{\star}(B)= i[H,B]+\sum_{jk}\mb M_{jk} \Big( A_j^{\dag}B A_k -\frac{1}{2}\{A_j^{\dag}A_k,B\}\Big).
\end{align*}
The dual generator $\mathcal{L}^{\star}$ is the Hilbert-Schmidt adjoint of $\mathcal{L}$, ensuring that the relationship $\tr[B\rho(t)]=\tr[B(t)\rho(0)]$ is satisfied, for arbitrary observables $B$ and initial states $\rho(0)$.

Specializing to bosonic lattice systems, we introduce $N$ bosonic modes with creation and annihilation operators $a_j^\dag$ and $a_j$, $j=1,\ldots,N$, satisfying the canonical commutation relations (CCRs) $[a_i,a_j^\dag]=\delta_{ij} \mathds{1}_F$, with $\mathds{1}_F$ denoting the identity on Fock space. While terms that are linear in the creation and annihilation operators may be considered in the broader class of ``quasi-free'' \cite{BarthelQuadLindblad} (or Gaussian \cite{GenoniGaussian}) Markovian dynamics, in what follows we will be interested in \textit{purely quadratic} bosonic Lindbladians. In this case, the Hamiltonian $H$ is taken to be a \textit{quadratic bosonic Hamiltonian} (QBH), namely, a Hamiltonian quadratic in the bosonic creation and annihilation operators, while the operators $L_\mu$ are assumed to be linear in the creation and annihilation operators. Physically, QBLs describe a set of non-interacting (or mean-field interacting) $N$ bosonic modes, which in general are both coupled to one another coherently, via the QBH $H$, and dissipatively to a Markovian quantum bath, via the set $\{L_\mu\}_{\mu=1}^d$. 
 
Mathematically, QBHs can be expressed in the form
\begin{align*}
H&=\frac{1}{2}\sum_{i,j}^N\left(\mb K_{ij}a_i^{\dag}a_j+\mb \Delta_{ij}  a_i^{\dag}a_{j}^{\dag}+\text{H.c}\right), 
\end{align*}
with $\mb K=\mb K^{\dag}, \mb \Delta=\mb \Delta^T$ in order to preserve both Hermiticity of $H$ and bosonic CCRs.  In analogy with fermionic systems, we call $\mb K,\mb \Delta$ the \textit{hopping} and \textit{pairing} matrices, respectively. Physically, $\mathbf{K}_{ij}$ encodes passive hopping between modes $i$ and $j$ when $i\neq j$ and an onsite energy term when $i=j$, while $\bm{\Delta}_{ij}$ represents a (coherent) ``bosonic pairing'' coupling. For instance, in photonic implementations, it describes a pairwise non-degenerate parametric amplification (or two-photon) process when $i\neq j$ and a degenerate one when $i=j$. The latter processes typically arise in physical systems as mean-field (or ``linearized") incarnations of three- or four-wave mixing with a set of auxiliary modes \cite{Boyd}.

A QBH can be more compactly defined in terms of the \textit{bosonic Nambu array}, $\Phi\equiv [a_1,a_1^{\dag},\ldots, a_N,a_N^{\dag}]^T$, namely, 
 \begin{align*}
 H=\frac{1}{2}\Phi^{\dag}\mb H\Phi + \frac{1}{2}\tr\,\mathbf{K},
 \end{align*}
where $\mb H$ is a $2N\times 2N$ block-Hermitian matrix with the $(ij)$-th block given by 
$$ [\mb H]_{ij}= \begin{pmatrix}
    \mb K_{ij}, \mb \Delta_{ij}\\
    \mb \Delta_{ij}^{*},\mb K_{ij}^{*}
\end{pmatrix}.$$ 
Additionally, $\mathbf{H} = \bm{\tau}_1 \mathbf{H}^T \bm{\tau}_1$ in terms of the Nambu-space Pauli matrix $\bm{\tau}_1 \equiv \mathds{1}_N\otimes \bm{\sigma}_1$, with $\mathds{1}_N$ the $N\times N$ identity matrix and $\bm{\sigma}_1$ the usual Pauli matrix. By defining $\bm{\tau}_2$ and $\bm{\tau}_3$ in an analogous way, we have 
$$\Phi^\dag = (\bm{\tau}_1\Phi)^T,\; [\Phi_i,\Phi_j] = (i\bm{\tau}_2)_{ij}1_F, \; [\Phi_i,\Phi_j^\dag] = (\bm{\tau}_3)_{ij} \mathds{1}_F.$$
In this Nambu formalism, the Lindblad operators that enter the dissipative part of the evolution may be expressed as $L_\alpha=\sum_{j=1}^d l_{j\alpha}\Phi_j$. 
It follows that
\begin{align*}
\mathcal{D}(\rho) = \sum_{jk}\mb M_{jk} \Big( \Phi_k \rho \Phi_j^\dag -\frac{1}{2}\{\Phi_j^{\dag}\Phi_k,\rho\}\Big),
\end{align*}
with the $2N\times 2N$ GKLS matrix $\mathbf{M}$ defined as before.

As it turns out, the Heisenberg picture proves most convenient for our purposes. The associated equations of motion (EOMs) for the creation and annihilation operators are
\begin{equation}
\label{linear eqn of motion}
\dot{\Phi}(t)=\mathcal{L}^{\star}(\Phi(t))=-i\,\mb G \Phi(t),
\end{equation}
where  $\mb G$ is the \textit{dynamical matrix} given by 
\begin{equation}
\label{dynamical matrix}
\mb G =\boldsymbol{\tau}_3\mb H-\frac{i}{2}\boldsymbol{\tau}_3\left(\mb M-\boldsymbol{\tau}_1\mb M^{T}\boldsymbol{\tau}_1\right).
\end{equation}
Generically, $\mathbf{G}$ is NH and, in fact, \textit{non-normal}. That is, $\mb G\mb G^{\dag}\neq \mb G^{\dag}\mb G$. As such, it need not be diagonalizable and has generally complex eigenvalues. However, it satisfies the ``charge conjugation'' symmetry property $\mb G=-\boldsymbol{\tau}_1\mb G^*\boldsymbol{\tau}_1$, and hence its spectrum, denoted by $\sigma(\mathbf{G})$, obeys $\sigma(\mb G)=-\sigma(\mb G)^*$. For unitary evolution ($\mathcal{D}= \mathbf{M}=0$), the dynamical matrix for a QBH is given by $\mathbf{G} = \bm{\tau}_3 \mathbf{H}$. However, the same simplification can occur also for a non-vanishing dissipator, provided that the GKLS matrix satisfies the condition $\mathbf{M}=\bm{\tau}_1\mathbf{M}^T\bm{\tau}_1$. As one may verify, this condition implies that the semi-group generated by $\mathcal{L}$ is \textit{unital} (or {\em bistochastic}), that is, we have $\mathcal{L}(1_F)=0$ (note that $\mathcal{L}^\star (1_F)=0$ always holds, given the trace-preserving property of the generator) \footnote{Note that this condition is sufficient, but not necessary. One may verify that a necessary and sufficient condition for ${\cal L}$ to be unital is that $\text{Tr}(\boldsymbol{\tau}_3 (\mb M +\mb M^{T})) =0$.}.
In this case, in addition to the symmetry property described above, $\mathbf{G}$ obeys a \textit{pseudo-Hermitian} constraint, $\mathbf{G}=\bm{\tau}_3 \mathbf{G}^\dag \bm{\tau}_3$. Altogether, $\sigma(\mathbf{G})$ then enjoys a fourfold symmetry about both the real and imaginary axes. 

It is tempting to conclude that, armed with the dynamics of $\Phi(t)$, we can construct the dynamics of arbitrary observables, algebraically built from creation and annihilation operators. Indeed, without dissipation ($\mathcal{D}=0$), this is the case. However, in the presence of dissipation, the Heisenberg dynamics of a product $(B_1B_2)(t)$ need not be equivalent to the product of the individual Heisenberg dynamics $B_1(t)B_2(t)$. In particular, for QBLs and by letting $Q\equiv \Phi\Phi^{\dag}$, the EOM for quadratic forms is given by
\begin{equation}
\label{quadratic eqn of motion}
\dot{Q}(t)=\mathcal{L}^{\star}(Q(t))=
-i \big(\mb G Q(t)-Q(t)\mb G^{\dag}\big)+\boldsymbol{\tau}_3\mb M \boldsymbol{\tau}_3 1_F.
 \end{equation}
While complete information about the dynamics resulting from arbitrary initial conditions requires knowledge of the evolution of arbitrary high-degree operators, solving Eqs.\,\eqref{linear eqn of motion} and \eqref{quadratic eqn of motion} suffices to completely determine the dynamics generated from {\em Gaussian initial states}, since their Gaussian character is preserved \cite{GenoniGaussian}. Furthermore, a significant amount of information about the stability of a QBL can be inferred from its dynamical matrix, as we discuss next.

\subsection{Notions of stability for QBLs}
\subsubsection{Dynamical stability}

Since, for bosonic systems, the underlying Fock space is infinite-dimensional, the expectation values of observables of interest can, in principle, be unbounded. Hence, the potential exists for dynamical instabilities to arise. We say that a QBL (or QBH, in the unitary case) is \textit{dynamically stable} whenever it generates bounded evolution of all observable expectation values for arbitrary (normalizable) states, and \textit{dynamically unstable} otherwise. While a finite-dimensional Lindbladian always possesses at least one steady state (SS), and the structure of its fixed points is well-characterized (see e.g., \cite{PeterFixedPoints}), the SS manifold can be empty in infinite dimension. For a QBL, it is known that it is dynamically stable if and only if it admits a SS, that is, a state $\rho_\text{ss}$ satisfying $\mathcal{L}(\rho_\text{ss})=0$. Further, the existence and number of SSs are nearly completely determined by the dynamical matrix through the {\em rapidity spectrum}, $\sigma (-i \mb G)$ \cite{Prosen3QBoson}. Let the \textit{stability gap} of the QBL be defined as \cite{PostBosoranas}:
\begin{align}
\Delta_S\equiv \text{max}\,\text{Re}[\sigma (-i \mb G)].
\label{stabgap}
\end{align}
It follows that for $\Delta_S<0$, a unique SS exists and the QBL is dynamically stable. In this case, asymptotic relaxation to the SS is characterized in terms of the \textit{Lindblad gap} (also known as the spectral or dissipative gap): 
$$\Delta_{\mathcal{L}} \equiv  |\sup\text{Re}[\sigma(\mathcal{L})\setminus\{0\}]|. $$
Quantitatively, the worst-case distance from the SS is bounded exponentially in time according to:
\begin{align}
\label{expbound}
\sup_{\rho(0)} \,\norm{\rho(t) - \rho_\text{ss}}_\text{tr}\leq K e^{-\Delta_{\mathcal{L}}t},\quad K>0,
\end{align}
where the trace norm is $\norm{A}_\text{tr} \equiv 
\tr[\sqrt{A^\dag A}]$.
As it turns out, $\Delta_{\mathcal{L}} = |\Delta_S| = -\Delta_S$ when the SS is unique \cite{Prosen3QBoson}. On the contrary, if $\Delta_S>0$, the system possesses no SS and is thus dynamically unstable. Finally, the marginal case $\Delta_S=0$ can feature either infinitely many or zero SSs, with the former (latter) case being dynamically stable (unstable). These results are summarized in Table\,\ref{t: stability and steady state}. 

While the above considerations apply to arbitrary QBLs, additional conclusions can be made for the special case where the dynamical matrix in Eq.\,\eqref{dynamical matrix} attains the simple form $\mathbf{G} = \bm{\tau}_3 \mathbf{H}$. As we discussed, the latter can equivalently describe purely Hamiltonian, unitary dynamics or (a class of) non-unitary but unital dynamics. Either way, the effective decoupling of arbitrary linear operators from the dissipation engenders a constraint on the stability gap. As remarked, $\mathbf{G}$ is pseudo-Hermitian in this case, implying that its rapidity spectrum enjoys the same fourfold symmetry the spectrum does. It follows that $\Delta_S\geq 0$ for this subclass of QBLs. This is not only consistent with the fact that any unitary or unital dynamics is generically rich in SSs, but it grants us further tools for assessing dynamical stability in the marginal case $\Delta_S=0$. As shown in \cite{Decon}, a system described by such a pseudo-Hermitian dynamical matrix $\mathbf{G}$ is dynamically stable if and only if $\Delta_S=0$ {\em and} $\mathbf{G}$ is diagonalizable. If there exists a point in parameter space where $\mathbf{G}$ loses diagonalizability -- a so-called \textit{exceptional point} \textit{(EP)} -- then the normal mode corresponding to the associated generalized eigenvector will diverge polynomially in time. Notably, the existence of an EP implies that $\mathbf{G}$ is non-normal, as otherwise it could be unitarily diagonalized. The extreme nature of this non-normality may be appreciated by noting that the emergence of an EP requires that (at least) two eigenvectors coalesce, implying maximal overlap while, in the normal case, distinct eigenvectors always have vanishing overlap. 

\begin{table}[t]
\begin{center}
\begin{tabular}{ ||c| c| c ||}
\hline
Stability gap & Dynamical stability & Number of SSs \\ 
\hline\hline
 $\Delta_S<0$ & Stable & One \\ 
 \hline
$ \Delta_S=0$& Stable/Unstable & Zero or Infinite \\  
\hline
 $\Delta_S>0$& Unstable &   Zero\\
 \hline
\end{tabular}
\caption{The relationships between the stability gap $\Delta_S$ [Eq.\,\eqref{stabgap}], the dynamical stability, and the number of SSs of a QBL.}
\label{t: stability and steady state}
\end{center}
\end{table}

Even when the system described by $\mathbf{G}$ is dynamically stable, it may be susceptible to instabilities resulting from \textit{arbitrarily weak} perturbations. Such a scenario corresponds to the occurrence of a so-called \textit{Krein collision} (KC) in the dynamical matrix spectrum \cite{Decon}. Recall that a KC is said to occur at an eigenvalue $\omega\in\sigma(\mathbf{G})$ if there exist corresponding eigenvectors $\vec{\psi}_\pm$ with opposite Krein signature, that is, $\,\vec{\psi}_{\pm}^\dag \bm{\tau}_3\vec{\psi}_\pm = \pm 1$. For a QBH, one may check that this ensures the existence of a pair of equal and opposite excitation energies.
Altogether, instabilities of a QBH or a unital QBL are signaled by either non-real eigenvalues, in which case $\Delta_S>0$, or by spectral degeneracies of either a EP or KC type, when $\Delta_S=0$.

\subsubsection{Thermodynamic stability}

A distinct notion of stability exists in the unitary context of QBHs, namely, \textit{thermodynamic stability}. A QBH $H$ is thermodynamically stable if the energy expectation value $\braket{H}$ is bounded from either below or above (in the case where $\braket{H}$ is bounded from above, one may simply work with the Hamiltonian $H'\equiv -H$, which is bounded from below). This requires the existence of a many-body ground state and quasi-particle excitation energies that are either all nonnegative or nonpositive, respectively. Mathematically, one may assess thermodynamic stability based on whether or not $\mb H$ is positive (negative)-semidefinite \cite{Decon}.

A thermodynamical instability (sometime also referred to as a ``Landau instability'' \cite{UedaBEC}) may arise due to a dynamical instability -- a notable example being a squeezing QBH $H\propto a^\dag{}^2 + a^2$, which is both dynamically and thermodynamically unstable; or it can arise due the simultaneous emergence of a positive and negative energy excitation, i.e., a KC at a nonzero energy -- for example, in QBHs of the form $H\propto a^\dag a - b^\dag b$, which arise in certain cavity-QED contexts \cite{WiersigQED,Decon}. Notably, however, dynamically unstable systems need \emph{not} be thermodynamically unstable. The simplest such example is a free particle QBH, $H=p^2/2m$, which exhibits a linear-time dynamical instability $\braket{x(t)} = \braket{p(0)}t/m + \braket{x(0)}$,  despite being bounded from below. We note in passing that, in the context of QBLs, it is tempting to consider a notion of a QBL exhibiting a form of thermodynamical stability, if it admits a Gibbs state of a QBH as its (unique) SS. While a characterization of such semigroups is available \cite{Toscano}, we do not elaborate further on this here.

\subsection{One-dimensional bulk-translationally invariant systems}
\label{Toeplitz}

We now specifically focus on QBLs that are defined on a 1D lattice and enjoy {\em bulk-translation symmetry}, that is, whose dynamics are invariant under discrete translations up to BCs. More formally, if $S$ denotes any (unitary) discrete-translation operator, we demand it to be a {\em weak symmetry} of the dynamics \cite{VictorSymCQ,PostBosoranas}, in the sense that the corresponding superoperator commutes with the QBL, $[ {\cal L}, {\cal S}]=0$, with ${\cal S}(\rho)\equiv S \rho S^{-1}$. This will allow us to explore the dynamical consequences of changing the BCs while keeping the bulk invariant.

Consistent with the above requirement, we assume that all coherent and incoherent couplings depend only on the relative separation between sites (subject to BCs). Furthermore, we will assume that all couplings are of {\em finite range}, that is, the coupling between sites $j$ and $j+r$ vanishes for $r\geq R$, for some $R >0$. As such, BCs are encoded into modifications to the couplings between modes within $R$ sites of the boundary. Four main types of BCs play an important role in our analysis (see Fig.\,\ref{fig:BCs} for an illustration\vspace*{-1.5mm}):
\begin{itemize}
    \item {\em Open BCs (OBCs)}: A finite chain of lattice sites with a hard-wall boundary on each \vspace*{-1.5mm}side. 
    \item {\em Periodic BCs (PBCs)}: A finite number of lattice sites arranged on a ring, with no boundaries\vspace*{-1.5mm}.
    \item {\em Semi-infinite BCs (SIBCs)}: A pair of disjoint chains, each 
    extending infinitely in one direction, to the left or the right, with a boundary on the \vspace*{-1.5mm}other end.
    \item {\em Bi-infinite BCs (BIBCs)}: A chain extending infinitely in both directions, with no \vspace*{-1.5mm}boundaries.
\end{itemize}

\begin{figure}[t]
\centering
\includegraphics[width=0.9\columnwidth]{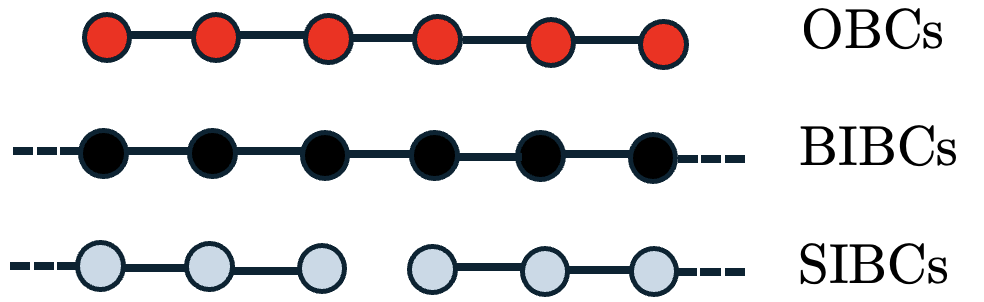}
\vspace*{-2mm}
\caption{Pictorial representation of boundary conditions of interest.}
\label{fig:BCs}
\end{figure}

The Hamiltonian and the dissipative contributions to a bulk-translationally invariant QBL then take the form:
\begin{align*}
    H&=\frac{1}{2}\sum_j\sum_{r=-R}^R\phi_j^{\dag}\mb h_r\phi_j, \\\
    \mathcal{D}^{\star}(A)&=\sum_j\sum_{r=-R}^R\Big(\phi_j^{\dag}A\,\mb m_r\phi_{j+r}-\frac{1}{2}\{\phi_j^{\dag}\mb m_r\phi_{j+r},A\}\Big),
\end{align*}
where $\phi_j \equiv [a_j,a_j^{\dag}]^T$ is the local Nambu array for site $j$, and $\mb h_r, \mb m_r$ are $2\times 2$ matrices that encode, respectively, coherent and incoherent couplings between sites $j,j+r$. Furthermore, 
$$\mb h_r^{\dag}=\mb h_{-r}, \quad \mb m_r^{\dag}=\mb m_{-r}, \quad \mb h_r=\boldsymbol{\tau}_1\mb h_r^{*}\boldsymbol{\tau}_1.$$ 
Imposing BCs is accomplished in two steps. First, restrict the sum over $j$ in the above expressions to include only physical lattice sites. For example, for a finite chain under OBCs or PBCs we have $j=1,\ldots,N$, while for SIBCs and BIBCs we take $j=1,2,\ldots$ and $j=0,\pm 1,\ldots$, respectively. Second, if imposing OBCs or SIBCs, set $\phi_{j+r} = 0$ if $j+r$ lies outside of the lattice label range. If imposing PBCs, take instead $\phi_{j+r} = \phi_{|N-(j+r)|}$ so that, for example, $\phi_{N+r} = \phi_{r}$.

From here, the relevant dynamical matrix is found to be:
\begin{align}
    \mb g_r\nonumber&=\boldsymbol{\tau}_3\mb h_r-\frac{i}{2}\boldsymbol{\tau}_3(\mb m_r-\boldsymbol{\tau}_1\mb m_r^{*}\boldsymbol{\tau}_1) =-\boldsymbol{\tau}_1\mb g_r^{*}\boldsymbol{\tau}_1 ,\\
    \mb G&=\mathds{1}_N\otimes \mb g_0+\sum_{r=1}^{R}\left(\mb S^r\otimes \mb g_r+\mb S^{\dag r}\otimes \mb g_{-r}\right).
    \label{dynmat}
\end{align}
Denoting $\vec{e}_j$ as the $j$-th canonical basis vector in ${\mathbb C}^N$, the BCs are then imposed through an operator $\mb S$ specified as follows:
\begin{align}
\label{Sshift}
    \mb S&\equiv \left\{
                \begin{array}{llll}
                \mb T_N &\!\!\equiv & \sum\limits_{j=1}^{N-1}\vec{e}_j\vec{e}_{j+1}^{\,\dag} & \text{OBCs,}\\ 
               { \mb T}&\!\!\equiv &\sum\limits_{j=-\infty,\neq 0}^{\infty}\vec{e}_j\vec{e}_{j+1}^{\dag} &\text{SIBCs,}\\
\mb V_N&\!\!\equiv & \sum\limits_{j=1}^{N-1} (\vec{e}_j\vec{e}_{j+1}^{\,\dag}+ \vec{e}_N\vec{e}_{1}^{\,\dag}) & \text{PBCs,}\\
\mb V&\!\!\equiv &\sum\limits_{j=-\inf}^{\inf}\vec{e}_j\vec{e}_{j+1}^{\,\dag} & \text{BIBCs.}
                \end{array}
              \right.
\end{align}
We will drop the subscript $N$ when the context is clear. 

For a translationally invariant system under PBCs, or BIBCs, the dynamical matrix in Eq.\,\eqref{dynmat} takes a \textit{banded block-circulant} matrix, or a \textit{banded block-Laurent} operator form, respectively. Here, the ``banded'' qualifier is a result of considering finite range couplings. The rapidities of such a system can be obtained through a block-diagonalization of the translational symmetry, by moving to momentum (Fourier) space. Focusing on the case of BIBCs for simplicity, let 
$$b_k\equiv \sum\limits_{j\in \mathbb{Z}} e^{-ikj}a_j, \quad k\in [-\pi,\pi],$$ 
with $k$ being the conserved crystal momentum in the  Brillouin zone. The dynamics of the resulting Fourier modes $\widetilde{\phi}_k \equiv [b_k, b_{-k}^{\dag}]^T$ are then governed by the \textit{Bloch dynamical matrix}: 
\begin{align}
 \dot{  \widetilde{\phi}}_k(t) = -i\mathbf{g}(k)\widetilde{\phi}_k(t),\quad  \mb g(k)\equiv\sum\limits_{r=-R}^R\mb g_re^{ikr}.
\end{align}
The rapidities of the system under BIBCs are thus given by the \textit{rapidity bands} $\lambda_n(k) \in \sigma(-i\mathbf{g}(k))$, with $n$ being the band index. For PBCs, the argument follows an identical route, except for the fact that the continuous Brillouin zone is replaced with an appropriate discrete subset of $N$ crystal momenta $k_m$. The full set of PBC rapidities are then $\{\lambda_n(k_m)\}$. Thus, the PBC rapidities converge to the BIBC rapidities as $N\rightarrow\inf$, in which case the set $\{k_m\}$ becomes dense in $[-\pi,\pi]$. Importantly, we will see that such a convergence property need {\em not} hold for systems under OBCs or SIBCs. 

Imposing hard-wall boundaries greatly complicates the problem of computing rapidities. Mathematically, the dynamical matrix of a single-boundary system 
(SIBCs) corresponds to a direct sum of two \textit{banded block-Toeplitz operators}, one acting on a semi-infinite system with a left and the other on the one with a right boundary (see Appendix \ref{app: defineSIBCs}). Thankfully, the spectrum of this class of operators has been completely characterized in terms of the corresponding Laurent operator (i.e., the BIBC dynamical matrix) and its \textit{symbol}, 
\begin{equation}
\mb g(z) \equiv \sum_{r=-R}^R \mathbf{g}_r z^r, \quad z\in{\mathbb C}. 
\label{Sym}
\end{equation}
Notably, the restriction of the symbol to the unit circle corresponds to the Bloch dynamical matrix \cite{BottcherToe}. Applying these results to our context, it follows that the rapidity spectrum of the SIBC system is given by the rapidity spectrum of the corresponding BIBC system, together with all points $\lambda$ such that $-i\mb{g}(z) - \lambda \mathds{1}_2$ has nonzero \textit{partial indices}. While a complete technical account of partial indices is beyond the scope of this paper, it is
appropriate to think of them as a suitable generalization of winding numbers of the rapidity bands $\lambda_n(k)$ about the point $\lambda$ \cite{PostBosoranas}, for operators with a block structure. Partial indices may be computed through the use of Wiener-Hopf matrix factorization techniques, originally developed for 1D fermionic Hamiltonians in \cite{WienerHopf} (see Appendix \ref{app:WH}). In Appendix \ref{App: WHApplication}, we provide an explicit application of these methods to QBLs of interest.
 
As for a system under OBCs, the corresponding dynamical matrix $\mb G_N$ is a \textit{banded block-Toeplitz matrix}. Unlike the previous three BCs, there is no simple prescription one can follow to characterize its rapidities. 
At variance with PBCs, in general we may now expect that 
\begin{equation}
\sigma(\mb G^{\text{OBC}}_N)\nsubseteq \sigma(\mb G^{\text{SIBC}})\equiv \sigma(\mb G_{\inf}).
\label{specdis0}
\end{equation}
Most surprisingly, the spectrum of a system under OBCs \textit{need not converge} to that of the semi-infinite system in general \footnote{In this work, the limit of a sequence of sets of complex numbers is taken to be the \textit{uniform limiting set} (see Chap. 3.5 of Ref.\,\cite{BottcherToe}). Formally, $\lambda$ is in the uniform limiting set of a sequence of sets of complex numbers $\{A_n\}$ if there is a sequence $\{\lambda_n\}$, with $\lambda_n\in A_n$ for all $n$ and $\lambda_n\to \lambda$.}:
\begin{align}
\label{specdis}
\lim_{N\to\infty}\sigma(\mb G^{\text{OBC}}_N)\neq \sigma(\mb G_{\inf}),
\end{align}
This so-called \textit{spectral disagreement}, which represents the degree to which a finite-size system truncation fails to display the characteristics of the infinite-size system (or, vice-versa, the degree to which the infinite-size idealization fails to predict characteristics of the realistic finite-size system), is known to be a consequence of extreme non-normality \cite{TrefethenPS,PostBosoranas}. Notwithstanding, useful connections between finite and infinite-size systems may be established if we instead consider a suitably generalized notion of the spectrum. 

\subsection{The pseudospectrum}
\label{pseudospectra}

Due to their extreme sensitivity to perturbations, non-normal operators are better characterized by their {\em pseudospectra}, rather than spectra \cite{TrefethenPS}. Simply put, the pseudospectrum of a matrix or operator is its ``approximate spectrum''. However, for non-normal matrices, the points in the pseudospectrum are not necessarily ``close" to the points in the spectrum. Mathematically, the $\epsilon$-pseudospectrum of a matrix $\mb A$ is defined as:
\begin{align}
\sigma_{\epsilon}(\mb A) \equiv \{z \in \mathbb{C}: \norm{ (\mb A-z  \mathds{1})^{-1}}  > \epsilon^{-1}\}.
\label{pseudospectra1}
\end{align}
We will work with $\norm{ \cdot}=\norm{\cdot}_2$, the induced matrix $2$-norm, however, unless otherwise specified, all results hold for other induced norms. While for normal matrices the \textit{resolvent norm} $\norm{(\mb A-z \mathds{1})^{-1}}$ is only ``large" when $z$ is close to some $\lambda\in\sigma(\mb A)$ (say, within an $\epsilon$-ball), this is not generically true for non-normal ones. Generally, one can show that:  
\begin{align*}
\sigma_{\epsilon}(\mb A) \supseteq \set{\lambda : |\lambda-\lambda_0|<\epsilon,\,\lambda_0\in\sigma(\mathbf{A})}.
\end{align*}
For the specific choice of a matrix $2$-norm, the inclusion becomes an inequality when $\mb A$ is normal. However, for a highly non-normal $\mb A$, $\sigma_{\epsilon}(\mb A)$ is no longer just the union of open balls of radius $\epsilon$ around each  point in the spectrum. Thus, arbitrarily small perturbations can drastically affect the spectrum of a non-normal matrix. To understand why, an equivalent definition of the pseudospectrum to the one given in Eq.\,\eqref{pseudospectra1} is useful. Namely,
\begin{equation}
\label{pseudospectra2}
\sigma_{\epsilon}(\mb A) =\{ z \in {\mathbb C} : z \in  \sigma (\mb A+\mb E), \, \forall \norm{\mb E} < \epsilon \}.
\end{equation}
In words, every point in the $\epsilon$-pseudospectrum of $\mb A$ is an eigenvalue of some matrix obtained by perturbing $\mb A$ with a perturbation of size less than $\epsilon$. This alternative characterization of pseudospectrum also reveals its intrinsic robustness against perturbations: Eq.\,\eqref{pseudospectra2} implies that 
$$\sigma_{\epsilon}(\mathbf{A}+\mathbf{F}) \subseteq \sigma_{\epsilon + \delta}(\mathbf{A}), \quad \norm{\mathbf{F}} = \delta. $$

The pseudospectra for the classes of operators considered in the preceding subsection have been extensively studied in the mathematical literature \cite{BottcherToe}. One of the most useful results for our purposes captures the well-behaved nature of the pseudospectrum as $N\to\infty$, that is (see also Appendix \ref{app: defineSIBCs}):
\begin{align}
\label{wellbehavedps}
\underset{\epsilon\rightarrow 0}{\lim} \underset{N\rightarrow\inf}{\lim}\sigma_{\epsilon}(\mb G^{\text{OBC}}_N) 
=\sigma(\mb G_{\infty}).
\end{align}
This equality is one of the powerful tools of pseudospectral theory. In words, {\em as $N$ grows, the pseudospectrum of the OBC dynamical matrix 
approximates increasingly better the semi-infinite spectrum.} To put it another way, given $z$ in the SIBC spectrum, for large enough $N$, $z$ will be in the $\epsilon$-pseudospectra of the OBC chain, for arbitrary $\epsilon>0$. In this sense, the $\epsilon$-pseudospectra, with $\epsilon$ small, are the ``imprints'' of the exact infinite-size spectrum. 

Two remarks are in order. First, in light of the pseudospectral formulation of Eq.\,\eqref{wellbehavedps},  the generic spectral disagreement for Toeplitz operators captured by Eq.\,\eqref{specdis} can then be re-expressed as a non-commuting limit
\begin{align}
\label{non-commutative limits}
    \underset{N\rightarrow \inf}{\lim}\underset{\epsilon\rightarrow 0}{\lim}\,\sigma(\mb G^{\text{OBC}}_N)\neq  \underset{\epsilon\rightarrow 0}{\lim} \underset{N\rightarrow\inf}{\lim}\sigma_{\epsilon}(\mb G^{\text{OBC}}_N).
\end{align}
This appears strikingly similar to the non-commuting limits typically encountered in the theory of spontaneous symmetry breaking. Second, a non-rigorous physical argument can be provided to understand why the $\epsilon$-pseudospectrum does capture the infinite-size spectrum increasingly well. Suppose that $\vec{v}$ is an eigenvector of the infinite-size system corresponding to an eigenvalue $z$ with {\mbox{$|z|<1$}} (and thus, not in the BIBC spectrum). Truncating this edge-localized vector by retaining only the weights on the first $N$ sites will, generically, not provide an exact eigenvector of the finite system. Nonetheless, it is reasonable to expect that this truncation will provide an \textit{approximate eigenvector}, with a correction depending on the size of the removed tail piece (which is set by the localization length $|\log|z||^{-1}$). Since the removed tail is exponentially small as $N$ grows, the error $\epsilon$ will shrink accordingly. 

As it turns out, the notion of an approximate eigenvector introduced above is a useful conceptual tool in and of itself, and can be used to obtain yet another equivalent definition of the pseudospectrum. Namely, 
\begin{align}
   \sigma_{\epsilon}(\mathbf{A})=\{z\in \mathbb{C}: \norm{(\mb A-z\mb I)v}<\epsilon, \norm{\vec{v}}=1\}.
\end{align}
Here, $z\in \mathbb{C}$ is an \textit{$\epsilon$-pseudoeigenvalue} of $\mb A$, with $\vec{v}\in \mathbb{C}^n$ the corresponding \textit{$\epsilon$-pseudoeigenvector}. From a dynamical standpoint, for sufficiently small time-scales, pseudo-eigenvectors behave like exact normal modes \cite{TrefethenPS,PostBosoranas}. This means that while the spectrum governs the long-time dynamics, the pseudospectrum is responsible for the short-time transient dynamics. In highly non-normal cases, this is known to engender sharply different transient and asymptotic dynamics \cite{TrefethenPS}. In the following section, we turn to characterizing this phenomenon, which we term dynamical metastability, using tools and insights of pseudospectral theory.

\section{Dynamical metastability and the imprint of the infinite-size limit}
\label{sec:dynmeta}

\subsection{The landscape of possibilities}

Let us summarize the four key facts identified so far:

\smallskip

(1) The dynamical stability of a QBL is largely controlled by the spectral structure of its dynamical matrix.

(2) The dynamical matrix of a 1D bulk-translationally invariant QBL with open boundaries or full translation invariance is, for finite $N$, either a block-Toeplitz or block-circulant matrix, respectively. For an infinite number of sites, $N\to \infty$, it is a block-Toeplitz or a block-Laurent operator, respectively.

(3) The spectra of these well-investigated matrices can depend strongly on the system size and fail to converge to those of their infinite-size counterparts; the pseudospectrum provides a more reliable, well-behaved tool for bridging matrices ($N$ finite) to operators ($N$ infinite).

(4) The pseudospectrum  plays an essential role in determining the transient behavior of a non-normal dynamical system. 

\smallskip

Altogether, the above observations lead us to introduce a dynamical classification scheme for 1D bulk-translation invariant QBLs based on the notions of {\bf (i)} dynamical-stability disagreement and {\bf (ii)} stability gap discontinuity -- between finite and infinite-size systems (see Table \ref{t: classes }).

\begin{table}[t]
\begin{center}
\begin{tabular}{ ||c| c| c ||}
\hline
QBL Class & \begin{tabular}{c} Dynamical Stability\vspace*{-1mm} \\
Disagreement \end{tabular} 
& \begin{tabular}{c} Stability Gap\vspace*{-1mm} \\ 
Discontinuity \end{tabular}  \\
\hline\hline
 Type I DM & \ding{51} & \ding{51}  \\ 
 \hline
 Type II DM & \ding{51} & \ding{55}  \\ 
\hline
 Anomalously relaxing & \ding{55} & \ding{51}  \\ 
 \hline
 Well-behaved & \ding{55} & \ding{55}  \\ 
  \hline
\end{tabular}
\caption{Classification of bulk-translationally invariant QBLs with boundaries, based on the notions of dynamical and spectral disagreement between finite vs. infinite size. Systems with dynamical disagreement exhibit a stability phase transition either as $N\to\infty$, or for $N > N_c$. For systems with a discontinuous (as $N\rightarrow \infty$) stability gap, the finite-size spectrum does not converge to that of the corresponding infinite-size system with growing system size. Note that, in principle, it is possible for the full spectrum to display a discontinuity in the infinite-size limit while maintaining continuity of the stability gap. We leave explorations of this case to future work.
}
\label{t: classes }
\end{center}
\end{table}

To be more concrete, given a fixed set of (coherent and incoherent) finite-range couplings, one may consider two families of QBLs. The translationally-invariant family consists of the sequence of QBLs defined on finite rings of arbitrary size $N$ (PBCs), in addition to the bi-infinite system (BIBC). The open-boundary family consists of a sequence of QBLs defined on finite chains of arbitrary size $N$ (OBCs), in addition to the semi-infinite system with only one boundary (SIBC). We say that a sequence, in system size, of QBLs exhibits: 

\smallskip

{\bf (i)} A \textit{dynamical-stability disagreement} if the dynamical stability phases of the finite- (perhaps up to a ``critical size'' $N_c$) and infinite-size members of the family differ, with all other parameters fixed; 

{\bf (ii)} A \textit{stability gap discontinuity} if the stability gap of the infinite-size system is different from the limit, as $N\to\infty$, of the finite-size stability gaps. 

\smallskip 

In light of these concepts, the translationally-invariant and open boundary families of QBLs are drastically different. On the one hand, the members of the translationally-invariant family cannot exhibit dynamical disagreement nor a discontinuity of the stability gap. This follows from the aforementioned property of the PBC spectrum being a strict, but increasingly dense, subset of the BIBC spectrum. Such translationally-invariant dynamical matrices are unitarily block-diagonalizable (with the block-size being independent of $N$), through the Fourier transform. Thus, for PBCs, effects of non-normality cannot be exacerbated by increasing system size.  On the other hand, the open boundary QBL family splits into four distinct classes based on the presence or absence of dynamical-stability disagreement and the (dis-)continuity of the stability gap with system-size. For reasons we shall momentarily explain, we describe systems exhibiting dynamical-stability disagreement as \textit{dynamically metastable} (DM) and further subdivide them into types I and II, depending on whether there is a stability gap discontinuity or not, respectively. For non-DM families, we dub those with a stability gap discontinuity \textit{anomalously relaxing} and those without \textit{well-behaved}. The resulting dynamical classification of QBLs is summarized in Table \ref{t: classes }. 

Before providing physical realizations of these possible scenarios, it is worth reiterating how they arise mathematically. The properties of block-Toeplitz matrices prevent one from establishing any precise relationships between the finite-OBC stability gap of a given QBL, $\Delta_{S,N}^{\text{OBC}}$, and its infinite-size counterpart, $\Delta_{S}^{\text{SIBC}}$. That said, we can prove the following limiting bound (Appendix \ref{BoundProof}):
\begin{align}
\label{gap constraint}
\underset{N\rightarrow\inf}{\lim}\Delta_{S,N}^{\text{OBC}} \equiv \Delta_{S,\infty}^{\text{OBC}} \leq \Delta_{S}^{\text{SIBC}}.
\end{align} 
The fact that this bound is {\em not} always an equality is a manifestation of the spectral discontinuities characteristic of highly non-normal Toeplitz matrices. The  strict inequality is the defining feature of a stability gap discontinuity. In addition, the inequality Eq.\,\eqref{gap constraint} provides further insight into the case of a dynamically stable infinite system, for which $\Delta_S^{\text{SIBC}}< 0$. It follows that the finite system must become more and more stable as $N$ grows (if it is unstable at any point to begin with), as the finite gaps must eventually land on, or below, $\Delta_S^{\text{SIBC}}$.

We conclude with two remarks. First, the relationship between the spectrum of a block-Laurent operator and its block-Toeplitz counterpart implies that the stability gap of a semi-infinite system is always bounded below by that of the associated bi-infinite (that is, bulk) system, $\Delta_{S}^{\text{SIBC}}\geq \Delta_{S}^{\text{BIBC}}$. Hence, inserting a single boundary is not sufficient to stabilize an unstable bi-infinite system. Nonetheless, it is not forbidden for a bulk-stable system to become dynamically unstable upon imposition of a single boundary; see \cite{BarnettEdgeInstab} and Appendix \ref{app: Boundary-dependent stability} for more details. 
Second, the way in which $\Delta_{S,N}^{\text{OBC}}$ converges with growing \(N\) need {\em not} be monotonic. In particular, $\Delta_{S,N}^{\text{OBC}}$ may approach its limiting value $\Delta_{S,\infty}^{\text{OBC}}$ from below or above, or it can encircle it. These behaviors lead to dynamical disagreement {\em without} a stability gap discontinuity.

\begin{figure}[t]
\centering
\includegraphics[width=.42\textwidth]{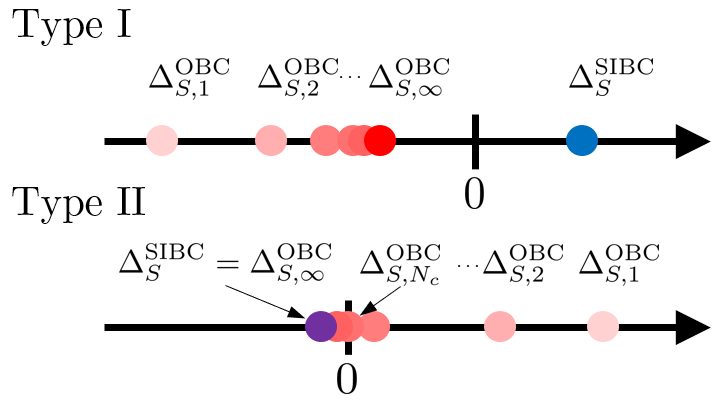}
\vspace*{-2mm}
\caption{Relationship between finite and infinite-size stability gaps in type I and type II dynamically metastable QBLs. }
\label{fig:stability gap comparison}
\end{figure}

\subsection{Type I dynamical metastability }

Type I DM and anomalously relaxing systems have been discussed in detail in Refs.\,\cite{Bosoranas, PostBosoranas}. We recount the basic physical and mathematical arguments here for completeness, and refer the reader to the above papers for additional discussion and connections to related phenomena in the literature -- notably, topological amplification \cite{PorrasTopoAmp,PorrasIO,NunnenkampTopoAmp,NunnenkampDisorder,NunnenkampResoration} 
as well as anomalous relaxation in systems exhibiting the Liouvillian NHSE \cite{UedaSkin,MoriSlow,Bergholtz,WangSkin}. 

A generic example of type I dynamical metastability is an open-boundary family of QBLs with the following properties (see Fig.\,\ref{fig:stability gap comparison}(top) for a pictorial representation):
\begin{itemize}
    \item $\Delta_{S,N}^{\text{OBC}}<0$, for all $N$; and
    \item  $\Delta_{S,\infty}^{\text{OBC}}<0$, 
    but $\Delta_S^{\text{SIBC}}>0$. 
\end{itemize}
Concrete models will be presented later in Sec.\,\ref{illmod}. The discontinuous jump from a strictly negative stability gap for finite $N$ to a strictly positive one for the infinite system indicates the spontaneous loss of dynamical stability in the infinite-size limit. Such families break several intuitions about the relationship between finite- and infinite-size systems. Each finite chain possesses a unique, globally attractive SS, while the infinite-size limit lacks one altogether. For finite size, the asymptotic relaxation to the SS happens at a rate set by the finite-size dissipative gap, $\Delta_{\mathcal{L},N}^{\text{OBC}} = |\Delta_{S,N}^\text{OBC}|$ which, by assumption, converges to a finite nonzero value, $|\Delta_{S,\infty}^{\text{OBC}}|$. From this, one might expect that the infinite-size system \textit{should} display asymptotic relaxation to some SS set by this rate. 

The apparent contradiction is resolved by noting that the dissipative gap alone does {\em not} predict the global character of the dynamics,  
combined with the fact that the infinite-dimensional nature of the systems opens the door to dynamical instabilities. Consistent with the 
presence of a constant prefactor $K$ in Eq.\,\eqref{expbound}, the dissipative gap yields no information on the 
time at which the asymptotic relaxation sets in. We may then infer that, as system size grows, type I DM systems exhibit an increasingly long transient timescale during which the general character of the dynamics may differ significantly from the relaxing asymptotic one. Under this assumption, the true infinite-size limit displays only those features that emerge in the finite-size transient regime. This fact reflects the necessary relationship between bulk dynamics and transient dynamics. By causality, an excitation created in the bulk of any finite chain will take a finite time (increasing with system size) to detect the (in this case stabilizing) presence of the boundaries.  Thus, any component of a given excitation deep within the bulk will amplify, until being eventually suppressed. 

The above picture can be made more quantitative with the tools of pseudospectral analysis. The existence of a stability gap discontinuity demands that there are 
eigenvalues of the SIBC system that are neither in the finite-size spectra for any $N$ nor are well-approximated by any sequence of finite-size eigenvalues. In particular, there exists a rapidity $\lambda\in\sigma(-i\mathbf{G}_\infty)$, with $\text{Re}[\lambda] = \Delta_{S}^\text{SIBC}>0$. As $N$ increases, there will necessarily be an $\epsilon_N$-pseudoeigenvector $\vec{v}_N$ of $\mathbf{G}_N^\text{OBC}$ corresponding to $\lambda$, such that $\epsilon_N$ decreases with $N$. Mathematically, we may write $\mathbf{G}_N^\text{OBC}\vec{v}_N = \lambda\vec{v}_N + \vec{w}$, with $\norm{\vec{w}}<\epsilon_N$, so that:
\begin{align*}
\norm{e^{-i\mathbf{G}_N^\text{OBC} t}\vec{v}_N-e^{\lambda t} \vec{v}_N} = \epsilon_N \left( t +\mathcal{O}(t^2)\right).
\end{align*}
Accordingly, as long as higher order terms are sufficiently small, $\vec{v}_N$ amplifies exponentially at a rate set by $\Delta_{S}^\text{SIBC}>0$, for a transient time set by $1/\epsilon_N$.  As $N\to\infty$ and \mbox{$\epsilon_N\to 0$} (such that the duration of the transient diverges), $\vec{v}_N$ approaches an amplifying normal mode. It follows that any overlap of the initial condition $\braket{\Phi(0)}$ with $\vec{v}_N$ will amplify in the transient regime.

\subsection{Type II dynamical metastability}

Let us now consider an open-boundary family of QBLs without a stability gap discontinuity. The stability gap constraint in Eq.\,\eqref{gap constraint} yields no information about the trajectory of $\Delta_{S,N}^\text{OBC}$ as it converges to $\Delta_{S,\infty}^\text{OBC}$ which, under our assumptions, equals $\Delta_S^\text{SIBC}$. It is possible for such a system to have a stable infinite-size limit, while the finite-size truncations are unstable for generic $N$, smaller than some critical (possibly infinite) size $N_c$. In this scenario (see Fig.\,\ref{fig:stability gap comparison} (bottom)), the dynamical stability phases of the finite chains and the infinite-size chain are reversed with respect to the type I scenario. 

Physically, type II DM systems require two basic features: 
\begin{itemize}
    \item bulk stability, $\Delta_S^\text{SIBC}<0$; 
    \item a destabilizing mechanism that can
    make the system unstable under OBCs, $\Delta_{S,N}^{\text{OBC}}>0$, for generic $N<N_c$.
\end{itemize}
When these prerequisites are met, type II DM systems may be thought of as exhibiting ``system-size assisted dynamical stabilization''. Finite systems of size $N<N_c$ will generically display asymptotically amplifying dynamics with a characteristic amplification rate given by 
\[  \Delta_{S,N}^\text{OBC}   \overset{N\to N_c}{\longrightarrow} 0.\]
Thus, as the system-size increases, even the most unstable normal mode (i.e., one whose rapidity has real part equal to $\Delta_{S,N}^\text{OBC}$) will amplify at an ever-decreasing rate. Moreover, as $N\to\infty$, generic excitations within the bulk will take longer and longer to detect the boundary, and hence any boundary-induced instabilities. 
Therefore, the onset of unstable asymptotic dynamics will become increasingly delayed. In this way, system size effectively becomes a parameter one may tune to achieve dynamical stability.

Pseudospectral analysis again provides uniquely effective tools for characterizing type II DM systems. The generic relationship between the infinite-size spectrum and the finite-size pseudospectrum allows us to identify the necessary emergence of transiently stable pseudomodes. Following a nearly analogous argument as for transient amplification in type I DM systems, any pseudoeigenvector corresponding to a pseudoeigenvalue present in the infinite-size spectrum must necessarily appear more and more stable as $N$ increases (and thus $\epsilon_N$ decreases). The stable, infinite-size normal modes imprint into the transient dynamics of type II DM systems, despite their overall amplifying, unstable dynamics.

To conclude the discussion of type II DM, we remark that, while extreme non-normality is a necessary ingredient for type I DM (and the soon-to-be-discussed anomalously relaxing) systems, it {\em need not} play as prominent a role in type II DM systems. In fact, to the best of our knowledge, there is no analogous mathematical characterization of type II DM systems. Nonetheless, in Sec.\,\ref{illmod} we will introduce a promising design principle for type II DM systems in the limiting case of purely Hamiltonian dynamics, built upon known facts about the stability properties of spectral degeneracies, which can emerge when a QBH loses thermodynamic stability \cite{Decon,Squaring}.

\subsection{Anomalous relaxation}

Like type I DM systems, anomalously relaxing QBLs exhibit a stability gap discontinuity. However, the gap does not
become positive in the infinite-size limit, and thus dynamical stability persists for the infinite system. That is, we have 
\[ \Delta_{S,\infty}^{\text{OBC}}\leq \Delta_{S}^{\text{SIBC}} <0.\]
The arguments used to characterize the transient dynamics of type I DM systems carry over to anomalously relaxing ones, with any mention of ``amplification" replaced with ``decay at a rate slower than any normal mode decay rate in the system." Formally, the stability gap discontinuity guarantees that there will be an $\epsilon_N$-pseudoeigenvector of the finite chain that appears to decay with a rate set by $|\Delta^\text{SIBC}_S|$ during a transient time (whose duration again scales with $1/\epsilon_N\to\infty$), until eventually becoming bounded by an exponential decay with a rate $|\Delta_{S,N}^\text{OBC}|>|\Delta_S^\text{SIBC}|$. We refer the reader to Ref.\,\cite{PostBosoranas} for a demonstration of this separation of timescales.

\section{Illustrative models}
\label{illmod}

\subsection{A dissipative model: The coupled Hatano-Nelson chain}

\begin{figure}[t]
\centering
\includegraphics[width=.9\columnwidth]{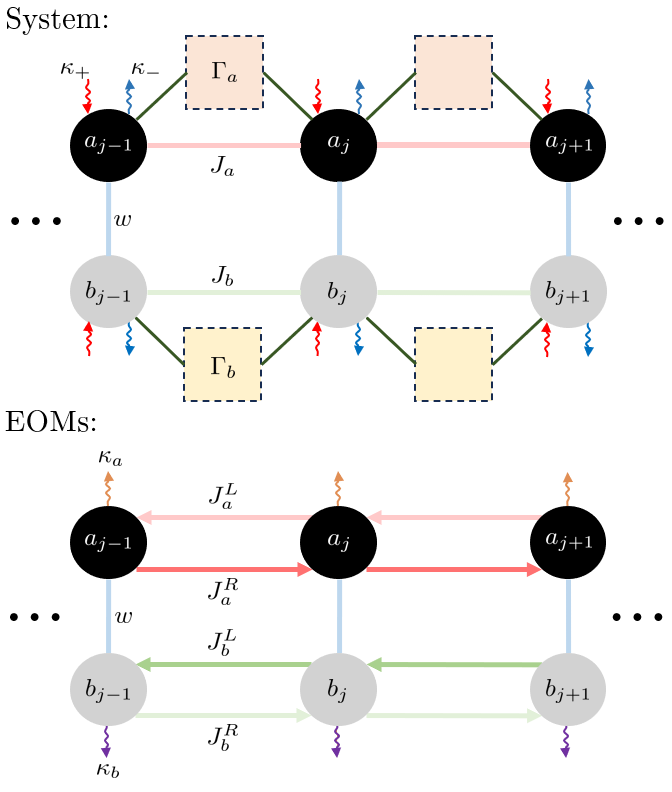}
\vspace*{-2mm}
\caption{Pictorial representation of the bosonic coupled HN chain model. {\bf Top:} The physical system under consideration. In particular, the dissipative couplings $\Gamma_a$ and $\Gamma_b$ are represented by coupling of nearest neighbors to pairwise-common baths (denoted by boxes with dashed lines). The site-local loss and gain rates are understood to stem from an ambient (say, thermal) bath not depicted. {\bf Bottom:} A schematic representation of the effective couplings as informed directly by the EOMs. In both diagrams, the ellipses represent the appropriate BC-dependent continuations. }
\label{fig:Coupled HN}
\end{figure}

In order to exemplify the above general framework and gain concrete insight, we now aim to design a model which can be driven through different dynamical metastability regimes upon appropriately tuning relevant parameters.  The guiding design principle we employ is to identify a model that unambiguously exhibits one form of dynamical metastability (say, type I) for one set of parameters and is fully dynamically stable (not DM) for another. 

The first such model we propose describes the dissipative dynamics of two coherently coupled species of bosons, denoted by $a_j$ and $b_j$ (see Fig.\,\ref{fig:Coupled HN} for a depiction). Specifically, the system Hamiltonian consists of three parts, $H \equiv H_a + H_b + H_{ab}$, where 
\begin{align*}
    H_a \nonumber&= iJ_a\sum_{j} a_{j+1}^\dag a_j -\text{H.c.}, \\
    H_b \nonumber&= iJ_b\sum_{j} b_{j+1}^\dag b_j -\text{H.c.}, \\
    H_{ab} &= iw\sum_{j} a_j^\dag b_j - \text{H.c.}, 
\end{align*}
respectively. In the above, $J_{a,b}\in\mathbb{R}$ denote coherent intra-chain hopping amplitudes, while $w\in\mathbb{R}$ is a coherent inter-chain hopping amplitude. As to the the dissipative part of the QBL, for simplicity we take each mode to undergo local single-photon loss and gain at uniform rates $\kappa_-\geq 0$ and $\kappa_+\geq 0$, respectively. In addition, we introduce an intra-chain, nearest-neighbor dissipation mechanism with rate $\Gamma_{a,b}\geq 0$ and an inter-chain phase difference $\theta\in\mathbb{R}$. Altogether, the intra-chain dissipators are 
\footnote{Imposing OBCs in this system can be done in several ways. Here, we will do so in such a way that site-local dissipative processes are unaffected while all dissipative couplings across sites $1$ and $N$ vanish. Since terms like $\sum_j \Gamma_a \mathcal{D}[a_j+a_{j+1}]$ (and similarly for $b$) mediate both a dissipative coupling of strength $\Gamma_a$ and an additional site-local loss of rate $2\Gamma_a$, the only way to accomplish this is to physically ``cleave'' 
the common bath that couples to sites $1$ and $N$ (see Fig.\,\ref{fig:Coupled HN}), such that each maintains 
a loss rate of $2\Gamma_a$ in addition to the intrinsic local loss $\kappa_a$. That is, we first symmetrize the incoherent coupling according to $\sum_{j}\mathcal{D}[a_j+a_{j+1}] =\frac{1}{2}\sum_{j}\left(\mathcal{D}[a_j+a_{j+1}]+\mathcal{D}[a_{j-1}+a_{j}]\right)$, and impose $a_{0}=a_{N+1}=0$ (similarly for the $b$-dissipator). Altogether, this process amounts to passing from a block-circulant GKLS matrix for PBCs to a block-Toeplitz one for OBCs.}:
\begin{align*}
    \mathcal{D}_a \nonumber&= 2\sum_j \big(\kappa_-\mathcal{D}[a_j] + \kappa_+\mathcal{D}[a_j^\dag]+\Gamma_a\mathcal{D}[a_j+a_{j+1}]\big) , \\
    \mathcal{D}_b &= 2\sum_j \big(\kappa_-\mathcal{D}[b_j] + \kappa_+\mathcal{D}[b_j^\dag]+\Gamma_b\mathcal{D}[b_j+e^{i\theta}b_{j+1}]\big).
\end{align*}

Note that the resulting QBL, $\mathcal{L} = -i[H,\cdot] + \mathcal{D}_a + \mathcal{D}_b$, is invariant under a global U(1) transformation of the form $(a_j,b_j)\mapsto (e^{i\phi}a_j,e^{i\phi}b_j)$, with $\phi\in\mathbb{R}$. As a consequence, the dynamics of the full Nambu array can be reduced to that of the sub-array $[a_1,b_1,\ldots,a_N,b_N]^T$. By introducing a new set of parameters, 
 \begin{align}
\kappa_{a,b} &\equiv \kappa_--\kappa_+ + 2\Gamma_{a,b}, \notag \\
J^L_{a} &\equiv J_{a}-\Gamma_{a},\quad  \;\;\;\;\;\;J^R_{a} \equiv J_{a}+\Gamma_{a}, \notag \\
J^L_{b} &\equiv J_{b}-e^{-i\theta}\Gamma_{b},\quad  J^R_{b} \equiv J_{b}+e^{i\theta}\Gamma_{b},
\label{newP}
\end{align}
we obtain the EOMs as follows:
\begin{equation} 
\left\{ \begin{array}{ll}
\dot{a}_j &=J^L_a a_{j-1}-J^{R}_a a_{j+1}+w b_j-\kappa_a a_j,\\
 \dot{b}_j&=J^L_b b_{j-1}-J^{R}_b b_{j+1}-w a_j-\kappa_b b_j. \end{array}\right.
\label{CHN}
\end{equation}
Remarkably, these EOMs are equivalent to two coherently coupled HN \cite{HNChain} (or, asymmetric hopping) chains.  We remark that our new set of parameters in Eq.\,\eqref{newP} satisfies the following constraint:
\begin{align}
 \!\!   \kappa_a-\kappa_b &= 2(\Gamma_a-\Gamma_b) = \frac{(J_b^L-J_b^R)}{\cos\theta} -(J_a^L-J_a^R).
    \label{constraint}
\end{align}
For the remainder of our analysis, we assume that the coherent couplings dominate over the incoherent ones, in the sense that $|J_x|\geq \Gamma_x,$, $x=a,b$.
Further, we will limit ourselves to $\theta=0$ and $\theta=\pi$. For reasons that will become clear, we refer to these as the {\em non-chiral} and {\em chiral} regimes, respectively.

Immediately, two limiting scenarios arise. First, in the limit $w\gg J_x^L,J_x^R$ the dynamics are effectively described by coherent hopping between the two chains. Naturally, we expect this limit to be dynamically stable, independent of system size or BCs. The opposite limit $w\to 0$, i.e., the case of two disconnected HN chains, is known to be extremely sensitive to changes in BCs, hence also to taking the infinite-size limit. While the physics of directional amplification in such 1D chains has been considered in detail (see e.g., \cite{NunnenkampTopoAmp}), it will prove useful to first apply our theory of dynamical metastability in this decoupled regime and use it as a foundation for understanding the (significantly more complicated) intermediate regime $w\sim J_x^L,J_x^R$. 

\subsubsection{The decoupled limit $w=0$}

\begin{figure}[t]
 \centering
\includegraphics[width=.9\columnwidth]{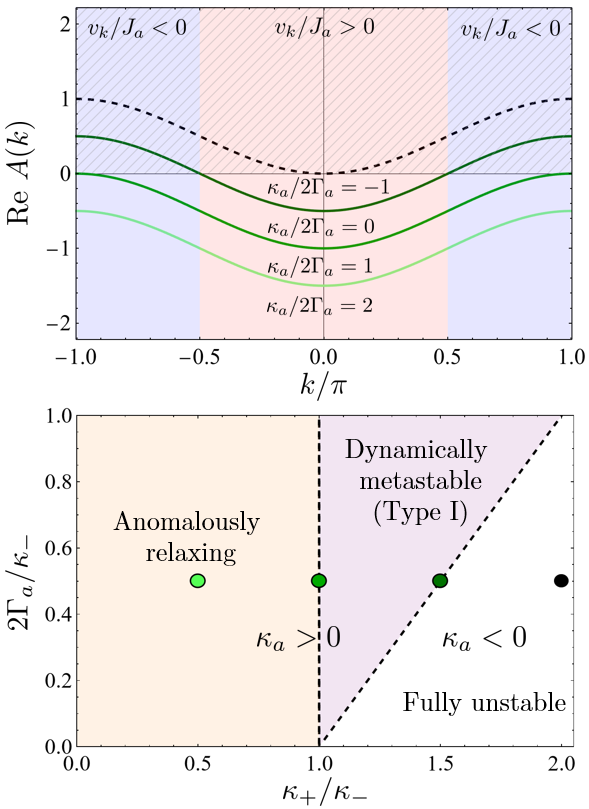}
\vspace*{-3mm}
\caption{{\bf Top:} 
Directional amplification in the HN chain in the DM phase. Plotted are the curves $\text{Re} A(k)$, which determine the damping or growth rate, depending on the sign [Eq.\,\eqref{wpvel}]. Unstable bulk modes correspond to $\text{Re} A(k)>0$ (hatched region). Imposed on the figure are the intervals of $k$ where wavepackets evolve in a given direction. specified by $\text{sgn}\,v_k$. Notably, whenever $0\leq \kappa_a/2\Gamma_a\leq 1$, which corresponds to the type I DM phase, {\em all} unstable modes propagate in the same direction. In contrast, the anomalously relaxing phase has no unstable modes, while the fully unstable phase has unstable modes propagating in 
both directions.
{\bf Bottom:} Dynamical metastability phase diagram of the bosonic 
coupled HN chains in the decoupled limit, in terms of the raw system (dissipative) parameters. The marked (green) points correspond to parameter choices in the upper figure. Units are such that $\kappa_-=1$. }
\label{fig:Chiral HN}
\end{figure}

Since our goal is now to describe the dynamical metastability theory of a single HN chain we can, without loss of generality, focus on the $a$-chain alone. To understand the bulk-dynamics, let us define $a(k) \equiv \sum_j e^{-ijk}a_j$. It then follows that $\dot{a}(k) = A(k)a(k)$, with 
\begin{align*}
    A(k) &= -\kappa_a+(J^L_a-J^R_a)\cos(k) -i (J_a^L+J_a^R)\sin(k)    \\
    &= -\kappa_a-2\Gamma_a\cos(k) -2i J_a\sin(k), 
\end{align*}
which describes an ellipse in the complex plane, centered at $(-\kappa_a,0)$. It follows that $\text{Re} A(k)\in[-\kappa_a-2\Gamma_a,-\kappa_a+2\Gamma_a]$. Thus, there will be unstable bulk modes whenever $2\Gamma_a\geq \kappa_a$ or equivalently, using Eq.\,\eqref{newP}, when the onsite gain $\kappa_+$ exceeds the onsite loss $\kappa_-$. Interestingly, this conditions holds despite the additional effective onsite loss rate of $2\Gamma_a$ provided by the dissipative coupling. This additional loss is compensated by the characteristic amplification rate $J^R_a-J^L_a=2\Gamma_a$ arising from the nonreciprocal nature of the HN chain.

When $\kappa_+>\kappa_-$, the bulk-unstable modes are manifestly directional whenever $\kappa_a>0$ (or $2\Gamma_a>\kappa_+-\kappa_-$). This can be seen by computing the group velocity 
of the $k$-th wavepacket. By writing $a(k,t) = e^{[ \text{Re} A(k) + i \text{Im} A(k)]t} a(k,0)$ and interpreting $-\text{Im} A(k)$ as the wavepacket energy, we have 
\begin{align}
\label{wpvel}
v_k \equiv -\frac{d}{dk}\text{Im}A(k) = 2J_a\cos(k).
\end{align}
We observe that the unstable bulk modes, i.e.,  those with $\text{Re}A(k)>0$, propagate in the same direction specified by the sign of $v_k$ whenever $\kappa_a>0$ (see Fig.\,\ref{fig:Chiral HN}, top). This parameter regime (namely, $\kappa_a >0$ and $\kappa_+>\kappa_-$) is particularly relevant for the OBC dynamics. 

Following the general arguments from pseudospectral analysis in Sec.\,\ref{pseudospectra}, the OBC chain will undergo transient amplification, manifesting in the form of unstable bulk pseudomodes, whenever the bulk is unstable ($\kappa_+>\kappa_-$). However, the overall chain will remain dynamically stable as long as $\kappa_a = 2\Gamma_a + \kappa_--\kappa_+>0$. This follows from the known normal mode spectrum of the HN model under OBCs \cite{HNChain}:
\begin{align*}
    \lambda_m = -\kappa_a+2i\sqrt{J_a^2 - \Gamma_a^2}\cos\left(\frac{m\pi}{N+1}\right),\quad m=1,\ldots,N.
\end{align*}
This implies that the decoupled chain exhibits type I DM, in addition to directional amplification, as long as $\kappa_+-2\Gamma_a\leq \kappa_- \leq \kappa_+$ or, equivalently, $0<\kappa_a/2\Gamma_a<1$ \footnote{The case $\kappa_a=0$ corresponds to $\Delta_{S,N}^{\text{OBC}}=0$.
Thus, the chain may be stable or unstable depending on the existence of a SS. We omit any such analysis, as it is not relevant to our present considerations.}. 

The stabilizing nature of boundaries in the HN chain can be explained from physical arguments. Because the instability of a given bulk mode necessitates a fixed directionality, persistent amplification can only occur if the mode is allowed to propagate indefinitely. Of course, this requires that the system lacks a boundary in the direction of amplification or that the chain is infinite in extent. 


\subsubsection{The non-chiral regime $\theta=0$}

Let us now couple the chains and further take $J_a=J_b\equiv J$ and $\Gamma_a=\Gamma_b\equiv \Gamma$ for simplicity. Note that, in view of the constraint in Eq.\,\eqref{constraint}, this also equalizes the effective loss rates of the two chains, i.e., $\kappa_a=\kappa_b\equiv \kappa$. It follows that, for $\theta=0$, the two HN chains support asymmetric hopping in the \textit{same} direction:
\begin{align}
\label{HNnonchiral}
J_b^L &= J_a^L \equiv J^L,  \quad J_b^R = J_a^R \equiv J^R.
\end{align}
It is in this sense that we say the joint system is non-chiral. We thus expect that, when coupling is introduced, the overall dynamical stability phase diagram is left unaffected. In particular, the presence of a boundary should be sufficient for stabilizing the system as long as the effective loss rates $\kappa$ exceed the amplification rate.

These arguments can be made rigorous by transforming the modes in a way that decouples the two chains. Specifically, consider the hybridized modes $c^\pm_j \equiv a_j\pm ib_j$. We find:
\begin{align*}
    \dot{c}^{\,\pm}_j = J^L c^\pm_{j-1} - J^R c_{j-1}^\pm-(\kappa\pm iw)c_j^\pm .
\end{align*}
Next, upon moving to a frame that rotates at frequency $w$, we can define $d_j^\pm(t) \equiv e^{-iwt}c_j^\pm(t)$ so that
\begin{align*}
    \dot{d}^{\,\pm}_j = J^L d^\pm_{j-1} - J^R d_{j-1}^\pm-\kappa d_j^\pm .
\end{align*}
These rotating hybridized modes individually obey the EOM for a single lossy HN chain. The analysis in the rotating frame then proceeds exactly as in the decoupled limit. In the physical frame, the hybridized modes $c_j^\pm$ undergo identical dynamics to that of the previously described fundamental modes of a single HN chain with an additional phase offset $-w$. Since this offset is spatially isotropic (and hence $k$-independent), it does not affect the normal mode wavepacket velocity, and thus, directional amplification is maintained.

\subsubsection{The chiral regime
$\theta=\pi$}

To contrast with the non-chiral case, let us now consider $\theta=\pi$, with all other parameters kept the same. Now the two chains support asymmetric hopping in opposite directions:
\begin{align}
\label{HNchiral}
J_b^L&= J_a^R \equiv J^R,  \quad J_b^R = J_a^L \equiv J^L.
\end{align}
In this case, the presence of coupling can create instabilities where they were not present before. To see why, consider the case of weak coupling and take the $a$-chain to be in a DM phase with right-moving amplifying modes. Previously, the insertion of boundaries was sufficient for stabilizing the chain. However, with nonzero coupling, amplitude can leak into the $b$-chain. This amplitude will propagate and amplify to the right and continue the cycle. In a sense, ``amplifying vortices'' can form within the bulk. 

From these arguments, we may identify at least one critical value of the inter-chain hopping amplitude $w$, for which we expect a noticeable change in behavior. Specifically, consider the regime $0<w<w_1 \equiv 2\Gamma = J^R-J^L$. This weak coupling regime corresponds to the case where amplitude is transferred between chains at a rate \textit{less} than the characteristic amplification rate of each individual chain. Thus, as long as the onsite gain exceeds the onsite loss, there is time for amplification to take place before eventually transferring amplitude to the other chain and the aforementioned cycle can occur. We generically expect that $0<w<w_1$ corresponds to an unstable phase when $\kappa_+>\kappa_-$, \textit{independent} of BCs, unless $w=0$ precisely, and boundaries are imposed. In particular, we expect the type I DM region in Fig.\,\ref{fig:Chiral HN} to disappear for arbitrarily small $w$. 

For $w>w_1$, stability becomes difficult to predict. Certainly, there must exist a suitably large $w$, say $w\equiv w_2$, such that stability is maintained independently of system-size or BCs. A reasonable estimate following from dimensional analysis would be $w_2 \equiv 2|J|$. From Eq.\,\eqref{wpvel}, this characterizes the maximum wavepacket velocity. For $w>w_2$, wavepackets propagate slower within each chain than they do between the two chains. We may then expect amplification to be ``quenched" by the fast inter-chain transfer. Since the interval $w\in(w_1,w_2)$ corresponds to configurations supporting wavepackets moving both slower and faster than the inter-chain hopping, stability prescriptions are not straightforward.

To explore in more detail, consider the dynamics of the array $\varphi \equiv [a_1,b_1,\ldots,a_N,b_N]$. 
The EOM reads $\dot{\varphi}=\mb A \varphi,$ with $\mathbf{A} = \mathbf{I}_{N}\otimes \mathbf{a}_0 + \mathbf{S}\otimes \mathbf{a}_1+\mathbf{S}^\dag\otimes \mathbf{a}_{-1}$, with $\mathbf{S}$ being the BC-dependent shift operator in Eq.\,\eqref{Sshift}, and
\begin{align*}
  \mathbf{a}_0 \nonumber&=\begin{pmatrix}
        -\kappa& w\\
        -w&-\kappa
   \end{pmatrix},
   \\ \mathbf{a}_1 &\equiv -\begin{pmatrix}
   J^R&0\\
       0&J^L
  \end{pmatrix},\quad \mathbf{a}_{-1}\equiv\begin{pmatrix}
       J^L&0\\
       0&J^R
  \end{pmatrix}.
\end{align*} 
The bulk rapidity bands are calculated in a straightforward manner from the spectrum of $\mathbf{a}(k)\equiv \mathbf{a}_0 + \mathbf{a}_1e^{ik} + \mathbf{a}_{-1} e^{-ik}$; specifically, we have 
\begin{align*}
    \lambda_\pm(k) = -\kappa \pm \sqrt{(2\Gamma\cos(k))^2-w^2}-2iJ\sin(k) .
\end{align*}
Since $|\text{Re}\sqrt{(2\Gamma\cos(k))^2-w^2}|\leq 2\Gamma$, we once again have an unstable bulk whenever onsite gain exceeds the onsite loss. As in the uncoupled case, things are far more interesting when boundaries are imposed. 

\begin{figure}[t]
\centering
\includegraphics[width=\columnwidth]{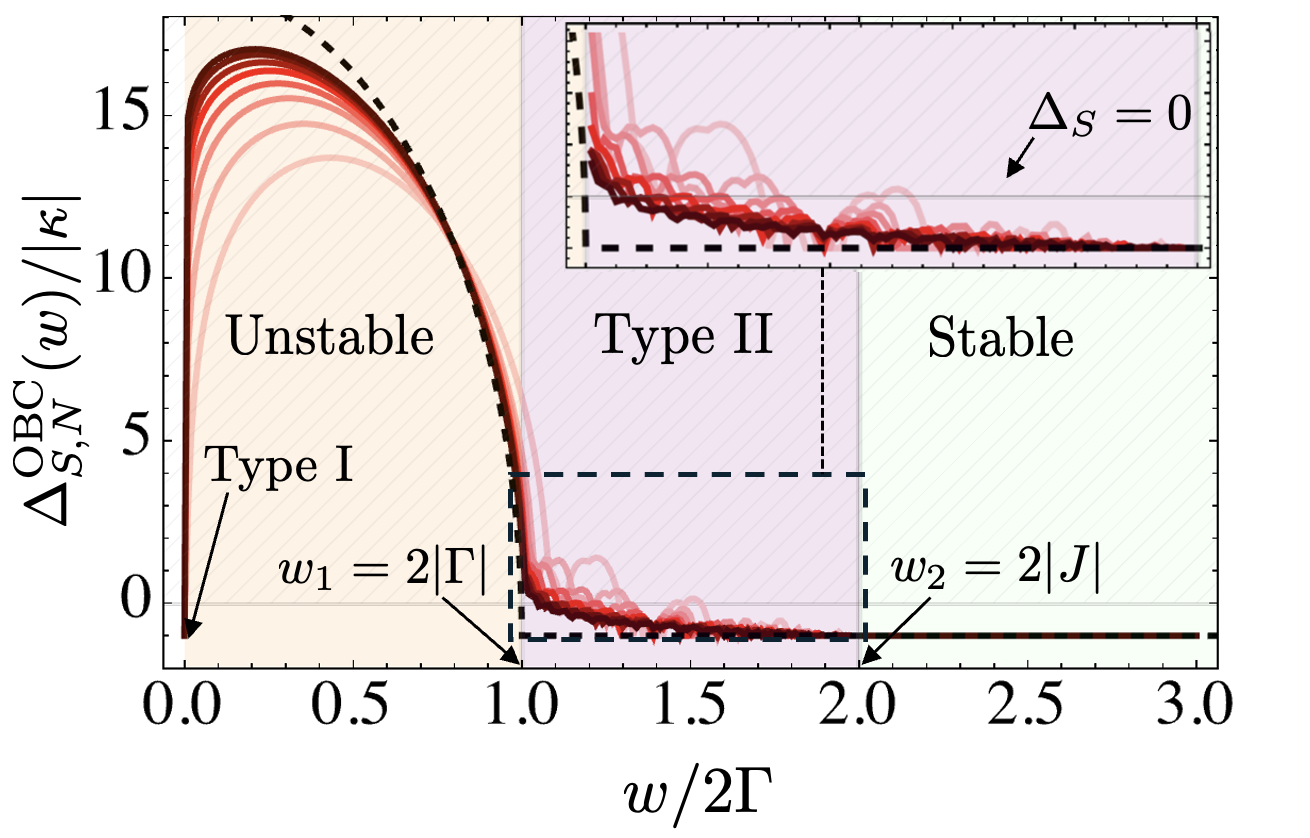}
\vspace*{-4mm}
\caption{Stability gap of the chiral coupled Hatano-Nelson chain model as a function of inter-chain coupling strength $w$ and system size $N$. The hatched region corresponds to the unstable case where $\Delta_{S,N}^{\text{OBC}}(w)>0$ while the orange, purple, and green regions correspond to region I ($0<w<w_1$), II ($w_1\leq w<w_2$), and III ($w\geq w_2)$, respectively. From lightest to darkest, the curves correspond to $N=10,15,20,25,30,35,40,$ and $45$. The dashed black line is the bulk stability gap $\Delta_{S}^{\text{BIBC}}(w)$, which, for this model is equal to the semi-infinite stability gap. The inset shows a zoomed-in section of region II. Relevant parameters are $J=1$, $\Gamma=1/2$, $\kappa_+=1.95$, and $\kappa_-=1$.  }
    \label{fig:size-stabilization}
\end{figure}

In Fig.\,\ref{fig:size-stabilization}, we plot the numerically calculated stability gap $\Delta^{\text{OBC}}_{S,N}(w) = \max \text{Re}[\sigma(\mathbf{A})]$ as a function of $w$ and $N$. The first observation is that, as expected, a small $w$ ($<w_1$) destabilizes the system. Consistent with the generic behavior of highly non-normal matrices, the onset of instability in this region, which we call region I, becomes more and more dramatic as system-size increases. Far more surprising is the behavior of the system in the range $w_1<w<w_2$,
which we call region II. Within this region, $\Delta_{S,N}^\text{OBC}(w)$ jumps erratically above the effective local loss rate $\kappa$. In particular, as long as $\kappa$ is sufficiently small, these erratic jumps can render the system unstable for $w$ in certain subintervals of $[w_1,w_2]$. As demonstrated in Fig.\,\ref{fig:size-stabilization}, the increase in system size generically reduces the stability gap. This is consistent with, but not necessarily implied by, the fact that we may analytically determine stability in the semi-infinite limit for any $\kappa>0$ (see Appendix \ref{app:WH}). Ultimately, we identify the behavior of the model in region II to be that of type II dynamical metastability. Finally, region III, defined by $w>w_2$, is evidently stable independently of BCs or system size, consistent with our general arguments. 

The existence of both type I and type II DM in the chiral coupled HN model is intrinsically tied to the interplay between coherent and dissipative dynamics. The incoherent coupling $\Gamma$ and the coherent inter-chain coupling $w$ jointly play essential roles in producing novel system-size-dependent dynamics. A natural question thus becomes, to what extent is a similarly rich dynamical phase diagram possible in the closed-system setting, when the dynamics are unitary?

\subsection{A Hamiltonian model: The Kitaev-coupled oscillator chain}
\label{interp analysis}

In our next example, we seek to address the above question, by focusing on a purely Hamiltonian setting. For this purpose, let the two real quadrature operators be defined by 
$$ x_j \equiv \tfrac{1}{\sqrt{2}}(a_j^{\dag}+a_j), \quad p_j \equiv \tfrac{i}{\sqrt{2}} (a_j^{\dag}-a_j).$$
We then employ the observation made in Ref.\,\onlinecite{ClerkBKC} that the EOMs for these quadratures under dynamics generated by the so-called bosonic Kitaev chain (BKC) Hamiltonian, 
\begin{align*}
H_{\text{BKC}}&=\frac{1}{2}\sum_{j}(iJa^{\dag}_{j+1}a_j+i\Delta a_{j+1}^{\dag}a_j^{\dag}+\text{H.c.}),
\end{align*}
{\em exactly} mimic those of two decoupled $(w=0)$ HN chains, with opposite directionality ($\theta = \pi$). Here, $J\geq 0$ is a hopping rate, whereas $|\Delta|\leq  J$ is a non-degenerate parametric amplification 
rate. Explicitly, the EOMs for this system in the presence of uniform on-site dissipation rate $\kappa\geq 0$ are:
\[ \left \{ \begin{array}{ll}
    \dot{x}_j &= J_L x_{j-1} - J_R x_{j+1} - \kappa x_j , \\
    \dot{p}_j &= J_R p_{j-1} - J_L p_{j+1} - \kappa p_j, \end{array} \right., \]
where $J_L = (J+\Delta)/2$ and $J_R=(J-\Delta)/2$. Upon the identification $x_j\mapsto a_j$ and $p_j\mapsto b_j$, these EOMs are identical to the $w=0$ limit of the chiral coupled HN model, described by Eq.\,\eqref{CHN}. A notable distinction, however, is that the asymmetric ``hopping rates'' $J_{L,R}$ are now not constrained, and entirely independent of the on-site dissipation rate. This model, in the non-dissipative $\kappa=0$ limit, displays a number of characteristically NH features, including extreme sensitivity to BCs and chiral transport -- much like the chiral HN chain, despite being fully Hermitian. As a result, it has garnered significant theoretical interest \cite{ClerkBKC,ClerkExpEnhanced,Decon}; notably, two experimental realizations of $H_{\text{BKC}}$ in optomechanical and circuit-QED settings have also been recently reported \cite{BKCOptomechanical,BKCSimulation}.

Now, to obtain the analog of the full chiral coupled HN chain, we observe that the role of the coherent coupling $w$ can be fulfilled by harmonic oscillator terms, upon replacing bosonic operators with quadratures. That is, by considering the full Hamiltonian model
\begin{equation}
\label{KOC}
H_{\text{KOC}} \equiv \sum_{j}\Omega\bigg(a_j^\dag a_j + \frac{1}{2}\bigg) + H_{\text{BKC}},\quad \Omega \geq 0,
\end{equation}
we obtain a quadrature incarnation of the coupled HN model in the chiral regime: 
\begin{equation*}
\left\{ \begin{array}{cc} 
    \dot{x}_j &= J_L x_{j-1} - J_R x_{j+1} + \Omega p_j - \kappa x_j,\\
    \dot{p}_j &= J_R p_{j-1} - J_L p_{j+1} - \Omega x_j - \kappa p_j. \end{array}\right.
\end{equation*}
By construction, this model has the {\em same} dynamical stability phase diagram as the coupled HN chain. Furthermore, this new system only has one incoherent contribution, i.e., the onsite dissipation $\kappa$ which, as noted, we are free to set to zero. We are thus left with a purely unitary evolution which, remarkably, retains all of the dynamical metastability properties of the coupled HN chain. Since the Hamiltonian Eq.\,\eqref{KOC} may be interpreted as a chain of oscillators coupled via imaginary hopping and pairing characteristic of the BKC, we dub this model the ``Kitaev-coupled oscillator chain'' (KOC).  


In analogy with $w=0$ for the HN model, the system at exactly $\Omega=0$ is type I DM \cite{Bosoranas,PostBosoranas}. Specifically, this point is dynamically stable for all finite size, with a purely real, doubly-degenerate spectrum given by \cite{ClerkBKC,Decon}:
\begin{align}
\label{BKCspectrum}
    \omega_m=\sqrt{J^2-\Delta^2}\cos\left(\frac{m\pi}{N+1}\right), \quad m=1,\ldots,N.
\end{align}
As $N\to\infty$, there is a stability gap discontinuity and a spontaneous loss of dynamical stability. More explicitly, the SIBC spectrum is comprised of the unstable bulk spectrum given by two degenerate, but opposite-winding, ellipses with $J$ and $\Delta$ the semi-major and semi-minor axes, respectively, in addition to all points inside the ellipse \cite{Bosoranas}. As we have argued in the general case, such a gap discontinuity {\em cannot} be predicted from the finite-size spectrum alone. Instead, the imprint of the SIBC appears in the finite-size $\epsilon$-pseudospectrum according to Eq. \eqref{wellbehavedps}. This is further illustrated in Fig.\,\ref{fig:pseudospectra}. 

\begin{figure}[t]
\centering
\includegraphics[width=\columnwidth]{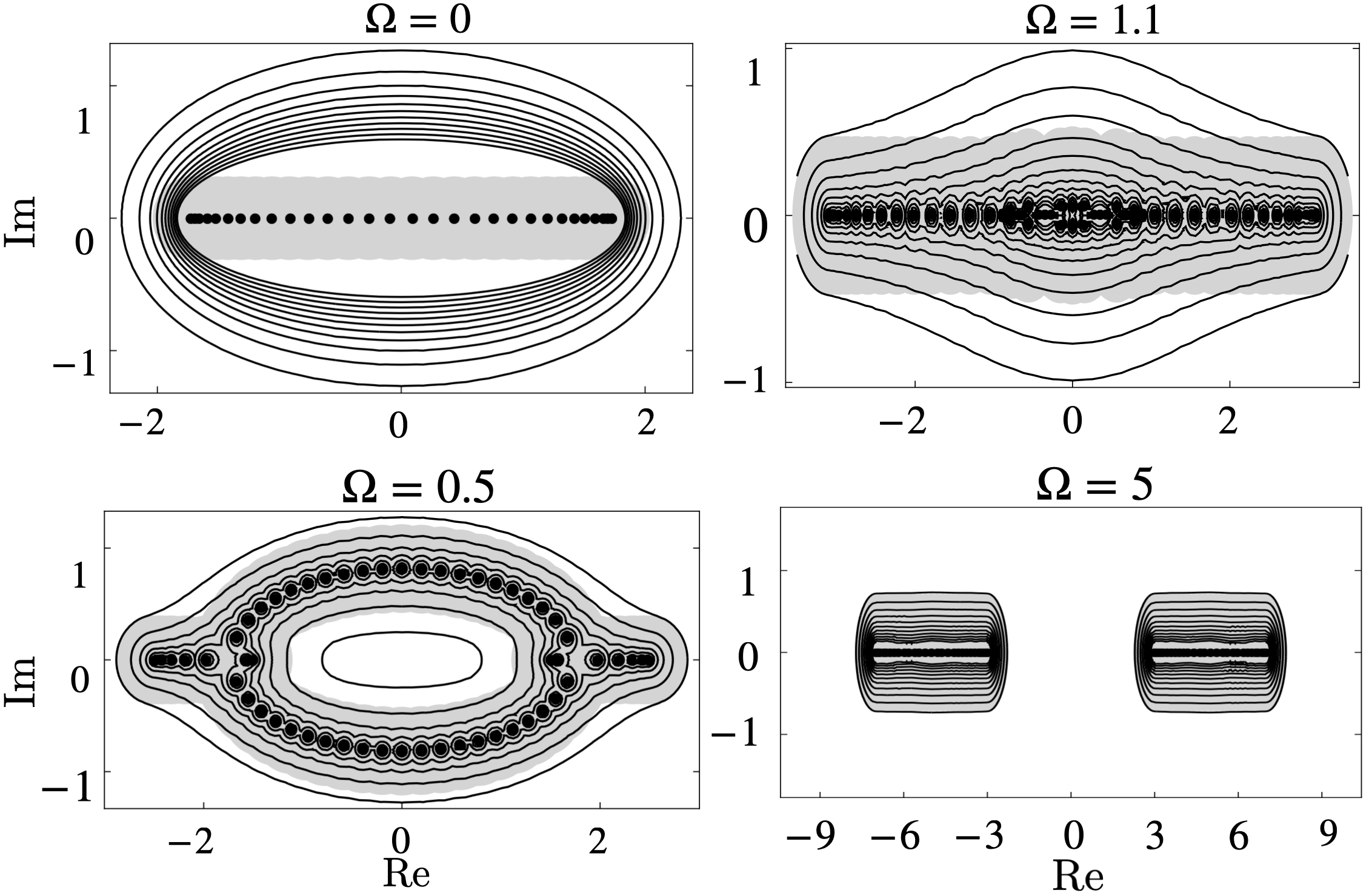}
\caption{The spectra (black dots) and the $\epsilon$-pseudospectra (black contours) of the Kitaev-coupled oscillator chain dynamical matrix ${\mb G}_N(J,\Delta, \Omega)$ under OBCs for $N=30$ and representative choices of $\Omega$ from each region, for $J=2,\Delta=1$. The regions shaded in gray indicate the maximum extent of the pseudospectra for a normal matrix with the {\em same} eigenvalues. The pseudospectral contours for each $\mb G_N$ are chosen as $\epsilon \equiv 10^{-x/\norm{{\mb G}_N}}$, with  
 $x=1,3/2,2,5/2,3,7/2,4,9/2,5,11/2,6.$ The pseudospectra for the dynamical matrices at $\Omega=0$ and $\Omega=1.1$ (region II) are particularly striking as compared to those of their normal counterparts: they extend far beyond the $\epsilon$-neighborhoods around the spectral points, indicating a high degree of non-normality. The pseudospectral contours also highlight a qualitative difference between type II and type I DM: as $\epsilon\rightarrow 0$, the pseudospectrum of $\Omega=0$ remains an ellipse in the complex plane, indicating a discontinuity of the spectrum and the instability of the system in the infinite-size limit, whereas that of region II converges to the BIBC spectrum on the real axis as $N\rightarrow\inf$, signaling the eventual stabilization of the region.}
\label{fig:pseudospectra}
\end{figure}

When $\Omega>0$, the dynamical stability phase diagram of the KOC is equivalent to that of the chiral coupled HN model, thanks to the equivalence of the EOMs. The bulk is unstable for $\Omega\leq  \Delta\equiv \Omega_1$, and stable for $\Omega>\Omega_1$, also matching its dynamical phase diagram under SIBCs (see Appendices \ref{app:WH} and \ref{Gks}). In contrast, the system under OBCs is dynamically unstable for $0<\Omega<J\equiv \Omega_2$ and stable for $\Omega\geq \Omega_2$. In fact, due to equivalence between the linear EOMs, the behavior of $\Delta_{S,N}^{\text{OBC}}$ for this model is identical to the one in Fig.\,\ref{fig:size-stabilization}, upon taking $\kappa\to 0$. Accordingly, from now on, we refer to the regions of the KOC model as region I ($0\leq \Omega\leq \Omega_1$), region II ($\Omega_1<\Omega<\Omega_2$), and region III ($\Omega>\Omega_2$). Physically, it is worth stressing that we may view the Hamiltonian in Eq.\,\eqref{KOC} as combining two dynamically stable Hamiltonians -- the BKC Hamiltonian ($\Omega\to 0$) and a chain of decoupled harmonic oscillators ($J,\Delta\to 0$). Strikingly, however, the path through parameter space between these two stable Hamiltonians is {\em ridden with dynamical instabilities.} 
In fact, this purely unitary model exhibits both type I ($\Omega=0$) and type II ($\Omega_1 < \Omega < \Omega_2$) DM behavior.

The rapid onset of instability when one takes $\Omega$ from zero to a very small non-zero value is consistent with the fact that the BKC under OBCs has been shown to be susceptible to instabilities arising from {\em infinitesimal} perturbations \cite{Decon}. The origin of this extreme sensitivity can be generically explained by extreme non-normality of the generator. However, the fact that the dynamics is unitary grants us an additional interpretation. Specifically, the BKC under OBCs is known to support a \textit{macroscopic} number of KCs in its spectrum \cite{Decon}. In fact, due to oddness under time-reversal symmetry ($\{H_\text{BKC},\mathcal{T}\}=0$), the excitation energy spectrum can be shown to be perfectly chiral: for every excitation of energy $\omega$, there is one with energy $-\omega$. Since the number of these collisions scales with system size, it is natural to expect the degree of destabilization due to perturbations to escalate -- hence explaining the $N$-dependence of region I. Further, it is interesting to note that the decoupled oscillator Hamiltonian has time-reversal symmetry. Thus, this combination of Hamiltonians with even and odd transformation properties under time reversal 
may be a generic way for engendering nontrivial dynamical features.

\begin{figure*}[t]
    \centering 
\includegraphics[width=.9\textwidth]{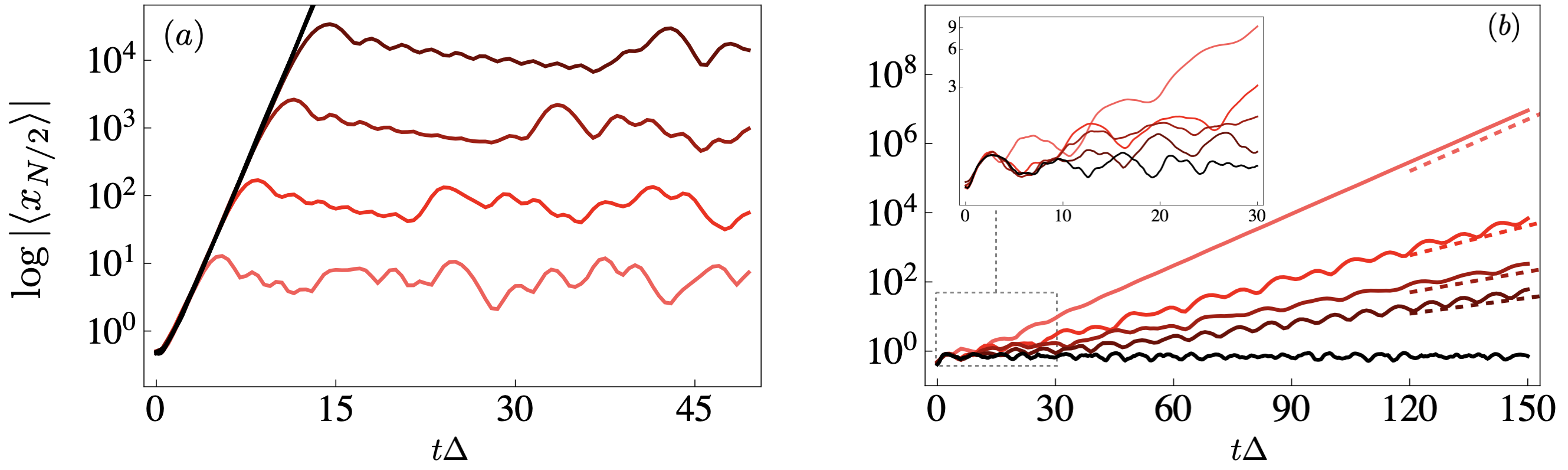}
\vspace*{-2mm}
\caption{Time-dependent expectation value $\langle x_{N/2}(t)\rangle$ for the Kitaev-coupled oscillator chain under OBCs, averaged over 300 trajectories with random initial conditions. From lightest to darkest, $N=16,26,36,46$, with $J=2,\Delta=1$. {\bf (a)} $\Omega=0$. The rate of the initial transient exponential growth is exactly bounded by the corresponding unstable bulk one (black curve). The length of the transient is proportional to system size. The asymptotic limit is stable, with oscillations given by the normal mode frequencies. {\bf (b)} Region II: $\Omega=1.1$. The dashed curves correspond to lines with slope $
\Delta_{S,N}^\text{OBC}$.
The asymptotic limit is unstable, with the exponential growth rate given by $\Delta_{S,N}^\text{OBC}$.
However, as the inset figure shows, in the early time dynamics, the system appears stable for a transient period of time that grows with $N$.}
    \label{fig:Transient}
\end{figure*}

Krein stability theory further offers us an explanation for the onset of type II DM when the KOC is taken
from the oscillator-dominated region III to the intermediate region II. For $\Omega\geq J$, one may check that the bulk energy bands are given by:
\begin{align}
    \omega_\pm(k) = J \sin(k) \pm \sqrt{\Omega^2 - \Delta^2\cos^2(k)}, 
\end{align}
with $\omega_+(k)$ corresponding to the physical excitation energies. The regimes $\Omega > J$ and $\Omega\leq J$ correspond to thermodynamic stability and instability, respectively. In fact, (the thermodynamically unstable) region II is characterized by the emergence of KCs in a momentum interval centered around $-\pi/2$ (where the bulk energy gap closes at $J=\Omega$). That is, for each $k$ in this interval, there is a $k'$ with $\omega_+(k) = \omega_{-}(k')$. 
Under the protection of translation invariance, these Krein-collided modes, which occur in general at different momenta, cannot hybridize and become unstable. The imposition of boundaries, however, serves to do just the job. This `field of KC landmines' offers fertile ground for the sporadic generation of instabilities characteristic of region II. In addition, this region is distinguished by a high level of non-normality, as compared to the non-DM regions the model supports. The extent of non-normality in region II can be appreciated in Fig.\,\ref{fig:pseudospectra}. The pseudospectra in this region extend far beyond the pseudospectra of a normal matrix with an identical spectrum. Finally, in contrast with the chiral coupled HN chain, the type II DM region II in the KOC remains unstable for \textit{all} finite sizes, only to stabilize in the infinite-size limit (see Appendix \ref{app:WH}). In the language of Fig.\,\ref{fig:stability gap comparison}, we have $N_c\to\infty$ for the KOC, whereas $N_c< \infty$ for the chiral coupled HN chains. 

The tension between spectra and pseudospectra for both types of DM manifests directly in the time evolution of physical observables. As an example, we plot $\langle x_{N/2}(t)\rangle$ for a set of random initial conditions in Fig.\,\ref{fig:Transient}. This data confirms that both type I and type II DM regimes engender anomalous transient behavior. Given its overall stability, the system at $\Omega=0$ has an asymptotically bounded evolution, but appears unstable for a transient period of time that grows with system size. Notably, this initial growth period is exactly set by the characteristic amplification rate of the bulk, given by $\Delta$ . An exactly opposite scenario takes place in region II which, while asymptotically unstable, appears stable for a transient period of time that grows with $N$ reflecting a stable infinite-size limit.  

We conclude our discussion of the KOC by calling attention to a model recently proposed in Ref.\,\cite{ClerkInterpolation}. Both our KOC and the model therein have the BKC as a shared point in their larger Hamiltonian parameter spaces but, while our model includes local on-site oscillators, the model in Ref.\,\cite{ClerkInterpolation} adds real hopping to the BKC. That is, 
\begin{align}
\label{ClerkInterp}
   H&= \frac{1}{2}\sum_{j=1}^{N-1}\Big[(g+iJ)a^{\dag}_{j+1}a_j+i\Delta a_{j+1}^{\dag}a_j^{\dag}+\text{H.c.}\Big],
\end{align}
with $J>\Delta>0,$ $g\geq 0$. Like our on-site oscillator terms, the additional real hopping terms serve to quench the nonreciprocal behavior (and thus, type I DM) of the BKC.  In fact, reciprocity is fully restored for $g \geq \Delta$, where the bulk dynamics is stable. It follows that the \textit{bulk} dynamical stability phase diagrams of the two models are equivalent, with $g$ swapped for $\Omega$. However, sharp distinctions arise when one considers OBCs. The Hamiltonian in Eq.\,\eqref{ClerkInterp} is fully stable under OBCs for \textit{all} $g$ and thus, it cannot exhibit type II DM. In our framework, the transition in this model from nonreciprocal to reciprocal is equivalent to one from type I DM to well-behaved (recall Table\,\ref{t: classes }). We have explicitly checked this by using the conditions for the existence of non-trivial partial indices of the Bloch dynamical matrix derived in Appendix\,\ref{app:WH}. Physically, we conjecture that mixing of coupling ranges is necessary to instigate type II dynamical metastability.

\begin{figure*}
\centering
\includegraphics[width=.9\textwidth]{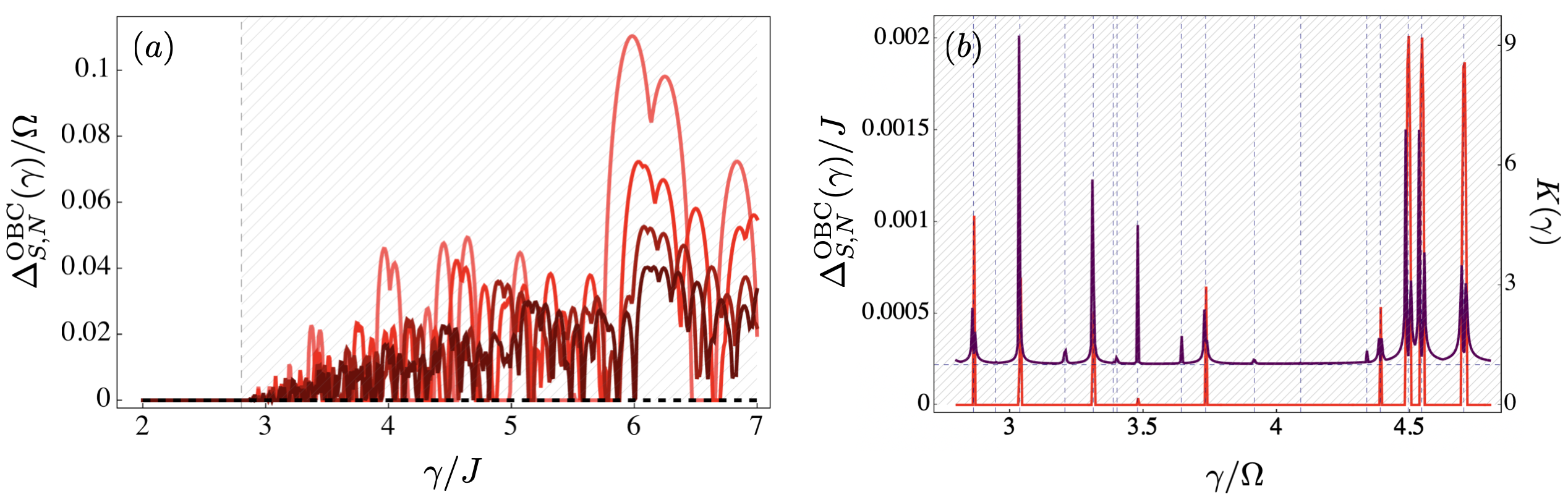}
\vspace*{-3mm}
\caption{Type II DM behavior in the time-reversal-broken harmonic chain under OBCs. {\bf (a)} Stability gap $\Delta_{S,N}^{\text{OBC}}(\gamma)$ for $\Omega=3,J=1.$ Type II DM sets in for $\gamma>\gamma_c$, marked by the grid-line, at the point where thermodynamic stability is lost. From lightest to darkest, $N=16,26,36,46,$ with the black, dashed line denoting the semi-infinite limit. {\bf (b):} 
Peaks of the condition number $K(\gamma)$ (purple), $\Delta_{S,N}^{\text{OBC}}$ (red), and location of KCs for $J=0$ (vertical dashed blue grid-lines).
Spikes above the horizontal dashed grid-line, corresponding to $K(\gamma)=1$, indicate regions of high non-normality. 
Here $N=20$, $\Omega=1,J=0.01$.}
\vspace*{-2mm}
\label{fig:GHC with imaginary hopping}
\end{figure*}

\subsection{A design principle put to the test:\\ The time-reversal-broken harmonic chain}

We argued that the type II DM behavior taking place in region II of the KOC phase diagram may be a result of imposing boundaries on a bulk system that hosts a macroscopic number of KCs. It 
is natural to expect that any QBH supporting bulk KCs in its energy bands may exhibit type II DM. Explicitly, we design a QBH which can be parametrically tuned to close the bulk many-body energy gap, and thus generate a macroscopic number of KCs -- while retaining bulk dynamical stability. Importantly, however, the system must additionally possess a {\em destabilizing mechanism} (say, non-degenerate parametric amplification).

Consider a QBH of the form $H \equiv H_\text{GHC} + H_\text{TRB}$, with 
\begin{align}
\label{TRSB}
    H_\text{GHC} \nonumber&= \!\sum_{j}\Omega\Big( a_j^\dag a_j + \frac{1}{2}\Big)
    -\! \frac{J}{2}\Big( a_{j+1}^\dag a_j + a_{j+1}^\dag a_j^\dag + \text{H.c.}\Big),
     \\
     H_\text{TRB} &= \frac{i\gamma}{2}\sum_{j} \big(a_{j+1}^\dag a_j - \text{H.c.}\big),
    \end{align}
where $\Omega$ is an on-site frequency, $J$ a real hopping and pairing amplitude, and $\gamma$ an imaginary hopping amplitude. The Hamiltonian $H_\text{GHC}$ can be shown, in an appropriate quadrature basis, to be equivalent to a family of oscillators, harmonically coupled in position to their nearest neighbors (the so-called ``gapped harmonic chain'' of Ref.\,\cite{Dualities}). $H_\text{GHC}$ is manifestly time-reversal invariant and is, in fact, dynamically stable whenever $\Omega>2J$, independently of BCs (hence, it is not DM). The second Hamiltonian contribution, $H_\text{TRB}$, serves to explicitly break time-reversal symmetry and, since it may be seen as the $\Delta\to 0$ limit of the BKC, it features a completely chiral excitation energy spectrum. Thus, making the imaginary hopping strength $\gamma$ sufficiently large should serve to both drive a many-body gap closing (and consequently, a thermodynamical stability phase transition) of the GHC and maintain bulk stability.

The bulk excitation energies are given by:
\begin{align*}
    \omega(k) = \gamma \sin(k) + \sqrt{\Omega(\Omega-2J\cos(k))}.
\end{align*}
From this, we find that as the imaginary hopping is increased, there is a point $\gamma=\gamma_c$, such that the bulk energy gap closes:
\begin{align*}
\gamma_c &= \tfrac{\Omega}{\sqrt{2}} \sqrt{1-\sqrt{1-\left(\tfrac{2J}{\Omega}\right)^2}}.
\end{align*}
Beyond that, for $\gamma >\gamma_c$, the system is thermodynamically unstable, but still dynamically stable (see Appendix \ref{Gks}). Just as in region II of the KOC, a macroscopic number of KCs develop in the bulk; thus, the stage is set for type II DM to emerge, upon introducing boundaries. This prediction is verified in Fig.\,\ref{fig:GHC with imaginary hopping}(a).  As we have seen in the other two examples of type II DM, the behavior of the stability gap 
$\Delta_{S,N}^{\text{OBC}}(\gamma)$ is erratic and unpredictable for $\gamma>\gamma_c$. However, for a fixed $N$, and small $J$ (thus, small amplification), we find that the peaks of 
$\Delta_{S,N}^{\text{OBC}}(\gamma)$ are located {\em precisely} where KCs occur for $J=0$, as observed in Fig.\,\ref{fig:GHC with imaginary hopping}(b). Additionally, we plot the \textit{condition number} of the modal matrix of $\mathbf{G}_N^\text{OBC}$ as a measure of non-normality \cite{TrefethenPS}: explicitly, $K \equiv \norm{\mathbf{L}_N} \norm{\mathbf{L}_N^{-1}}$, where the columns of $\mathbf{L}_N$ are the eigenvectors of $\mathbf{G}_N^\text{OBC}$. Since the condition number is always $1$ for normal matrices, spikes above $1$ represent points where non-normality is locally maximal: as Fig.\,\ref{fig:GHC with imaginary hopping}(b) shows, these again line up precisely with the KCs and the spikes in $\Delta_{S,N}^{\text{OBC}}(\gamma)$.  

Based on the model systems we have analyzed, we observe that, while the condition number of type II DM systems does not diverge akin to that of type I, the {\em onset} of type II DM {\em is} signaled by high non-normality and erratic behavior of the condition number as a function of relevant system parameters.

\section{Further Implications} 
\label{sec:further}

Our work thus far has shed light onto several distinctive 
features -- many of them {\em a priori} unexpected -- of the relationship between the dynamical stability phases of finite-size quadratic bosonic systems and those of their infinite-size counterparts, as elicited by non-normality of the generator. Unsurprisingly, a number of physically relevant quantities of a 
quadratic bosonic system are, in turn, explicitly tied to its dynamical stability phase. In what follows, we showcase two particularly compelling manifestations.

\subsection{Anomalous transient in entanglement generation}
\label{EE}

\begin{figure*}[t]
 \centering
\includegraphics[width=\textwidth]{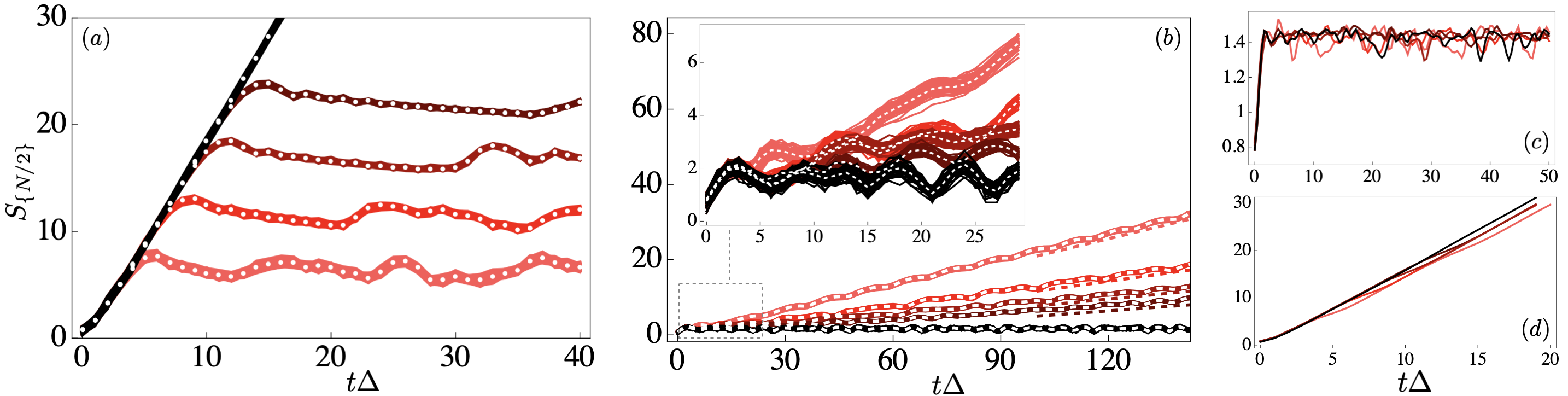}
\caption{{\bf (a)} Entanglement entropy for the Kitaev-coupled oscillator chain under OBCs at $\Omega=0$, initialized in 50 randomly sampled Gaussian states (see Appendix\,\ref{app: Gaussian states}). From lightest to darkest, $N=16,26,36,46$. White-dashed lines correspond to the average of 50 paths. The subsystem is fixed as the middle site for all $N$, where the EE growth is maximized. 
As $N$ grows, the transient period grows proportionally, with the rate bounded by that of the most unstable bulk  mode (black curve). Here, $J=2,\Delta=1$. {\bf (b)} Same parameters as in {\bf (a)}, except $\Omega=1.1$ (region II). Asymptotically, the EE grows linearly with the rate set by $2\Delta_{S,N}^{\text{OBC}}$ (dashed lines);
however, the larger the system, the longer it appears to be stable, as shown by the inset for early-time EE growth. {\bf (c)} Same as {\bf (a)}, but only the averages over paths are plotted for the BKC with real hopping in the reciprocal ($J=2, \Delta=1, g=1.2$), non-DM regime.
This illustrates a generic lack of dependence of the initial EE growth on $N$, when the subsystem size is fixed in a dynamically stable, but not DM system. {\bf (d)} Same as {\bf (b)}, but only averages are plotted for $\Omega=0.5$ (region I). There is no transient period of stability in a purely unstable regime.}
\label{fig:EE}
\end{figure*}

Entanglement properties have long provided an important tool for probing the nature and implications of quantum correlations across different quantum many-body phenomena, ranging from equilibrium quantum phase transitions \cite{Fazio,Rolando} to topological order \cite{Balents,Wen}. In particular, for systems described by a (time-independent) QBH $H$, it has been established that, if one initializes a bipartite bosonic system in a Gaussian state, the \textit{asymptotic} production of entanglement entropy (EE) depends on the dynamical stability phase of $H$ \cite{Rigol2018,EElinearKS,HacklLinearEE}. Specifically, for a dynamically stable QBH, the asymptotic EE is bounded and quasi-periodic. For a dynamically unstable QBH with only polynomial instabilities (due to EPs alone, not non-real eigenvalues), the asymptotic EE grows logarithmically in time. Finally, if $H$ displays exponential instabilities, then the asymptotic EE grows linearly with time. These results suggest that dynamically metastable systems may also be peculiar from the point of view of the dynamics of entanglement production. 

As we have seen, the transient behavior of a dynamically metastable system is consistent with a stability phase that is distinct from the true finite-size one. We may then generally predict that a dynamically metastable system will show \textit{transient EE production} that is qualitatively distinct from the asymptotic behavior. Consider a type I (or, respectively, type II) DM QBH of size $N$ and let $\tau_N^\text{DM}$ denote the timescale in which the transient amplification or suppression, respectively, of bulk pseudomodes dies out (e.g., the time at which the quasi-periodic evolution of $\braket{x_{N/2}(t)}$ sets in in Fig.\,\ref{fig:Transient}). Given a bipartition of the system, suppose in addition that $\tau_N^\text{EE}$ sets the timescale for the EE to enter the asymptotic regime. It follows that, if $\tau_N^\text{DM}>\tau_N^\text{EE}$, then the EE will first set into a linearly-growing (bounded)  evolution for $t\in[\tau_N^\text{EE},\tau_N^\text{DM}]$, until eventually becoming bounded (linearly growing) for $t> \tau_N^\text{DM}$. Moreover, since we know $\tau_N^\text{DM}\to\infty$ as $N\to\infty$, this anomalous transient EE behavior will become more and more pronounced as $N$ grows.

To put this picture to the test, we must compute the EE of a Gaussian state for a fixed subsystem $A$ of a larger bosonic system. Let 
${\cal R} \equiv [x_1,p_1,\ldots,x_N,p_N]^T$ denote the array of quadratures of the full system. A Gaussian state may be specified uniquely by its mean-vector and covariance matrix (CM) which, in the quadrature basis, read (see also Appendix\,\ref{app: Gaussian states})
\begin{align}
    \vec{m} = \braket{\cal R},\quad 
    \mathbf{\Gamma}_{ij} &= \braket{\{{\cal R}_i - \braket{{\cal R}_i},{\cal R}_j - \braket{{\cal R}_j}\}},
    \label{covM}
\end{align}
respectively. 
Thus, the EE of a subsystem $A$ of size $N_A$, in an initial Gaussian state with a CM $\mb \Gamma$, is computed as \cite{AdessoEE,Rigol2018}:
\begin{align*}
    S_A&\nonumber=\sum\limits_{i=1}^{N_A} S(\nu_i), \\
    S(v)&=\frac{\nu+1}{2}\log\frac{\nu+1}{2}-\frac{\nu-1}{2}\log\frac{\nu-1}{2},
\end{align*}
with $\nu_i$ denoting the positive eigenvalues of $[i \boldsymbol{\bm{\Omega}} \mb \Gamma]_A$ (i.e., the \textit{symplectic eigenvalues} of $[\bm{\Gamma}]_A$). Here,  $[\bullet ]_A$ denotes a projection of the matrix to a sub-matrix that has support only on $A$ and $\boldsymbol{\Omega}\equiv i\boldsymbol{\tau}_2$ is the \textit{symplectic form}, which encodes the CCRs in the quadrature basis. A Gaussian initial state will remain Gaussian under the time-evolution generated by QBHs and even QBLs. Thus, $S_A(t) = \sum\limits_{i=1}^{N_A} S(\nu_i(t))$, where $\nu_i(t)$ are the instantaneous positive eigenvalues of $[i \boldsymbol{\bm{\Omega}} \mb \Gamma(t)]_A$.

For concreteness, we investigate the KOC initialized in randomly sampled 
Gaussian states, with the single site in the middle of the chain being the subsystem $A$. For a type-I DM regime, we find an anomalous, system-size-dependent period of linear EE growth that is absent in a non-DM and dynamically stable system, compare Figs.\,\ref{fig:EE}(a) and \ref{fig:EE}(c). This anomalous regime of linear EE growth is consistent with the results of Ref.\,\cite{Rigol2018}; it reflects the fact that the 
system is, in the transient, effectively unstable, with the maximal rate of the EE amplification set by the most unstable mode of the SIBC, 
system or, equivalently, for our model, the largest amplifying bulk mode. But, since the system is ultimately dynamically stable, the EE must necessarily become bounded. In the type II DM regime, in contrast with a purely unstable one, the system initially appears stable and the EE is bounded and oscillatory for an increasingly long period of time as the system size is increased. Eventually, instability sets in and the EE settles into linear growth. 

For the corresponding parameter regime, the length of the transient for every $N$ corresponds to the length of the transient dynamics of $\langle x_{N/2}\rangle$ in Fig.\,\ref{fig:Transient}. In addition, notice that Figs.\,\ref{fig:Transient} and \ref{fig:EE} look remarkably similar overall. We conjecture that under Hamiltonian evolution, there will always be a choice of subsystem $A$ such that $S_A(t)$ mimics the average trajectories of the corresponding subsystem quadratures. This is reasonable as the generation of EE is closely related to the expansion of volume in the symplectic subspace of the overall classical phase space spanned by these (now classical) quadratures. 

Interestingly, an unexpected EE phase transition was recently reported for a family of models related to the BKC \cite{ClerkInterpolation}. For the BKC with real hopping in Eq.\,\eqref{ClerkInterp}, the asymptotic EE of a subsystem of size $N_A \sim \mathcal{O}(N)$ was found to obey a volume law scaling in the reciprocal regime, whereas a {\em super-volume law} was found in the non-reciprocal case, namely, the asymptotic EE grows as $\sim N^2$. Remarkably, not only is our observation regarding transient EE production in non-DM vs.\,DM systems consistent with this transition behavior, but in fact {\em it provides a physical explanation}. In Eq.\,\eqref{ClerkInterp}, the transition from the reciprocal to the nonreciprocal regime exactly parallels the transition from a non-DM to a type I DM regime and, therefore, from a regime of size-independent transient EE growth to a size-dependent one. This, in turn, allows for the accumulation of uncharacteristically large asymptotic EE in the nonreciprocal regime. 

Examining EE production through the lens of transient dynamics could also shed light on why a system of two explicitly NH, fermionic coupled HN chains, spectrally equivalent to the BKC with real hopping, was {\em not} found to have a super-volume EE scaling in the nonreciprocal regime instead \cite{RyuEEPhaseTrans}. In type I DM systems, we attribute the size-dependent, linear scaling of the transient EE to the dynamical instability of the SIBC system. This mechanism, however, is inaccessible to quadratic fermionic systems, as they are {\em always} dynamically stable -- even when one considers an explicitly NH Hamiltonian, since one must renormalize the time-evolved state for it to remain consistent with fermionic statistics \cite{Prosen3QBoson,ProsenSpectral, BarthelQuadLindblad}. Hence, the source for enhanced EE production is absent. 

\begin{figure*}[t]
\centering
\includegraphics[width=.94\textwidth]{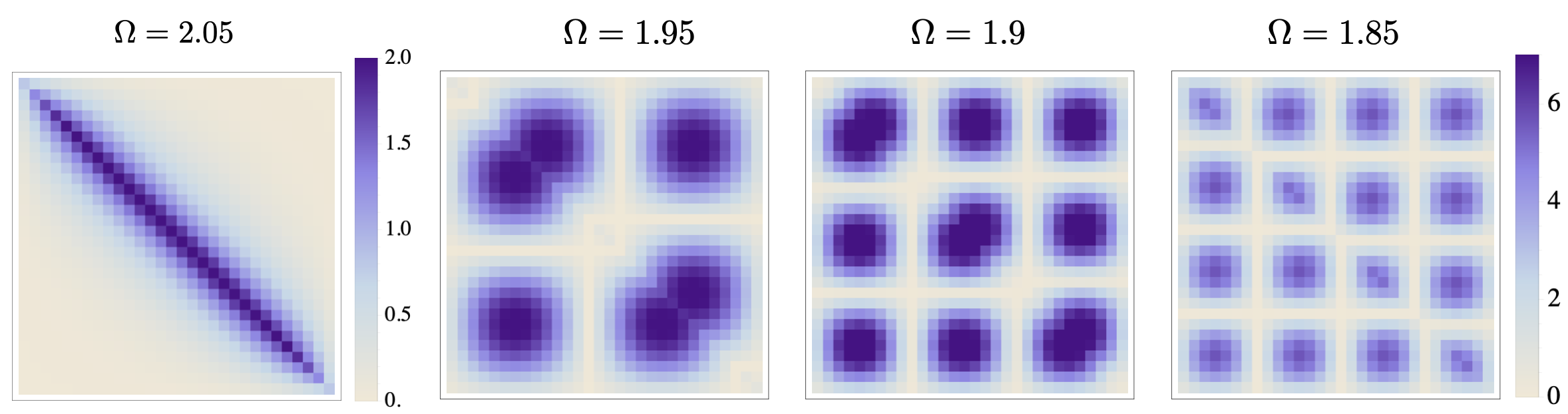}
\caption{Zero-frequency response of the Kitaev-coupled oscillator chain, for $N=30$ and representative values of $\Omega$ from regions II and III, in the {\em Hermitian} (normal) limit, with $J=2$ and $\Delta=0$.
Specifically, for each $\Omega$, we focus on the sub-matrix of $\boldsymbol{\chi}(\omega=0)$ with elements $|\boldsymbol{\chi}_{2\ell -1, 2m-1}(0)|$, $\ell,m\in \{1,\hdots,N\}$; the magnitude of the resulting response is indicated by the intensity of the color. For $\Omega>J$, the drive frequency $\omega$ is outside the SIBC spectrum, the response is short-ranged, and relatively constant against small changes in $\Omega$. For $\Delta<\Omega<J$, $\omega$ is in the SIBC, but not the finite spectrum. The response to the linear drive reveals the spatial structure of the pseudo-normal modes associated with the pseudo-eigenvalue equal to the driving frequency.  The response is highly sensitive to small changes in $\Omega$.}
\label{fig: sensitive response}
\end{figure*}

\subsection{The essential role of pseudospectra in linear response}

In this section, we bring to the fore the significance of the pseudospectrum and, by extension, the infinite-size limit, for the linear response behavior of finite-size QBLs. Specifically, given a system described by a QBL generator, we consider its response to a coherent, weak linear drive of the form: 
\begin{align*}
H_{\text{drive}}(t)&=i\sum_j \big(z_j^{*}(t)a_j-z_j(t)a_j^{\dag} \big). 
\end{align*}
The linear (Kubo) response of $\braket{\Phi}$ to such a drive is ultimately encoded in the \textit{response (susceptibility) matrix}, or the \textit{Green's function}, which turns out to be directly related to the unperturbed system's dynamical matrix \cite{ZanardiResponse}:
\begin{equation}
\label{chiomega}
\boldsymbol{\chi}(\omega)=i(\omega \mathds{1}_{2N} -\mb G)^{-1},
\end{equation}
and can be obtained by means of Fourier transforming (and right multiplying by $\bm{\tau}_3$) an array of two-time correlation functions with $\ell m$-th element proportional to: 
\begin{align*}
     \braket{[\Phi_{\ell}(t),\Phi_m^\dag(\tau)]} = \left[e^{-i\mathbf{G}(t-\tau)}\bm{\tau}_3\right]_{\ell, m}.
\end{align*}
This quantity is manifestly state-independent and a function only of the time delay $t'=t-\tau$ (see also Appendix \ref{app:LR}). In the case of a single bosonic species, we have, for instance, 
\begin{align*}
    \boldsymbol{\chi}_{2\ell-1,2m-1}(\omega) = -i\int_{-\infty}^\infty e^{i\omega t'}\Theta(t')\braket{[a_{\ell}(t'),a_m^\dag(0)]}\,dt',
\end{align*}
with $\Theta$ being the Heaviside step function. Thus, mathematically, Eq.\,\eqref{chiomega} asserts that the frequency-space response function is proportional to the {\em resolvent} of $\mb G$, given by $\mathbf{R}(z) \equiv (\mathbf{G}-z \mathds{1}_{2N})^{-1}$, evaluated at a point $z=\omega\in\mathbb{R}$. Physically, $\boldsymbol{\chi}(\omega)$ is a block-matrix whose $\ell m$'th block sets the strength of the response at site $\ell$ to a drive at site $m$ of frequency $\omega$. As such, the susceptibility matrix plays a central role in the input-output theory of multi-mode bosonic systems \cite{PorrasIO,NunnenkampTopoAmp}. For example, the scaling of $\norm{\boldsymbol{\chi}(\omega=0)}$ with system size in type I DM systems, in combination with topological metastability, can be used to identify distinctive zero-frequency signatures of topological bosonic edge modes \cite{Bosoranas,PostBosoranas}. 

\begin{figure*}[t]
\centering
\includegraphics[width=.94\textwidth]{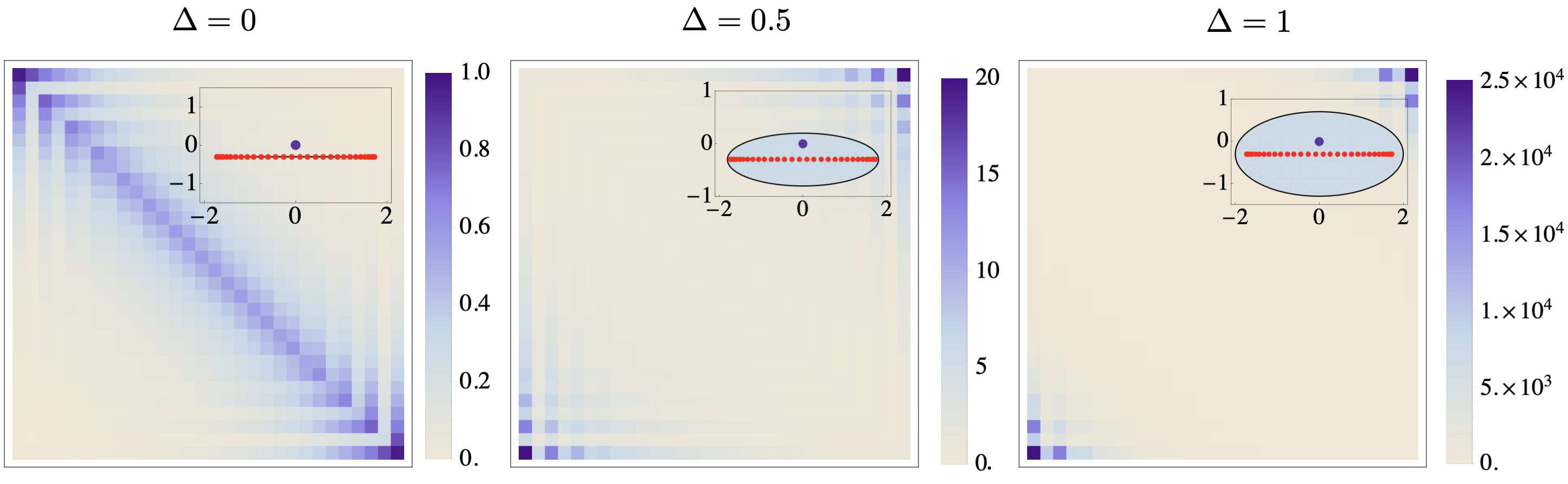}
\caption{Frequency response $|\boldsymbol{\chi}_{2\ell -1, 2m-1}(\omega)|$, $\ell,m\in \{1,\hdots,N\}$, for the bosonic Kitaev chain with onsite dissipation $\kappa$, and increasing level of non-normality, tuned by the magnitude of $\Delta$. Here, $N=30$, $\omega=0$, $\kappa=0.3$, and $J$, $\Delta$ are varied in such a way that $J^2-\Delta^2=3$. While the finite-size spectra are purely real and, by construction, identical in each case, the corresponding complex, SIBC spectra differ (see Sec.\,\ref{interp analysis}). Insets: Probing frequency (purple) relative to the SIBC spectra (black curves and blue interiors) and the finite-size spectra (red). For $\Delta=0$, the probing frequency is outside both the finite- and infinite-size spectra. This is also reflected in the diagonally dominant, off-resonant response function. For $\Delta=0.5$, the probing frequency is outside the range of the finite-size spectrum, but falls near the edge of the SIBC spectrum. Consequently, the response is preferentially end-to-end amplifying. For $\Delta=1$, the probing frequency is again outside the range of finite-size, but deep within the SIBC spectrum. We see this reflected in the very sharp end-to-end amplification, characteristic of nonreciprocal amplifiers. The response changes dramatically upon changes in the pseudospectrum, even when the spectrum remains unchanged.}
\label{fig: nonreciprocal response}
\end{figure*}

Our goal here is to investigate the spatial structure of the response function and how it changes in the different parameter regimes of our models. Whenever $|\omega|>\norm{\mb G}$, with the matrix norm arbitrary, the response is straightforward. In such a case, $\bm{\chi}(\omega)$ can be expanded in a Neumann series in powers of $\mathbf{G}/\omega$ and, given the block-Toeplitz structure of $\mathbf{G}$, one finds that the $n$-th diagonal of $\bm{\chi}(\omega)$ is suppressed on the order of $1/|\omega|^n$. As a result, the response is strongest at the sites of the applied perturbation, with the elements of $\boldsymbol{\chi}(\omega)$ closest to the diagonal being the largest in magnitude and the ones away from the diagonal decaying with distance. More interesting response behaviors emerge when $|\omega|\leq \norm{\mb G}$. This includes (but is not limited to) frequencies embedded within the range of normal-mode frequencies of the system. 

Given the prominent role the resolvent of the dynamical matrix plays in the formulation of linear response theory, pseudospectral theory proves to be {\em essential} for explaining the general structure of $\bm{\chi}(\omega)$. If $\omega$ is in the $\epsilon$-pseudospectrum, then the overall
 response strength characterized by $\norm{\bm{\chi}(\omega)}$ will be at least $1/\epsilon$. If $\mb G$ is nearly normal, this leads to the physically intuitive fact that driving a system near resonance will elicit a large response. However, what is more exotic, and partly anticipated in Ref.\,\cite{SatoOkumaPseudo}, is the abnormally large response of an extremely non-normal system to a highly off-resonant drive: A system is sensitive to a drive at any frequency that belongs to its pseudospectrum. We thus arrive at a remarkable dynamical principle for bulk-translationally-invariant systems: A finite-size QBL exhibits the strongest response when driven at {\em resonant frequencies of the corresponding infinite-size system}, even when these sharply differ from the resonance frequencies of the physical system itself.

Going further, the spatial structure of the response function for a finite system can also be deduced from pseudospectral theory. For $\omega$ in the $\epsilon$-pseudospectrum, with $\epsilon$ suitably small, there will be corresponding a $\epsilon$-pseudo-eigenvector. One way to construct such a pseudo-eigenvector is to take the corresponding exact eigenvector of the infinite-system (which may even be the non-normalizable plane-wave bulk eigenstates) and truncate it to a finite-size. The spatial character of $\bm{\chi}_{\ell m}(\omega)$ will most closely resemble that of the truncated normal mode. On the one hand, if the said mode is a delocalized plane-wave bulk mode with wave-vector $k$, one expects variations of $\bm{\chi}_{\ell m}$ on the order of $|\ell - m|\sim 2\pi/ k$. On the other hand, if extreme non-normality persists and the pseudo-mode is born out of a skin mode (thus, exponentially localized on the boundary), one expects $\bm{\chi}_{\ell m}(\omega)$ to be most pronounced at the corners, due to exponential amplification.

\begin{figure}[t]
\centering
\includegraphics[width=\columnwidth]{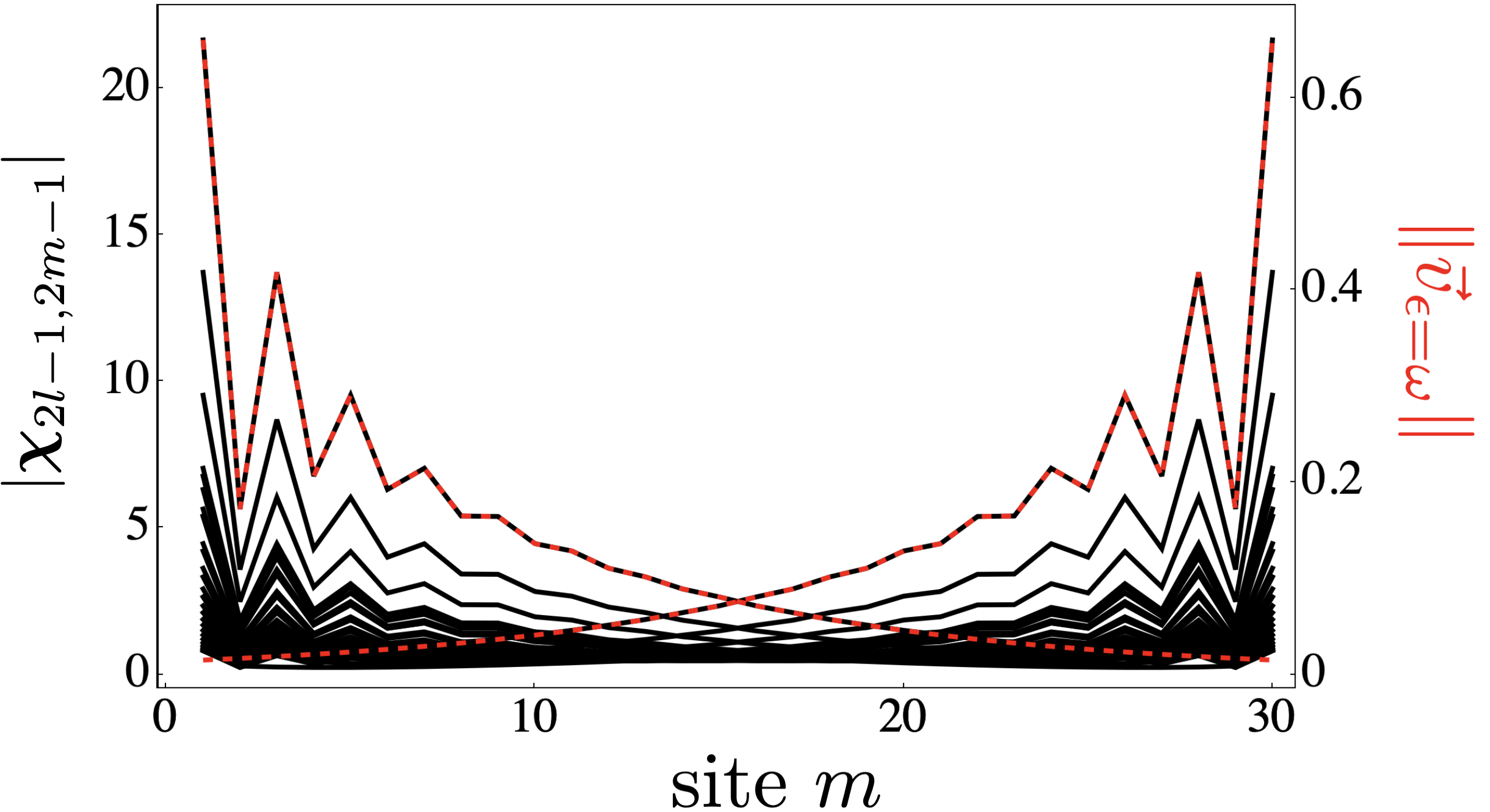}
\vspace*{-3mm}
\caption{Response and pseudospectral modes of the Kitaev-coupled oscillator chain with onsite dissipation $\kappa$, as in Fig.\,\ref{fig: nonreciprocal response}. Here, $J^2-\Delta^2=3,\Delta=0.5,\omega=0,$ and  $\kappa=0.3$. Each black curve corresponds to a row, indexed by $\ell$, of 
$\boldsymbol{\chi}_{2\ell -1, 2m-1}(\omega)$, $\ell,m\in\{1,\hdots,N\}$. The dashed red curves are the $\epsilon$-pseudoeigenvectors, with $\epsilon=\omega$ (scale on the right). The spatial structure of $\boldsymbol{\chi}(\omega)$ reproduces {\em exactly} the structure of the pseudospectral modes.}
\label{fig: pseudospectral mode}
\end{figure}

Let us now see how these general arguments play out in our models. Consider first the zero-frequency response behavior of the KOC in the  normal, $\Delta=0$ limit. While $\omega=0$ is {\em not} in the exact spectrum of the system under OBCs for any $N$, we have that 
$0\in \sigma(\mb G_{\text{SIBC}})$ in the region $0<\Omega<J$. Therefore, in contrast with $\Omega>J$, $\omega=0$ is in the $\epsilon$-pseudospectrum, of every finite system, with $\epsilon$ becoming smaller with $N$. Thus, DC ($\omega=0$) driving elicits a ``pseudoresonant'' response and the susceptibility matrix reveals the spatial structure of the zero-energy pseudo-normal modes. This can be observed in the sharp transformation of the susceptibility matrix between regions $\Omega>J$ and $\Omega<J$, see Fig.\,\ref{fig: sensitive response}. The response in the former region is an off-resonant one, strongest along the diagonals and sharply falling off with distance. In contrast, in the latter region, the response changes drastically with small variations in $\Omega$, reflecting the spatial profiles of the resonant pseudo-modes -- which happen to be delocalized plane waves, with wave-vectors set by $\omega_+(k)=0$ at $\omega=0$. 
Necessarily, this pseudoresonant response will only appear at $\omega=0$ if the system is gapless in the infinite-size limit. For a QBH, this includes systems that have lost, or are on the cusp of losing, thermodynamic stability. Hence, the pseudospectrum can explain finite-size effects even for normal and, in particular, Hermitian systems \footnote{The pseudoresonant behavior we describe may be seen for arbitrary Hermitian dynamical matrices with driving frequencies in the neighborhood of its eigenvalues. However, the Toeplitz structure allows us to interpret the pseudoresonant behavior as the imprint of the infinite-size limit on the finite system.}.

The linear response of nonreciprocal, directionally amplifying systems shows distinctive, unidirectional behavior in the form of exponentially enhanced, end-to-end gain in signal strength \cite{PorrasTopoAmp, NunnenkampTopoAmp, ClerkExpEnhanced, NunnenkampResoration}. Consider for example the BKC under OBCs. This system can display an end-to-end amplifying response when probed with frequencies far outside the convex hull of its finite-size spectrum. In practice, since $\omega \mathds{1}_{2N}-\mathbf{G} = (\omega -i \kappa) \mathds{1}_{2N} -\mathbf{G}'$, with $\mathbf{G}'=\mathbf{G}-i\kappa\mathds{1}_{2N}$ describing the same system as $\mathbf{G}$ with additional uniform dissipation $\kappa$, we can effectively sample the resolvent of the BKC Hamiltonian at complex values.
The large end-to-end response of the BKC can be observed in Fig.\,\ref{fig: nonreciprocal response}. To highlight the importance of the
pseudospectrum, 
we vary the parameters in such a way that the spectrum remains fixed while the pseudospectrum changes significantly. As $\Delta$ is increased, with $J^2-\Delta^2$ kept constant, the response changes dramatically from diagonally dominant to directionally amplifying. The reason is that, as $\Delta$ changes away from zero, the spectrum of the infinite-size system separates from its Hermitian counterpart: It expands from the real line to encompass a solid ellipse in the complex plane, such that the driving frequency $\omega$ becomes pseudoresonant with some approximate eigenmodes of the finite system (see also Fig.\,\ref{fig: pseudospectral mode}) or, equivalently, some exact eigenmodes of the infinite-size system. Similarly, the chiral HN chain in the decoupled limit displays an analogous unidirectional response as the BKC for all pseudoresonant $\omega$. These examples convincingly
demonstrate the insufficiency of the spectrum for characterizing response behavior, while simultaneously bolstering the pseudospectral approach.

In the context of input-output theory, the correspondence between directional amplification and topologically non-trivial band structure has been mathematically explained  as a consequence of the emergence of  exponentially localized singular eigenvectors (corresponding to exponentially vanishing, in system size, singular eigenvalues) \cite{NunnenkampTopoAmp,NunnenkampResoration,PorrasTopoAmp}. It is worth noting that the procedure for obtaining these exponentially localized singular vectors is mathematically equivalent to computing the (matrix 2-norm) pseudoeigenmodes corresponding to the driving frequency of interest \cite{PostBosoranas,SatoOkumaPseudo}. Further to that, however, pseudospectral theory offers a satisfying two-part physical explanation for directional amplification: (1) Injecting a signal at a pseudoresonant (with the resonance becoming exact in the infinite-size limit) frequency results in a strong 
excitation of the corresponding pseudoeigenmode; and (2) Due to the `mid-gap' (in the sense of NH topology \cite{KawabataNHSymTopo}) nature of these pseudoresonances, the resulting excitation is localized on the boundaries, thus manifesting significant end-to-end response. While the first part broadly characterizes our pseudospectral approach to linear response, the second is specific to the NH topological features of the dynamical matrix in type I DM systems. 

In light of these demonstrations, we conclude that the pseudospectrum rather than the spectrum is the 
relevant tool for characterizing the response of QBHs and QBLs under classical linear driving, regardless of Hermiticity, or underlying topology. Furthermore, such an approach is most warranted in highly non-normal regimes whereby the pseudospectrum differs dramatically from the exact spectrum.

\section{Conclusions and Outlook}
\label{sec:conclusions}

In this work, we have focused on one-dimensional Markovian lattice systems of free bosons that enjoy translational invariance `up to boundary conditions' and, by leveraging tools from pseudospectral theory for Toeplitz operators, we have provided a classification of the different dynamical stability regimes that the corresponding QBL generators can support. 
Crucially, we have shown how the notion of the pseudospectrum is instrumental to precisely describe the way in which the dynamical stability properties of an idealized system in the thermodynamic, infinite-size limit may imprint themselves onto the physical, finite-size realizations. 

We find that disagreement in dynamical stability properties -- which we have termed dynamical metastability -- manifests itself as an anomalous transient dynamical window in the physical system. During this transient regime, \textit{the system effectively appears to be in the `wrong' dynamical stability phase, for a period of time that scales with its size}. For DM systems whose infinite-size limiting dynamics is unstable,  
which we refer to as {\em type I DM}, this implies size-dependent, transiently amplifying finite-size dynamics; for {\em type II DM} systems, the infinite-size evolution is dynamically stable, and size-enhanced transient stabilization emerges in the finite-size instead. We have constructed and discussed physically motivated models, which exemplify the key features of both flavors of dynamical metastability. While the emergence of type I and type II dynamical metastability depends, in general, on the interplay between coherent and dissipative contributions in the underlying QBL, we have demonstrated that dissipation is {\em not} necessary for DM and the resulting size-dependent transient dynamics to arise. Notably, our analysis shows that dynamical instabilities can emerge even by coupling two QBHs that are independently fully dynamically, and that the presence of special kinds of spectral degeneracies -- Krein collisions -- plays a crucial role in determining the onset of type II DM in the Hamiltonian setting. As we have seen, while type I DM is intrinsically tied to extreme non-normality of the relevant generator, type II DM is not; a deeper understanding of the mechanism that is responsible for the onset of type II DM, and the possible significance of Krein stability theory in general QBLs, are questions we leave for future investigation.  

Aside from its key role in enabling the above-mentioned dynamical classification scheme, pseudospectral theory turns out to 
be useful in predicting and interpreting the way in which systems described by a QBL behave in the presence of weak, external perturbations, within the validity of linear response theory. For instance, this has direct implications for the behavior of the response matrix in the context of directional and topological amplification \cite{NunnenkampTopoAmp,NunnenkampResoration,PorrasTopoAmp}. We have shown that a finite-size QBL exhibits the strongest response when driven at resonant frequencies of the corresponding infinite-size system; remarkably, when the dynamical generator is highly non-normal, including for type I DM systems, this results in an  anomalous, {\em resonant response to highly off-resonant drives.} 

Dynamical metastability is also an attractive framework for probing the dynamical behavior of quantum correlations. For QBHs, 
we have shown that it reflects directly into the `pre-asymptotic scaling' of bipartite EE. In particular, we have argued that the onset of an anomalous transient (DM type I) phase explains the generation of abnormally large asymptotic EE that has been recently reported for a family of QBHs in a different context
\cite{ClerkInterpolation}. 
Related to that, it would be worth exploring whether dynamical metastability plays an equally impactful role for entanglement generation in QBLs, or whether distinctive entanglement signatures could be tied to {\em topological} dynamical metastability and the emergence of bosonic edge modes the latter entails \cite{PostBosoranas}. Altogether, we expect that the effects of non-normality and the non-trivial pseudospectrum will also prove important in studying many-body bosonic dynamical systems described by explicitly time-dependent quadratic generators or weakly interacting ones, beyond the quadratic regime. We leave such topics to future work.

\section*{acknowledgements}

It is a pleasure to thank Michiel Burgelman, Mattias Fitzpatrick, Joseph Gibson, Carlo Presilla, and Andrew Projansky for stimulating discussions on various topics relevant to this study, and Abhijeet Alase for useful input on Wiener-Hopf matrix factorization techniques. Work at Dartmouth was partially supported by the US National Science Foundation through Grants No. PHY-2013974 and No. OIA-1921199.

\vfill

\appendix

\onecolumngrid

\section{The semi-infinite boundary condition}

\subsection{The associated Toeplitz operator and its physical significance}
\label{app: defineSIBCs}

In this appendix, we will make our notion of SIBCs mathematically rigorous and connect to the theory of block-Toeplitz operators. All mathematical results can be found in Ref.\,\cite{BottcherToe}.

In the main text, SIBCs were defined by bisecting a bi-infinite, translationally invariant system and obtaining two separate half-infinite chains -- one with a left boundary and one with a right boundary. Mathematically, the BIBC dynamical matrix is a (banded) block-Laurent operator defined on the Hilbert space $\mathcal{H}_{-\infty,\infty} \equiv \ell^2(\mathbb{Z}) \otimes \mathbb{C}^{2d}$,  where $\ell^2(\mathbb{Z})$ is the space of square summable maps from $\mathbb{Z}$ (the bi-infinite lattice) to $\mathbb{C}$ and $d$ is the number of degrees of freedom per lattice site. 
The banded-block Laurent moniker means that $\mathbf{G}^\text{BIBC}$ has the form
\begin{align}
    \mathbf{G}^\text{BIBC} = \mathds{1}\otimes \mathbf{g}_0+\sum_{r=1}^R \left(\mathbf{V}^r \otimes \mathbf{g}_r + (\mathbf{V}^\dag)^r\otimes \mathbf{g}_{-r} \right),
\end{align}
with $\mathbf{g}_{r}$ the $2d\times 2d$ internal matrices, $R$ the (finite) coupling range, $\mathds{1}$ the identity operator on $\ell^2(\mathbb{Z})$, and $\mathbf{V}$ ($\mathbf{V}^\dag$) the left- (right-) shift operator on $\ell^2(\mathbb{Z})$ defined in Eq.\,\eqref{Sshift}. The aforementioned bisection procedure is performed by 
removing the coupling between sites $0$ and $1$. Mathematically, this amounts to replacing $\mathbf{V}$ with 
\begin{align}
    \mathbf{T} &= \sum_{j=-\infty,\neq 0}^\infty \vec{e}_j\vec{e}_{j+1}^\dag = \mathbf{T}^\text{R} + \mathbf{T}^\text{L} = \mathbf{V}-\vec{e}_{0}\vec{e}_1^\dag,
    \qquad
    \mathbf{T}^\text{R} \equiv \sum_{j=-\infty}^{-1} \vec{e}_{j}\vec{e}_{j+1}^\dag
    \quad
    \mathbf{T}^\text{L} \equiv \sum_{j=1}^{\infty} \vec{e}_{j}\vec{e}_{j+1}^\dag.
\end{align}
In the above, we have defined the left-shift operator $\mathbf{T}^{\text{L(R)}}$ for the left- (right-) boundary half-infinite chain. It follows that
\begin{align}
\label{eq: GSIBC}
    \mathbf{G}^\text{SIBC} &= \mathbf{G}^\text{LSIBC} + \mathbf{G}^\text{RSIBC},
    \\
   \mathbf{G}^\text{L(R)SIBC}&\equiv  \mathds{1}\otimes \mathbf{g}_0+\sum_{r=1}^R \left((\mathbf{T}^{\text{L(R)}})^r \otimes \mathbf{g}_r + (\mathbf{T}^{\text{L(R)}\dag})^r\otimes \mathbf{g}_{-r} \right) .
\end{align}
The operator $\mathbf{G}^\text{LSIBC}$ is supported on the right-half-space $\ell^2(\mathbb{Z}_{>0}) \otimes \mathbb{C}^{2d}$, with $\mathbb{Z}_{>0}=\{1,2,\ldots\}$, while $\mathbf{G}^\text{RSIBC}$ instead is supported on the left-half-space $\ell^2(\mathbb{Z}_{\leq 0})\otimes \mathbb{C}^{2d}$, with $\mathbb{Z}_{\leq 0} = \{\ldots,-2,-1,0\}$. The former can be straightforwardly interpreted as a Toeplitz operator $\mathbf{G}^\text{LSIBC}:\ell^2(\mathbb{Z}_{>0}) \otimes \mathbb{C}^{2d} \to \ell^2(\mathbb{Z}_{>0}) \otimes \mathbb{C}^{2d}$. Less obviously, the latter can also be mapped to a banded block-Toeplitz operator by defining the map
\begin{align}
    W: \ell^2(\mathbb{Z}_{\leq 0})&\longrightarrow \ell^2(\mathbb{Z}_{>0})
    \\
    \vec{e}_j &\longmapsto \vec{e}_{1-j}.
\end{align}
To simplify notation, we let $\mathbf{G}\equiv\mathbf{G}^\text{LSIBC}$ and $\widetilde{\mathbf{G}} \equiv W \mathbf{G}^\text{RSIBC} W^{-1}$. In matrix notation, we then have 
\begin{align}
    \mathbf{G} &= \begin{bmatrix}
\mathbf{g}_0 & \mathbf{g}_{1} & \cdots &  & 
\\
\mathbf{g}_{-1} & \mathbf{g}_0 & \ddots & & 
\\
\vdots & \ddots & \ddots &  & 
\end{bmatrix},\quad \widetilde{\mathbf{G}} = \begin{bmatrix}
\mathbf{g}_0 & \mathbf{g}_{-1} & \cdots &  & 
\\
\mathbf{g}_{1} & \mathbf{g}_0 & \ddots & & 
\\
\vdots & \ddots & \ddots &  & 
\end{bmatrix} .
\end{align}
In the mathematical literature, the operator $\widetilde{\mathbf{G}}$ is known as the \textit{associated block-Toeplitz operator} \cite{BottcherToe}. If we denote the symbol for $\mathbf{G}$ by $\mathbf{g}(z)$ (Eq.\eqref{Sym}) and the symbol for $\widetilde{\mathbf{G}}$ by $\widetilde{\mathbf{g}}(z)$, then $\widetilde{\mathbf{g}}(z) = \mathbf{g}(1/z)$.

From this analysis, we conclude that:
\begin{itemize}
    \item The dynamical matrix of a semi-infinite chain retaining only the \textit{left} boundary is equivalent to a \textit{banded block-Toeplitz operator}, and
    \item The dynamical matrix of a semi-infinite chain retaining only the \textit{right} boundary is equivalent to the \textit{associated banded block-Toeplitz operator}.
\end{itemize}
In particular, for the pseudospectrum it follows that 
\begin{align}
\label{pspeclim}
\sigma_\epsilon(\mathbf{G}^\text{SIBC}) = \sigma_\epsilon(\mathbf{G}^\text{LSIBC})\cup \sigma_\epsilon(\mathbf{G}^\text{RSIBC})= \sigma_\epsilon(\mathbf{G})\cup \sigma_\epsilon(\widetilde{\mathbf{G}}).
\end{align}
Physically, the semi-infinite spectrum consists of the pseudospectrum (and hence, spectrum when $\epsilon\to 0$) {\em of the left and right boundary chains together}. Remarkably, these two spectra can be {\em different} \cite{BottcherToe}.

\subsection{Semi-infinite spectrum and the Wiener-Hopf factorization}

\label{app:WH}

The spectrum of a block-Toeplitz operator, say $\mathbf{X}$, can be explicitly computed from its symbol, say $\mathbf{x}(z)$. If $\mathbf{x}(z) = \sum_{r\in \mathbb{Z}}\mathbf{x}_r z^r$ is any $m\times m$ matrix valued function such that (i) $\sum_{r\in\mathbb{Z}}\norm{\mathbf{x}_r} < \infty$ for any matrix norm and (ii) $\mathbf{x}(e^{ik})$ is invertible for all $k\in\mathbb{R}$, then $\mathbf{x}(z)$ has a \textit{left Wiener-Hopf factorization} of the form
\begin{align}
\label{eq: LeftWHFactorization}
    \mathbf{x}(z) = \mathbf{A}_{+}(z)\mb D(z)\mb A_{-}(z),
\end{align}
with $\mb A_{+}(z)$ invertible for all $|z|\leq 1$,  $\mb A_{-}(z)$ invertible for all $|z|\geq 1$, and $\mb D(z) = \text{diag}\left(z^{\kappa_1},\ldots,z^{\kappa_{m}}\right)$, with $\kappa_j\in\mathbb{Z}$ the \textit{left partial indices}. Similarly, $\mathbf{x}(z)$ has a \textit{right Wiener-Hopf factorization} of the form 
\begin{align}
    \mathbf{x}(z) = \mathbf{B}_{-}(z)\mb E(z)\mb B_{+}(z),
\end{align}
with $\mb B_{-}(z)$ invertible for all $|z|\geq 1$,  $\mb B_{+}(z)$ invertible for all $|z|\leq 1$, and $\mb E(z) = \text{diag}\left(z^{\mu_1},\ldots,z^{\mu_{m}}\right)$, with $\mu_j\in\mathbb{Z}$ the \textit{right partial indices}. Let us, in addition, define the winding number of $\det \mathbf{x}(z)$ around zero as:
\begin{align}\nu(0, \mb x(z))\equiv \frac{1}{2\pi i}\int^{\pi}_{-\pi}dk\frac{d}{dk}\left[ \log\left(\det(\mb x(e^{ik}))\right)\right]. 
\end{align}
With this, we state a number of key remarks and known results.
\begin{enumerate}
    \item Our convention swaps ``left" and ``right" compared to Ref.\,\cite{BottcherToe}. Our left partial indices are the partial indices of Ref.\,\cite{WienerHopf}.
    \item Both $\{\kappa_j\}$ and $\{\mu_j\}$ are unique up to ordering and are generally distinct. However, they have the same sum. In fact, the sum of either set of indices is equal to $\nu(0,\mb x(z))$. 
    \item The right partial indices of $\mathbf{x}(z)$ are equivalent to the left partial indices of the associated symbol $\widetilde{\mathbf{x}}(z) = \mathbf{x}(1/z)$. 
    \item Condition (i) is always satisfied by symbols of \textit{banded} block-Toeplitz matrices. The banded assumption (i.e., that there exist integers $p$ and $q$ such that $q\leq p$ and $\mathbf{x}_r = 0$ for all $r>p$ and $r<q$) ensures that the sum $\sum_{r\in\mathbb{Z}}\mathbf{x}_r z^r$ has only a finite number of nonzero terms. In this case, we say $\mathbf{x}(z)$ is a \textit{matrix Laurent polynomial}. 
    \item Condition (ii) guarantees that $\mathbf{X}$ is a \textit{Fredholm operator}. That is, an operator with a finite-dimensional kernel and co-kernel. Thus, it may be replaced with the assumption that the block-Toeplitz operator is Fredholm.
    \item A block-Toeplitz matrix whose symbol satisfies (i) and (ii) is invertible if and only if $\kappa_j=0$ for all $j\in\{1,\ldots,m\}$. 
\end{enumerate}
This series of facts can be leveraged to completely characterize the spectrum of an arbitrary banded block-Toeplitz operator. 

\begin{proposition} 
Let $\mathbf{X}$ be a banded block-Toeplitz operator. Then $\lambda\in\sigma(\mathbf{X})$ ($\lambda\in\sigma(\widetilde{\mathbf{X}})$) if and only if (i) $\lambda\in\sigma(\mathbf{x}(e^{ik}))$ for some $k\in\mathbb{R}$ or (ii) $\mathbf{x}(z) - \lambda \mathds{1}_{m}$ has at least one non-zero left (right) partial index. 
\end{proposition}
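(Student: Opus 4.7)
The plan is to reduce the spectral membership problem to an invertibility question and then dispatch it with the Wiener-Hopf toolkit laid out in Appendix~\ref{app:WH}. By definition, $\lambda\in\sigma(\mathbf{X})$ iff $\mathbf{X}-\lambda\mathbf{I}$ fails to be invertible. Since $\mathbf{X}$ is banded block-Toeplitz with symbol $\mathbf{x}(z)$, the shifted operator $\mathbf{X}-\lambda\mathbf{I}$ is again banded block-Toeplitz, with symbol $\mathbf{x}_\lambda(z)\equiv\mathbf{x}(z)-\lambda\mathds{1}_m$. Absolute summability of the Fourier coefficients of $\mathbf{x}_\lambda(z)$ is automatic because it is a matrix Laurent polynomial, so the entire question reduces to whether the remaining Wiener-Hopf hypothesis --- invertibility of $\mathbf{x}_\lambda(e^{ik})$ for every $k\in\mathbb{R}$, equivalently the Fredholm property of $\mathbf{X}-\lambda\mathbf{I}$ --- holds.

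The argument then splits into two exhaustive cases. If there exists some $k\in\mathbb{R}$ with $\lambda\in\sigma(\mathbf{x}(e^{ik}))$, then $\mathbf{x}_\lambda(e^{ik})$ is singular, and by fact~5 of the appendix $\mathbf{X}-\lambda\mathbf{I}$ is not Fredholm, hence not invertible, so $\lambda\in\sigma(\mathbf{X})$; this delivers case~(i) of the proposition. Otherwise, $\mathbf{x}_\lambda(z)$ satisfies both Wiener-Hopf hypotheses and admits a left factorization $\mathbf{A}_+(z)\mathbf{D}(z)\mathbf{A}_-(z)$ with well-defined left partial indices $\{\kappa_j(\lambda)\}$. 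Fact~6 then asserts that $\mathbf{X}-\lambda\mathbf{I}$ is invertible iff all $\kappa_j(\lambda)=0$. Negating, $\lambda\in\sigma(\mathbf{X})$ in this Fredholm regime iff some left partial index of $\mathbf{x}_\lambda(z)$ is nonzero, matching case~(ii). The two sub-cases together establish the biconditional for $\mathbf{X}$.

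The parenthetical statement for $\widetilde{\mathbf{X}}$ follows by transport of structure. The associated symbol is $\widetilde{\mathbf{x}}(z)=\mathbf{x}(1/z)$, and fact~3 of the appendix identifies the left partial indices of $\widetilde{\mathbf{x}}(z)-\lambda\mathds{1}_m$ with the right partial indices of $\mathbf{x}(z)-\lambda\mathds{1}_m$. Since $\widetilde{\mathbf{x}}(e^{ik})=\mathbf{x}(e^{-ik})$ sweeps out the same family of matrices as $\mathbf{x}(e^{ik})$ does when $k$ ranges over $\mathbb{R}$, condition~(i) reads identically for both operators, and applying the just-proven result to $\widetilde{\mathbf{X}}$ yields the claim.

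The genuine work --- and the step I expect to be the main obstacle --- is the Fredholm-to-invertibility equivalence of fact~6, which I am importing as a black box from the Toeplitz literature. A hands-on proof of that equivalence constructs a two-sided inverse of $\mathbf{X}-\lambda\mathbf{I}$ directly from the Wiener-Hopf factors when all partial indices vanish, exploiting the analyticity of $\mathbf{A}_\pm^{\pm 1}(z)$ inside and outside the closed unit disk to guarantee that the corresponding half-plane Toeplitz operators act boundedly and invertibly on $\ell^2$, and conversely exhibits explicit kernel or cokernel elements built from powers of $z$ whenever some $\kappa_j\neq 0$. Granting this input, the remainder of the proof is bookkeeping that stitches together the factorization theorem, the Fredholm criterion, and the $z\mapsto 1/z$ symmetry relating left and right partial indices.
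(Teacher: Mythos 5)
Your proposal is correct and follows essentially the same route as the paper's own proof: reduce $\lambda\in\sigma(\mathbf{X})$ to non-invertibility of $\mathbf{X}-\lambda\mathds{1}$, split exhaustively into the non-Fredholm case (handled by fact~5, giving condition~(i)) and the Fredholm case (handled by fact~6 on vanishing left partial indices, giving condition~(ii)), and transfer to $\widetilde{\mathbf{X}}$ via the left/right index correspondence of fact~3. The only difference is cosmetic --- you sketch how fact~6 would be proved from the Wiener-Hopf factors, whereas the paper, like you, ultimately imports it as a known result.
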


We note that while this result is not explicitly stated in Ref.\,\cite{BottcherToe}, it is a direct consequence of the above listed series of results found within.

\begin{proof}
    We will prove the theorem for $\mathbf{X}$. The result for $\widetilde{\mathbf{X}}$ follows immediately. Fix $\lambda\in\mathbb{C}$ and let $\mathbf{X}' \equiv \mathbf{X}-\lambda \mathds{1}$ and $\mathbf{x}'(z) \equiv \mathbf{x}(z)-\lambda \mathds{1}_m$. Clearly, $\lambda\in \sigma(\mathbf{X})$ if and only if $\mathbf{X}'$ is not invertible.  There are then two cases: (1) $\mathbf{X}'$ is not Fredholm or (2) $\mathbf{X}'$ is Fredholm. For case (1), $\mathbf{X}'$ is not Fredholm if and only if $\mathbf{x}'(z)$ is not invertible  on the unit circle (or equivalently, $\lambda \in \sigma(\mathbf{x}(e^{ik}))$ for some $k\in\mathbb{R}$) by point 5 above. In case (2), $\mathbf{X}'$ is Fredholm if and only if $\mathbf{x}'(z)$ admits a left Wiener-Hopf factorization. Since $\mathbf{X}'$ is not invertible, it must be that it has at least one nonzero left particle index by point 6 above.
\end{proof}

\subsubsection{Specialization of the Wiener-Hopf factorization}

In order to get to the factorized form in Eq.\,\eqref{eq: LeftWHFactorization}, a sequence of steps have to be undertaken \cite{WienerHopf}. Generally, so as to find the full SIBC spectrum, the factorization needs to be applied to all $\lambda\in \mathbb{C}/\sigma(\mb G_{\text{BIBC}})$. However, for a special class of models, that correspond to systems with nearest-neighbor couplings and one internal degrees of freedom, we can derive conditions that allow us to determine the existence of non-trivial partial indices without performing the full factorization procedure. Here, we apply the steps for $\mb g(z)$. $\tilde{\mb g}(z)$ can be treated analogously. 

\smallskip

\noindent 
\begin{theorem} \label{Th: NonTrivPartial}
Let $\mb g'(z)\equiv \mb g(z)-\lambda \mathds{1}_2=\begin{pmatrix}
    a(z) & b(z)\\
    c(z)& d(z)
\end{pmatrix}$ be a $2 \times 2$ matrix-valued symbol, invertible on the unit circle, such that $p=1, q=-1$, and $\det(\mb g'(z))$ has exactly two roots, $z_1$ and $z_2$, inside the unit circle. In order for $\mb g(z)$ to have non-trivial partial indices, namely $\kappa=\{-1,1\}$, at $\lambda$, it is necessary to satisfy
$b(z_2)a(z_1)=a(z_2)b(z_1)$
and $d(z_2)c(z_1)=c(z_2)d(z_1)$. If $z_1\neq z_2$, these conditions are also sufficient. 
\label{thm:wh}
\end{theorem}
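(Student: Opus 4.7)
My plan is to blend a global count of partial indices, coming from winding-number constraints, with a local analysis of the right-null-vector structure of $\mathbf{g}'(z)$ at the interior roots $z_1, z_2$.

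First, I would narrow the admissible partial indices down to two candidates. Since the entries of $\mathbf{g}'(z)$ are Laurent polynomials supported in degrees $q=-1$ to $p=+1$, $\det\mathbf{g}'(z)$ is supported in $[-2,+2]$; with the hypothesis of exactly two interior zeros, the order-two pole at $z=0$ supplied by the $z^{-1}$ coefficient yields $\nu(0,\det\mathbf{g}'(z))=2-2=0$ by the argument principle, so $\kappa_1+\kappa_2=0$. The standard a priori bound $|\kappa_i|\leq\max(|p|,|q|)=1$ for matrix Laurent polynomials (which follows by reducing to the polynomial $z\mathbf{g}'(z)$ of degree 2 and applying the Smith normal form) then leaves only $(\kappa_1,\kappa_2)\in\{(0,0),(-1,+1)\}$, up to reordering. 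In this binary setting the necessity and sufficiency of the theorem's conditions are logically equivalent, so it suffices to establish one implication.

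Second, I would recast the theorem's conditions as null-vector parallelism. At each interior zero $z_i$, $\mathbf{g}'(z_i)$ has rank exactly one, so the right null space is spanned by a unique $\vec v_i$ admitting two equivalent presentations: $\vec v_i\propto(-b(z_i),a(z_i))^T$ when $(a(z_i),b(z_i))\neq(0,0)$, and $\vec v_i\propto(-d(z_i),c(z_i))^T$ when $(c(z_i),d(z_i))\neq(0,0)$. The two algebraic conditions in the theorem are exactly the statements $\vec v_1\parallel\vec v_2$ read from each presentation; listing both ensures uniform coverage of the degenerate sub-cases in which an entire row of $\mathbf{g}'(z_i)$ vanishes identically. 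The claim thereby reduces to $(\kappa_1,\kappa_2)=(-1,+1) \Longleftrightarrow \vec v_1\parallel\vec v_2$.

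Third, I would prove the sufficient direction (which, by Step 1, also secures necessity) by explicit construction for $z_1\neq z_2$. Extending a common null direction $\vec v$ to a constant invertible basis $\mathbf{U}=[\vec v\,|\,\vec w]$, each component of the first column of $\mathbf{g}'(z)\mathbf{U}$ is a Laurent polynomial supported in degrees $[-1,+1]$ vanishing at two distinct interior points; a degree count pins the column down to $z^{-1}(z-z_1)(z-z_2)\,\vec c$ for some constant $\vec c$. Peeling this scalar off via the identity $z^{-1}(z-z_1)(z-z_2)=z\cdot[z^{-2}(z-z_1)(z-z_2)]$, with the bracketed factor analytic and non-vanishing on $|z|\geq 1$ and hence belonging to the outer right piece, and taking $\mathbf{A}_+(z)=[\vec c\,|\,z\mathbf{g}'(z)\vec w]$ together with $\mathbf{A}_-(z)=\mathrm{diag}\bigl(z^{-2}(z-z_1)(z-z_2),\,1\bigr)$, exhibits the factorization $\mathbf{g}'(z)\mathbf{U}=\mathbf{A}_+(z)\,\mathrm{diag}(z,z^{-1})\,\mathbf{A}_-(z)$. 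One then verifies that $\mathbf{A}_+$ is invertible on $|z|\leq 1$ (its determinant is a degree-two polynomial whose two roots coincide with the two \emph{exterior} zeros of $\det\mathbf{g}'(z)$, forced by the winding bookkeeping of Step 1) and $\mathbf{A}_-$ is invertible on $|z|\geq 1$, certifying $(\kappa_1,\kappa_2)=(-1,+1)$.

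The main technical obstacle I anticipate is in Step 1, where one must argue rigorously for the index bound $|\kappa_i|\leq 1$: while intuitively clear from the coefficient support, a clean proof requires invoking the Smith form for $z\mathbf{g}'(z)$ and carefully tracking how the invariant factors distribute between the interior and exterior of the unit disk. A secondary subtlety is confirming that the ``partner'' column construction in Step 3 really yields $\mathbf{A}_+$ invertible on the closed unit disk, which is where the winding-number bookkeeping from Step 1 pays off by guaranteeing that the two roots of $\det\mathbf{A}_+$ sit outside. Finally, the exclusion $z_1\neq z_2$ in the sufficiency direction is essential: a double interior root admits a Jordan chain, and an independent compatibility condition on the chain -- not captured by mere null-vector parallelism -- is what then controls whether the partial indices are trivial or not.
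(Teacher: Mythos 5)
Your Step~3 is a clean piece of work, and it is genuinely different from the paper's route: the paper runs the algorithmic Wiener--Hopf column reduction (successively removing the interior zeros $z_2$, then $z_1$, and sorting the possible linear-dependence patterns into four cases), whereas you exhibit the factorization $\mathbf{g}'(z)\mathbf{U}=\mathbf{A}_+(z)\,\mathrm{diag}(z,z^{-1})\,\mathbf{A}_-(z)$ in closed form from a common right null vector, with the determinant bookkeeping ($\det\mathbf{A}_+\propto \det(z\mathbf{g}'(z))/\left((z-z_1)(z-z_2)\right)$, whose roots are the exterior zeros) certifying invertibility of $\mathbf{A}_+$ on the closed disk. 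However, the proposal contains a fatal logical error: necessity and sufficiency are \emph{converse} implications, and the binary reduction of Step~1 does not make them equivalent. Knowing that the indices can only be $\{0,0\}$ or $\{-1,+1\}$ tells you nothing about which alternative occurs when the conditions fail; proving ``conditions $\Rightarrow$ nontrivial indices'' leaves ``nontrivial indices $\Rightarrow$ conditions'' (equivalently: conditions fail $\Rightarrow$ one must produce a \emph{canonical} factorization, or prove invertibility of the half-infinite Toeplitz operator) completely open. That direction is exactly what the paper's cases (i) and (ii) supply, and it is the direction actually used in the paper's applications (candidate rapidities are excluded from the SIBC spectrum because they violate the conditions). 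So your proposal proves at most half the theorem.

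There is also a secondary gap in Step~2: the bridge ``conditions $\Leftrightarrow \vec v_1\parallel\vec v_2$'' fails in cross-degenerate cases. If row 1 of $\mathbf{g}'$ vanishes identically at $z_1$ while row 2 vanishes at $z_2$, both equations hold vacuously (every product contains a vanishing factor), yet $\vec v_1$ (read off row 2 at $z_1$) and $\vec v_2$ (read off row 1 at $z_2$) need not be parallel, and your construction then has no common null vector to start from. This is not vacuous: $\mathbf{g}'(z)=z^{-1}\begin{pmatrix}(z-z_1)(z+10) & z-z_1\\ z-z_2 & z-z_2\end{pmatrix}$ with $0<|z_1|,|z_2|<1$, $z_1\neq z_2$, satisfies every stated hypothesis ($\det\mathbf{g}'=z^{-2}(z-z_1)(z-z_2)(z+9)$, so exactly two interior roots) together with both conditions, yet its null vectors $(1,-1)^T$ and $(1,-(z_2+10))^T$ are not parallel and its partial indices are in fact trivial (the paper's own algorithm lands in case (i); equivalently, the associated semi-infinite Toeplitz operator has trivial kernel and cokernel). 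So null-vector parallelism---not the conditions as literally written---is what your argument characterizes; this discrepancy admittedly exposes a looseness in the theorem's own phrasing (the paper's translation of its case analysis into those two equations glosses over the same degeneracy), but it means Step~2 cannot serve as a verbatim bridge between the theorem's hypotheses and your Step~3, and any repair of the necessity direction must confront precisely these degenerate configurations.
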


\begin{proof}
For a symbol invertible on the unit circle, the WH procedure starts off by computing the roots of its determinant polynomial $\det(\mb g(z)-\lambda \mathds{1})$, and selecting the ones inside the unit circle. If there are exactly two such roots, the accessible partial indices at $\lambda$ are limited to $\kappa^{(0)}_i=\{0,0\}$ or $\kappa^{(0)}_i=\{-1,1\}$, with $\kappa^{(0)}_i$ obtained as follows. We define:
\begin{align*}
    \mb A^{(2)}_{+}\equiv z(\mb g(z)-\lambda \mathds{1}_2), \quad\quad  \mb D^{(2)}\equiv \mathds{1}_2, \quad \quad \mb A^{(2)}_{-}\equiv \mathds{1}_2.
\end{align*}
We then successively remove each zero $z_i$ from $\mb A^{(i)}_{+}(z)$, obtaining $z(\mb g(z)-\lambda\mathds{1}_2)=\mb A^{(1)}_{+}(z)\mb D^{(1)}(z)\mb A^{(1)}_{-}(z)$, and finally $z(\mb g(z)-\lambda \mathds{1}_2)=\mb  A^{(0)}_{+}(z)\mb D^{(0)}(z)\mb A^{(0)}_{-}(z)$, for the $2\times 2$ case. Here, \begin{align*}
\mb A_{+}^{(i-1)}=\mb A^{(i)}_{+}\mb U_i^{-1}\mb \Pi_i, \quad \mb A_{-}^{(i-1)}=\mb \Pi_i\mb V_i\mb A_{-}^{(i)},\quad \mb D^{(i-1)}=\mb \Pi_i\mb D^{(i)}(\mathds{1}_2-\vec{e}_{s}^{\dag}\vec{e}_{s_i}+z\vec{e}_{s}^{\,\dag}\vec{e}_{s_i})\mb \Pi _i,
\end{align*} 
where 
$$\mb V_i=\mathds{1}_2-\frac{z_i}{z}\vec{e}_{s_i}^{\dag}\vec{e}_{s_i}+\alpha_i z^{\kappa_{s_i}^{(i)}-\kappa_{m_i}^{(i)}}\vec{e}_{m_i}^{\dag}\vec{e}_{s_i},\qquad \mb U_i^{-1}=\mathds{1}_2-\vec{e}_{s_i}^{\dag}\vec{e}_{s_i}+(z-z_i)\vec{e}_{s_i}^{\dag}\vec{e}_{s_i}+\alpha_i\vec{e}_{m_i}^{\dag}\vec{e}_{s_i},$$ 
and $\mb \Pi_i$ reshuffles the elements of $\mb D^{(i)}$, which are of the form $\{z^{\kappa_1^{(i)}},z^{\kappa_2^{(i)}}\}$ on the diagonal (and $0$ otherwise), so that $\kappa_1^{(i)}\geq \kappa_2^{(i)}$. 

Thus, the end-goal is to obtain the structure of $\mb D^{(0)}$, and hence $\kappa_i^{(0)}$. Since $\det \mb A^{(2)}_{+}(z_2)=0$, $\mb A^{(2)}_{+}(z_2)$ has a column which is linearly dependent on previous columns. The same will be true for $\mb A^{(1)}(z_1)$. Let $\mb A^{(i)}_{+,n}$ denote the $n$th column of $\mb A^{(i)}_{+}$, and $\mb A^{(i)}_{+,nm}$ the $nm$-th element. Since our matrices are $2\times 2$, either $\mb A^{(i)}_{+,2}(z_i)$ is linearly dependent on $\mb A^{(i)}_{+,1}(z_i)$, i.e., $\mb A^{(i)}_{+,2}(z_i)=\alpha_i\mb A^{(i)}_{+,1}(z_i)$, or $\mb A^{(i)}_{+,1}(z_i)=[0,0]^{T}$. Let $s_i$ denote the linearly dependent column of $\mb A_{+}^{(i)}(z_i)$ and $m_i$ the independent one. Then, we have $\alpha_i=\frac{\mb A^{(i)}_{+,12}(z_i)}{\mb A^{(i)}_{+,11}(z_i)}=\frac{\mb A^{(i)}_{+,22}(z_i)}{\mb A^{(i)}_{+,21}(z_i)}$, if $s_i=2$ and $\alpha_i=0$, if $s_i=1$.

There are four possible cases to consider: 

(i) $\mb A^{(2)}_{+,2}(z_2)$ is linearly dependent on $\mb A^{(2)}_{+,1}(z_2)$,  $\mb A^{(1)}_{+,2}(z_1)$ is linearly dependent on  $\mb A^{(1)}_{+,1}(z_1)$; 

(ii)  $\mb A^{(2)}_{+,1}(z_2)=[0,0]^T$,   $\mb A^{(1)}_{+,2}(z_1)$ is linearly dependent on  $\mb A^{(1)}_{+,1}(z_1)$; 

(iii)  $\mb A^{(2)}_{+,1}(z_2)=[0,0]^T$ and  $\mb A^{(1)}_{+,1}(z_1)=[0,0]^T$; 

(iv) $\mb A^{(2)}_{+,2}(z_2)$ is linearly dependent on $\mb A^{(2)}_{+,1}(z_2)$,  $\mb A^{(1)}_{+,1}(z_1)=[0,0]^T$.  

\noindent 
Cases (i) and (ii) yield trivial factorizations $\kappa^{(0)}_i=0,0$, while (iii) and (iv) give non-trivial partial indices $\kappa^{(0)}_i=1,-1$.  Let $\mb g(z)-\lambda \mathds{1}_2\equiv\begin{pmatrix}
    a&b\\
    c&d
\end{pmatrix}$. These two non-trivial cases translate into the following conditions on the roots of the determinant equation and the matrix elements of the symbol; for case (iii), we have
$a(z_1)=c(z_1)=0$ and 
$a(z_2)=c(z_2)=0,$
while for case (iv), it follows that  
$b(z_2)a(z_1)=a(z_2)b(z_1)$
and $d(z_2)c(z_1)=c(z_2)d(z_1)$, with $z_1\neq z_2$.
\end{proof}

\begin{cor}
 Let $\mb g'(z)\equiv \mb g(z)-\lambda \mathds{1}_2=\begin{pmatrix}
    a(z) & b(z)\\
    c(z)& d(z)
\end{pmatrix}$, with $\mb g(z)$ a symbol of a QBH with $R=1$ and $d=1$. $\mb g(z)$ has non-trivial partial indices, namely $\kappa=\{-1,1\}$, at a point $\lambda$, only if
$b(z_2)a(z_1)=a(z_2)b(z_1)$
and $d(z_2)c(z_1)=c(z_2)d(z_1)$. If $z_1\neq z_2$, these conditions are also sufficient. 
\end{cor}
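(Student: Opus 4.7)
The plan is to derive the corollary as a direct specialization of Theorem \ref{Th: NonTrivPartial}, so the proof amounts to verifying that the hypotheses of the theorem are met for the symbol of a QBH with $R=1$ and $d=1$, and then invoking it. First I would recall from Sec.\,\ref{Toeplitz} that a single-mode ($d=1$), nearest-neighbor ($R=1$) QBH has internal matrices $\mb h_{\pm 1}, \mb h_0 \in \mathbb{C}^{2\times 2}$ acting on the local Nambu array, and that the dynamical matrix symbol takes the form
\[
\mb g(z) \;=\; \mb g_{-1}\,z^{-1} + \mb g_0 + \mb g_1\,z,
\]
which is a $2\times 2$ matrix Laurent polynomial with precisely $p=1$ and $q=-1$, matching the hypothesis of Theorem \ref{Th: NonTrivPartial}. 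The invertibility of $\mb g'(z) \equiv \mb g(z)-\lambda\mathds{1}_2$ on the unit circle for $\lambda\notin\sigma(\mb G^{\text{BIBC}})$ follows automatically, since $\mb g(e^{ik})$ is just the Bloch dynamical matrix whose spectrum is the BIBC rapidity spectrum.

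Second, I would check the root-counting hypothesis: that $\det(\mb g'(z))$ has exactly two roots inside the unit circle. Multiplying through by $z^2$, we have that $z^2\det(\mb g'(z))$ is a polynomial of degree at most four in $z$, so it has at most four roots in $\mathbb C$. Here I would invoke the particle-hole/charge-conjugation symmetry $\mb g_r = -\bm\tau_1 \mb g_{-r}^*\bm\tau_1$, which translates into a functional equation relating $\det(\mb g'(z))$ evaluated at $z$ and at $1/z^*$, pairing roots inside the unit disk with roots outside it. Generically this forces the four roots to split as two inside and two outside the unit circle; exceptional cases (roots on the unit circle) are precisely excluded by assuming $\lambda\notin\sigma(\mb G^{\text{BIBC}})$, and degenerate cases with more than two roots inside the disk are non-generic and would need to be excluded as a side condition. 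With the root count in hand, Theorem \ref{Th: NonTrivPartial} applies verbatim: the necessary condition for the two left partial indices at $\lambda$ to be $\{-1,+1\}$ rather than $\{0,0\}$ is
\[
b(z_2)\,a(z_1) = a(z_2)\,b(z_1), \qquad d(z_2)\,c(z_1)=c(z_2)\,d(z_1),
\]
and, when $z_1\neq z_2$, these conditions are also sufficient. The case of non-trivial indices of absolute value $>1$ is ruled out by the exhaustiveness of the four cases (i)--(iv) analyzed in the proof of the theorem, since with only two roots inside the unit circle the inductive Wiener--Hopf procedure terminates in at most two steps.

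The main obstacle, in my view, is the second point: carefully justifying that the relevant determinant polynomial has exactly two roots inside the unit disk under the stated hypotheses. This is what bridges the abstract theorem, whose assumption on the root distribution is put in by hand, to the concrete QBH context where it must be derived from symmetry. Once that is settled, the corollary follows by direct quotation of Theorem \ref{Th: NonTrivPartial} with $\mb x(z) = \mb g(z)$, and no further computation is required; the identification $\mb A^{(i)}_{+}$ entries with $a,b,c,d$ evaluated at $z_i$ is inherited from the theorem's proof without modification.
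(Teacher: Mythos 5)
Your reduction to Theorem \ref{Th: NonTrivPartial} is structurally the same as the paper's, and you have correctly located the crux: proving that $\det(\mb g'(z))$ has \emph{exactly} two roots inside the unit circle. However, the symmetry argument you propose for that step does not work. For a QBH symbol, pseudo-Hermiticity reads $\mb g(1/z^*)=\boldsymbol{\tau}_3\,\mb g(z)^{\dag}\boldsymbol{\tau}_3$, which gives $\det\bigl(\mb g(1/z^*)-\lambda\mathds{1}_2\bigr)=\overline{\det\bigl(\mb g(z)-\lambda^{*}\mathds{1}_2\bigr)}$; hence $z\mapsto 1/z^{*}$ pairs roots of $\det(\mb g(z)-\lambda\mathds{1}_2)$ with roots of $\det(\mb g(z)-\lambda^{*}\mathds{1}_2)$, a \emph{different} polynomial whenever $\lambda\notin\mathbb{R}$. (Similarly, the charge-conjugation property relates $\lambda$ to $-\lambda^{*}$ under $z\mapsto z^{*}$, and the two symmetries combined relate $\lambda$ to $-\lambda$ under $z\mapsto 1/z$; none of these produces an interior/exterior pairing of roots of the \emph{same} polynomial unless $\lambda$ is real.) This is not a technicality: the corollary is invoked in the paper at genuinely complex $\lambda$ -- in region II of the KOC the case-(iv) candidate points are purely imaginary -- so your pairing is unavailable exactly where the result is used. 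Your fallback, that the two-inside/two-outside split is ``generic'' with degenerate cases ``excluded as a side condition,'' is also weaker than the statement being proved, which needs the exact count for every admissible $\lambda$.

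The paper closes this step with a winding-number argument rather than root pairing. Writing $\det(\mb g(z)-\lambda\mathds{1}_2)=z^{-2}P(z)$ with $\deg P\le 4$, the argument principle gives $\nu\bigl(0,\mb g(z)-\lambda\mathds{1}_2\bigr)=n_{\mathrm{in}}-2$, where $n_{\mathrm{in}}$ is the number of roots of $P$ inside the unit disk and invertibility on the circle (i.e., $\lambda\notin\sigma(\mb G^{\mathrm{BIBC}})$) guarantees none lie on it. Since this winding number equals the sum of the partial indices and vanishes for QBH symbols, one concludes $n_{\mathrm{in}}=2$ exactly, for arbitrary complex $\lambda$ and with no genericity caveat. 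That single fact -- $\nu(0,\mb g(z)-\lambda\mathds{1}_2)=0$ for QBHs -- is what your proposal is missing; the remaining ingredients you list (the $p=-q=1$ Laurent form of the symbol, invertibility on the circle, the observation that with only two interior roots the Wiener--Hopf recursion terminates so the indices can only be $\{0,0\}$ or $\{1,-1\}$, and the verbatim quotation of the theorem) all match the paper's proof.
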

\begin{proof}
    For a QBH with $R=1$ and $d=1$, the symbol is a $2\times 2$ matrix with $p=-q=1$, whose determinant polynomial is a fourth-order one, with at most 2 roots inside the unit circle, due to the symmetry properties of the Hamiltonian. Moreover, since $\nu (0, \mb g(z))=0$ for QBHs, and the symbol is invertible on the unit circle, there have to be exactly $2$ roots, $z_1$ and $z_2$, inside the unit circle. The rest follows from Theorem \ref{Th: NonTrivPartial}.
\end{proof}

\subsubsection{Application to model systems} 
\label{App: WHApplication}

For the coupled HN chain, the symbol can be computed as follows:
\begin{align*}
  \mb A(z)-\omega \mathds{1}_2=\begin{pmatrix}
      -\kappa-\omega- J_R z-J_L/z&w\\
      -w&-\kappa-\omega- J_L z-J_R/z
  \end{pmatrix},
\end{align*}
whereas for the KOC, we have
\begin{align}
\mb g(z)-\omega \mathds{1}_2=\begin{pmatrix}
\Omega-\omega+\frac{iJ}{2}\left(1/z-z\right)&\frac{i\Delta}{2}(1/z+z)\\
\frac{i\Delta}{2}(1/z+z)&-\Omega-\omega+\frac{iJ}{2}\left(1/z-z\right).
\end{pmatrix}
\end{align}
The two symbols are unitarily equivalent (when $\kappa=0$), thus we will work with the latter one. 

In region II of the KOC, the condition for case (iii) yields $\omega=\Omega \pm J$. These values are real and already accounted for by the bulk spectrum. The condition for case (iv) is satisfied for $\omega=\pm \tfrac{1}{\Delta} \sqrt{(J^2-\Delta^2)(\Delta^2-\Omega^2)}$. While these points are purely imaginary in region II, they make $z_1(\omega)=z_2(\omega)$. 
Thus, for one of these two points, we have to go through the full WH factorization procedure (the result will be the same for the other by symmetry). Doing so yields trivial partial indices. Therefore, region II is stable under SIBCs. An exactly analogous calculation shows that the GHC with imaginary hopping remains stable under SIBCs.

We also apply the conditions for a non-trivial SIBC spectrum to the symbol of the BKC with real hopping given by Eq. \eqref{ClerkInterp}:
\begin{align}
\mb g(z)-\omega\mathds{1}_2=\left(
\begin{array}{cc}
 \frac{g+i J}{2 z}+\frac{1}{2} z (g-i J)-\omega  & \frac{i\Delta}{2}(z+1/z)\\
\frac{i\Delta}{2}(z+1/z)& \frac{-g+i J}{2 z}-\frac{1}{2} z
   (g+i J)-\omega  \\
\end{array}
\right).
\end{align}
While this model is dynamically stable in both reciprocal ($g>\Delta$) and nonreciprocal ($g<\Delta$) regimes under OBCs \cite{ClerkInterpolation}, we find that the SIBC spectrum is the interior of the ellipse defined by the bulk spectrum in the nonreciprocal regime, but is confined to the real axis in the reciprocal regime. Therefore, while the nonreciprocal phase is type I DM, the reciprocal phase is fully dynamically stable and non-DM.

\section{Proof that the limit of finite-size spectra is contained in the semi-infinite limit}
\label{BoundProof}

In this appendix, we establish a more general result about the relationship between finite and infinite-size spectra for block-Toeplitz systems, from which the bound in Eq.\,\eqref{gap constraint} follows as an immediate corollary. In particular, we prove the following theorem (which may be thought of as an $\epsilon\to 0$ counterpart of Eq.\,\eqref{pspeclim}).

\begin{theorem}
\label{speclimthm}
Let $\{\mathbf{X}_N\}$ denote a sequence of block-Toeplitz matrices whose corresponding symbol is continuous on the unit circle. Then
\begin{align}
    \lim_{N\to\infty}\sigma(\mathbf{X}_N) \subseteq \sigma(\mathbf{X})\cup\sigma(\widetilde{\mathbf{X}}),
\end{align}
where $\mathbf{X}$ is the block-Toeplitz operator obtained in the limit $N\to\infty$ and $\widetilde{\mathbf{X}}$ is the associated Toeplitz operator.
\end{theorem}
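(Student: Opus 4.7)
The plan is to reduce the statement directly to the pseudospectral convergence result already recalled in the paper as Eq.\,\eqref{wellbehavedps}. Concretely, I would suppose $\lambda \in \lim_{N\to\infty}\sigma(\mathbf{X}_N)$. By the definition of the uniform limiting set recalled in the footnote of Sec.\,\ref{Toeplitz}, there exists a sequence $\lambda_N\in\sigma(\mathbf{X}_N)$ with $\lambda_N\to\lambda$. For each $N$, I would fix a unit eigenvector $\vec{v}_N$ of $\mathbf{X}_N$ with eigenvalue $\lambda_N$. The trivial identity
\[ (\mathbf{X}_N - \lambda \mathds{1})\vec{v}_N = (\lambda_N - \lambda)\vec{v}_N \]
then yields $\|(\mathbf{X}_N - \lambda\mathds{1})\vec{v}_N\| = |\lambda_N - \lambda| \to 0$, so that each $\vec{v}_N$ is a bona fide $|\lambda_N-\lambda|$-pseudoeigenvector of $\mathbf{X}_N$ at the limiting target $\lambda$.

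Next, for any fixed $\epsilon>0$, the above provides an $N_0(\epsilon)$ such that $\lambda\in\sigma_\epsilon(\mathbf{X}_N)$ for all $N\geq N_0(\epsilon)$. Hence, witnessed by the constant sequence $\lambda_N'\equiv\lambda$, one has $\lambda\in\lim_{N\to\infty}\sigma_\epsilon(\mathbf{X}_N)$ for every $\epsilon>0$. Since the $\epsilon$-pseudospectra are nested in $\epsilon$, and so are their uniform limiting sets, taking the intersection over $\epsilon>0$ gives $\lambda\in\lim_{\epsilon\to 0}\lim_{N\to\infty}\sigma_\epsilon(\mathbf{X}_N)$.

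At this point I would invoke Eq.\,\eqref{wellbehavedps}, whose general form for banded block-Toeplitz matrices with a symbol continuous on the unit circle is the standard pseudospectral convergence result of Ref.\,\cite{BottcherToe}, together with the SIBC decomposition $\sigma(\mathbf{X}^{\text{SIBC}})=\sigma(\mathbf{X})\cup\sigma(\widetilde{\mathbf{X}})$ established in Eq.\,\eqref{pspeclim}. This immediately yields $\lambda\in\sigma(\mathbf{X})\cup\sigma(\widetilde{\mathbf{X}})$, which is the desired inclusion, and hence the bound Eq.\,\eqref{gap constraint} as an immediate corollary (by taking real parts and suprema).

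The main obstacle is the careful bookkeeping of the iterated limits: specifically, verifying that the intersection over $\epsilon>0$ of $\lim_{N\to\infty}\sigma_\epsilon(\mathbf{X}_N)$ really coincides with $\lim_{\epsilon\to 0}\lim_{N\to\infty}\sigma_\epsilon(\mathbf{X}_N)$, which amounts to a monotonicity-plus-nested-set argument that one should spell out to avoid a circular use of the uniform-limit definition. Everything else is a direct restatement of the elementary observation that an exact eigenvector of $\mathbf{X}_N$ at $\lambda_N$ is an approximate eigenvector at any nearby $\lambda$, so that no machinery beyond Eq.\,\eqref{wellbehavedps} and the definition of the uniform limiting set is required.
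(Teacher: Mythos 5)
Your proposal is correct, but it follows a genuinely different route from the paper's own proof. The paper proceeds by contradiction using only the stability criterion for truncated block-Toeplitz sequences (Lemma~\ref{BottcherLemma}, adapted from Theorem 6.9 of Ref.~\cite{BottcherToe}): if $\lambda=\lim_{N\to\infty}\lambda_N$ with $\lambda_N\in\sigma(\mathbf{X}_N)$ were outside $\sigma(\mathbf{X})\cup\sigma(\widetilde{\mathbf{X}})$, then $\{\mathbf{X}_N-\lambda\mathds{1}\}$ would be a stable sequence, contradicting the elementary resolvent bound $\norm{(\mathbf{X}_N-\lambda\mathds{1})^{-1}}\geq 1/|\lambda_N-\lambda|\to\infty$. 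You instead reduce the claim to the full pseudospectral convergence theorem -- the general form of Eq.~\eqref{wellbehavedps} combined with the decomposition in Eq.~\eqref{pspeclim} -- after the correct observation that $\sigma(\mathbf{X}_N)\subseteq\sigma_\epsilon(\mathbf{X}_N)$ places $\lambda$ in $\lim_{N\to\infty}\sigma_\epsilon(\mathbf{X}_N)$ for every $\epsilon>0$. This is not circular, since the asymptotic-pseudospectra theorem of Ref.~\cite{BottcherToe} is established there independently of the present statement, and it buys brevity: exact eigenvalues are pseudoeigenvalues, and a nested-intersection step finishes the argument. What it costs is self-containedness and generality: the theorem you invoke is strictly stronger than Lemma~\ref{BottcherLemma} (its proof requires norm convergence of the inverses of the finite sections and, in the block case, the fact that resolvent norms cannot be locally constant, which is special to the induced $2$-norm), whereas the paper's argument is norm-agnostic and needs no pseudospectral machinery at all. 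Two bookkeeping repairs are needed in your write-up: (i) the constant sequence $\lambda_N'\equiv\lambda$ does not satisfy the footnote's requirement of membership for \emph{all} $N$, since $\lambda\in\sigma_\epsilon(\mathbf{X}_N)$ is only guaranteed for $N\geq N_0(\epsilon)$; simply reuse the original sequence $\lambda_N\in\sigma(\mathbf{X}_N)\subseteq\sigma_\epsilon(\mathbf{X}_N)$, which already converges to $\lambda$; (ii) the theorem assumes a merely continuous (not necessarily banded) symbol, so the pseudospectral convergence result must be invoked for continuous block symbols -- it does hold at that generality -- and the final step is cleanest as $\lambda\in\bigcap_{\epsilon>0}\bigl[\sigma_\epsilon(\mathbf{X})\cup\sigma_\epsilon(\widetilde{\mathbf{X}})\bigr]=\sigma(\mathbf{X})\cup\sigma(\widetilde{\mathbf{X}})$, the last equality holding because both families of sets are nested in $\epsilon$.
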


The stability gap bound in Eq.\,\eqref{gap constraint} then follows immediately from the fact that $\sigma(\mathbf{G}^\text{SIBC}) = \sigma(\mathbf{G})\,\cup \,\sigma (\widetilde{\mathbf{G}})$ (in the notation of Eq.\,\eqref{pspeclim}), along with the observation that the symbol of a banded block-Toeplitz matrix is always continuous on the unit circle. 

To prove Theorem\,\ref{speclimthm}, we require a Lemma regarding the \textit{stability} (not in the dynamical sense) of the sequence $\{\mathbf{X}_N\}$. Recall that a sequence of matrices $\{\mathbf{A}_N\}_{N=1}^\infty$ is said to be \textit{stable} if there exists an $N_0$ such that $\displaystyle\sup_{N\geq N_0}\norm{\mathbf{A}_N^{-1}} <\infty$. Here, the norm is the induced matrix 2-norm. 

\begin{lemma}[Adapted from Theorem 6.9 in Ref.\,\cite{BottcherToe}]\label{BottcherLemma} Let $\{\mathbf{X}_N\}$ denote a sequence of $mN\times mN$ (with $m$ the block size) block-Toeplitz matrices, whose corresponding symbol is continuous on the unit circle. Then the sequence $\{\mathbf{X}_N\}$ is stable if and only if both the corresponding Toeplitz operator $\mathbf{X}$ and associated Toeplitz operator $\widetilde{\mathbf{X}}$ are invertible.
\end{lemma}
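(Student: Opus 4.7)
The plan is to prove the two implications separately: necessity via a weak-limit argument that identifies $\mathbf{X}$ and $\widetilde{\mathbf{X}}$ as the only obstructions to invertibility, and sufficiency via construction of an approximate inverse built from $\mathbf{X}^{-1}$ and $\widetilde{\mathbf{X}}^{-1}$. Throughout, I identify $\mathbf{X}_N$ with the compression $(P_N \mathbf{X} P_N)|_{P_N \mathcal{H}}$, where $\mathcal{H} \equiv \ell^2(\mathbb{Z}_{>0}) \otimes \mathbb{C}^m$ and $P_N$ is the orthogonal projection onto the first $N$ blocks; I write $Q_N \equiv I - P_N$. The block reflection $W_N(v_1, \ldots, v_N) = (v_N, \ldots, v_1)$ on $P_N \mathcal{H}$ is unitary, and a direct block-index computation using the Toeplitz structure gives $W_N \mathbf{X}_N W_N = (\widetilde{\mathbf{X}})_N$, so stability of $\{\mathbf{X}_N\}$ is equivalent to stability of $\{(\widetilde{\mathbf{X}})_N\}$. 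This reduces every claim involving $\widetilde{\mathbf{X}}$ to the corresponding claim for $\mathbf{X}$.

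For necessity, I would assume $\|\mathbf{X}_N^{-1}\| \leq C$ for $N \geq N_0$ and proceed in two steps. Injectivity of $\mathbf{X}$: for $x \in \ker \mathbf{X}$, the identity $\mathbf{X}_N P_N x = -P_N \mathbf{X} Q_N x$, together with boundedness of $\mathbf{X}$ and $Q_N x \to 0$, gives $\|\mathbf{X}_N P_N x\| \to 0$; stability then forces $\|P_N x\| \to 0$, yielding $x = 0$. Surjectivity of $\mathbf{X}$: for compactly supported $y$, the vector $x_N \equiv \mathbf{X}_N^{-1} P_N y \in P_N \mathcal{H}$ satisfies $\|x_N\| \leq C \|y\|$; extracting a weakly convergent subsequence $x_N \rightharpoonup x$ and testing the identity $\mathbf{X} x_N - y = (\mathbf{X} - \mathbf{X}_N) x_N + (P_N - I) y$ against any compactly supported vector shows $\mathbf{X} x = y$. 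A density argument and the bound $\|x\| \leq C \|y\|$ extend surjectivity to all of $\mathcal{H}$, producing a bounded two-sided inverse. Applying the same argument to $\{(\widetilde{\mathbf{X}})_N\}$ yields invertibility of $\widetilde{\mathbf{X}}$.

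For sufficiency, assume both $\mathbf{X}$ and $\widetilde{\mathbf{X}}$ are invertible. I would construct an approximate inverse of $\mathbf{X}_N$ by blending the compressions $P_N \mathbf{X}^{-1} P_N$ and $W_N P_N \widetilde{\mathbf{X}}^{-1} P_N W_N$ with an explicit correction term, yielding an operator $\mathbf{Y}_N$ on $P_N \mathcal{H}$. The essential identity, obtained by compressing $\mathbf{X} \mathbf{X}^{-1} = I$ with $P_N$ on both sides and expanding $\mathbf{X} = \sum_{a,b \in \{P_N, Q_N\}} a \mathbf{X} b$, reads
\begin{align*}
\mathbf{X}_N \, (P_N \mathbf{X}^{-1} P_N) \;=\; P_N - P_N \mathbf{X} Q_N \mathbf{X}^{-1} P_N.
\end{align*}
The far-corner error $P_N \mathbf{X} Q_N \mathbf{X}^{-1} P_N$ does not vanish in norm on its own, but after the parallel expansion of $\widetilde{\mathbf{X}} \widetilde{\mathbf{X}}^{-1} = I$ conjugated by $W_N$, the two errors combine into a residual $K_N$ expressible as a composition of the Hankel operators $H(\mathbf{x})$ and $H(\widetilde{\mathbf{x}})$ (the natural ``corner blocks'' of $P_N \mathbf{X} Q_N$-type rearrangements) with projections that tend strongly to zero. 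Continuity of the symbol on the unit circle is precisely the hypothesis of Hartman's theorem, which identifies it with \emph{compactness} of $H(\mathbf{x})$ and $H(\widetilde{\mathbf{x}})$; since compact operators map strongly-null sequences to norm-null sequences, $\|K_N\| \to 0$. For $N$ large enough that $\|K_N\| \leq 1/2$, a Neumann series inverts $P_N + K_N$ on $P_N \mathcal{H}$, so $\mathbf{X}_N^{-1} = (P_N + K_N)^{-1} \mathbf{Y}_N$ with a uniform norm bound controlled by $\|\mathbf{X}^{-1}\|$ and $\|\widetilde{\mathbf{X}}^{-1}\|$.

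The hard part will be the norm estimate $\|K_N\| \to 0$ in the sufficiency direction. Ordinary compression bounds are insufficient, because projections like $Q_N$ converge only strongly, not in operator norm; the residual becomes small only through the sandwiching of a \emph{compact} operator against strongly-null projections. Hartman's theorem---equating continuity of the Toeplitz symbol with compactness of the Hankel operator---is thus the load-bearing analytical input, and also the reason the continuity hypothesis in the lemma cannot be dropped: for merely bounded symbols, Hankel operators need only be bounded, the $K_N$ need not be norm-small, and the finite sections can fail to be stable even when both $\mathbf{X}$ and $\widetilde{\mathbf{X}}$ are invertible.
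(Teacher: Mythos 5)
You should know at the outset that the paper itself offers \emph{no proof} of this lemma: it is imported as a black box, adapted from Theorem~6.9 of Ref.~\cite{BottcherToe}, and its only role in the paper is as an ingredient in the proof of Theorem~2 in Appendix~B. So there is no in-paper argument to compare against; the relevant benchmark is the finite-section-method literature, and your proposal is essentially a faithful reconstruction of the standard proof found there. Your necessity direction is complete and correct: the flip identity $W_N \mathbf{X}_N W_N = (\widetilde{\mathbf{X}})_N$ is the right way to put $\mathbf{X}$ and $\widetilde{\mathbf{X}}$ on equal footing (unitarity of $W_N$ makes the two finite-section sequences simultaneously stable), the kernel argument for injectivity is sound, and the weak-compactness argument for surjectivity, together with weak lower semicontinuity of the norm giving $\lVert x\rVert \leq C\lVert y\rVert$ and hence a bounded two-sided inverse, is the standard one. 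Your sufficiency direction also identifies the correct mechanism: a three-term approximate inverse, Widom-type corner identities, and Hartman's theorem converting continuity of the symbol into compactness of the Hankel operators, which is exactly what upgrades strong convergence of the projections into norm convergence of the residual.

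That said, sufficiency as written is a sketch, and three specific items would need to be filled in. First, the ``explicit correction term'' is never displayed; the cancellation works only for the specific choice $\mathbf{Y}_N = P_N\mathbf{X}^{-1}P_N + W_N\widetilde{\mathbf{X}}^{-1}W_N - T_N(\mathbf{x}^{-1})$, where $T_N(\mathbf{x}^{-1})$ is the finite section of the Toeplitz operator whose symbol is the pointwise inverse $\mathbf{x}(z)^{-1}$. It is precisely this subtraction that removes the two corner terms $P_N H(\mathbf{x})H(\widetilde{\mathbf{x}^{-1}})P_N$ and $W_N H(\widetilde{\mathbf{x}})H(\mathbf{x}^{-1})W_N$, which do \emph{not} vanish in norm, leaving only residuals of the form $(\text{bounded})\,Q_N\,(\text{compact})$, which do. Second, to even define $\mathbf{x}^{-1}$ as a continuous symbol you must first argue that invertibility of $\mathbf{X}$ forces $\det \mathbf{x}(e^{ik}) \neq 0$ for all $k$ (invertible implies Fredholm, and a block-Toeplitz operator with continuous symbol is Fredholm if and only if its symbol is pointwise invertible); this step is tacit in your write-up. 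Third, your construction yields only a one-sided inverse of $\mathbf{X}_N$ with a uniform norm bound; this suffices solely because each $\mathbf{X}_N$ is a finite square matrix, so one-sided invertibility is two-sided --- worth stating explicitly, since for operator-valued analogues of the argument it would fail. None of these is a wrong turn; they are exactly the details the cited reference supplies.
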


\begin{proof}[Proof of Theorem \ref{speclimthm}]
Let $\lambda\in \displaystyle\lim_{N\to\infty}\sigma(\mathbf{X}_N)$ and assume, by contradiction, that $\lambda\not\in\sigma(\mathbf{X})\cup \sigma(\widetilde{\mathbf{X}})$. Then, by definition, the operators $\mathbf{X}' \equiv \mathbf{X}-\lambda\mathds{1}$ and $\widetilde{\mathbf{X}}'\equiv \widetilde{\mathbf{X}}-\lambda \mathds{1}$ are invertible. By Lemma\,\ref{BottcherLemma}, the sequence $\{\mathbf{X}_N'\equiv \mathbf{X}_N - \lambda \mathds{1}_{mN}\}$ is stable. Let $N_0$ be such that $\displaystyle\sup_{N\geq N_0}\norm{(\mathbf{X}_{N}')^{-1}} <\infty$. Now, for each $N\geq N_0$, let $d_N\equiv \displaystyle\min_{\mu\in\sigma(\mathbf{X}_N)}|\mu-\lambda|$. Stability of the sequence $\{\mathbf{X}_N'\}$ implies $d_N\neq 0$. In terms of this minimal distance, we have the  resolvent bound (see Theorem 4.2 in Ref.\,\cite{TrefethenPS}, for instance)
\begin{align*}
   \norm{(\mathbf{X}_N')^{-1}} =\norm{(\mathbf{X}_N-\lambda)^{-1}} \geq \frac{1}{d_N}.
   \end{align*}
Therefore, it follows that 
\begin{align}
\label{dbound}
    \sup_{N\geq N_0} \frac{1}{d_N} \leq \sup_{N\geq N_0}\norm{(\mathbf{X}_{N}')^{-1}} <\infty.
\end{align}

Since, by assumption, $\lambda\in \displaystyle\lim_{N\to\infty}\sigma(\mathbf{X}_N)$, 
by the definition of the uniform limiting set (see Chap. 3.5 of Ref.\,\cite{BottcherToe}), we also know that there exists a sequence $\{\lambda_N\}$ with $\lambda_N\in\sigma(\mathbf{X}_N)$ such that $\displaystyle\lim_{N\to\infty}\lambda_N = \lambda$. It follows that $d_N \leq |\lambda_N - \lambda|$. In conjunction with Eq.\,\eqref{dbound}, we conclude that
\begin{align*}
    \sup_{N\geq N_0}\frac{1}{|\lambda_N-\lambda|} \leq \sup_{N\geq N_0} \frac{1}{d_N}  <\infty.
\end{align*}
However, we arrive at a contradiction since $\lambda_N\to\lambda$ implies that the far left hand-side must diverge. 
\end{proof}

\section{Bloch dynamical matrices for single species QBLs}
\label{Gks}

\subsection{General derivation}

The Bloch dynamical matrices of finite-range QBHs with one internal degree of freedom satisfy $\mb g(k)=\sum e^{ikj}\mb g_j$, with
\begin{align*}
    \mb g_{j}^{\dag}\nonumber&=\boldsymbol \tau_3\mb g_{-j}\boldsymbol \tau_3, \quad 
    \mb g_j^{*}=-\boldsymbol \tau_1\mb g_j\boldsymbol \tau_1.
\end{align*}
These properties can be leveraged to show that
\begin{align}
\label{dag}
   \mb  g^{\dag}(k)&=\sum_je^{-ikj}\boldsymbol \tau_3\mb g_{-j}\boldsymbol \tau_3=\boldsymbol \tau_3\sum_je^{ikj}\mb g_j\boldsymbol \tau_3=\boldsymbol \tau_3\mb g(k)\boldsymbol \tau_3 , \\
   \label{star}
   \mb  g^*(k)&= \sum_j e^{-ikj}(-\boldsymbol \tau_1\mb g_j\boldsymbol \tau_1)=-\boldsymbol \tau_1\mb g(-k)\boldsymbol \tau_1.
\end{align}
Expanding in a basis of Pauli matrices, we have
\begin{align*}
    \mb g(k)&\equiv d_0(k)\boldsymbol \sigma_0+id_1(k)\boldsymbol \sigma_1+id_2(k)\boldsymbol \sigma_2+d_3(k)\boldsymbol \sigma_3\equiv \mathcal{G}_k(d_0,d_1,d_2,d_2),
\end{align*}
where the factors of $i$ are included for later simplification.
From Eq.\,(\ref{dag}) and the properties of Pauli matrices we deduce that 
\begin{align*}
    \mathcal{G}_k(d_0^*,-d_1^*,-d_2^*,d_3^*)= \mathcal{G}_k(d_0,-d_1,-d_2,d_3),
\end{align*}
i.e. all $d_i(k)$ are real. From Eq.\,(\ref{star}), we obtain
\begin{align*}
    \mathcal{G}_k(d_0,-d_1,d_2,d_3)= \mathcal{G}_{-k}(-d_0,-d_1,d_2,d_3), 
\end{align*}
which implies that $d_0$ is an odd function of $k$, while the rest are all even.
All in all, it follows that 
\begin{align}
    \mb g(k)&=\begin{pmatrix}
       d_0(k)+d_3(k)&i d_1(k)+d_2(k)\\
    i d_1(k)-d_2(k)   &d_0(k)-d_3(k)
    \end{pmatrix},
\end{align}
with $d_0$ encoding imaginary hopping terms, $d_3$ real hopping, $d_2, d_1$ real and imaginary pairing, respectively. The corresponding spectrum is given by
\begin{align*}
\sigma(\mb g(k))=\omega_{\pm}(k)&=d_0(k)\pm\sqrt{d_3(k)^2-d_2(k)^2-d_1(k)^2}.
\end{align*}
Hence, the dynamical stability condition is:
\begin{align}
\label{dynam stability}
    d_3(k)^2-d_2(k)^2-d_1(k)^2\geq 0,\quad \forall k.
\end{align}
The thermodynamic stability condition can be obtained from  
$$\sigma\left(\boldsymbol \tau_3\mb g(k)\right)=d_3(k)\pm\sqrt{d_0(k)^2+d_1(k)^2+d_2(k)^2},$$ which amounts to two possibilities:
\begin{align}
\label{thermo stability}
  \nonumber
  \text{(i)}\hspace{0.5em}d_3(k)\geq 0 &\text{ and } d_3^2(k)-d_2^2(k)-d_1^2(k)-d_0^2(k)\geq 0 \quad \forall k\\
  \text{(ii)} \hspace{0.4em}d_3(k)\leq  0 &\text{ and } d_3^2(k)-d_2^2(k)-d_1^2(k)-d_0^2(k)\geq 0 \quad \forall k,
\end{align}
where (i) (respectively, (ii)) corresponds to a system with energies bounded from below (respectively, above). Comparing Eqs.\,(\ref{dynam stability}) and (\ref{thermo stability}), we make two observations: First, bulk stability is independent of $d_0(k)$; second, a bulk stable system may be taken from thermodynamically stable to a thermodynamically unstable phase by tuning the magnitude of $d_0(k)$. We conclude by noting that this analysis can be carried out in an arbitrary spatial dimension $D$, in which case $k$ is replaced with the $D$-dimensional wavevector $\vec{k}$.

\subsection{Specialization to model systems}
\label{app:models} 

\subsubsection{The Kitaev-coupled oscillator chain}

From Eq.\,(\ref{KOC}) in the main text, we can read off:
\begin{align*}
    \mb g_0=\begin{pmatrix}
        \Omega&0\\0&-\Omega 
    \end{pmatrix}, \quad \mb g_1=\frac{i}{2}\begin{pmatrix}
        -J&\Delta\\\Delta& -J
    \end{pmatrix},\quad \mb g_{-1}=\frac{i}{2}\begin{pmatrix}
        J&\Delta\\\Delta& J
    \end{pmatrix}. 
\end{align*}
Thus, the KOC Bloch dynamical matrix is given by
\begin{align*}
  \mb g(k)=\mb g_0+\mb g_1 e^{ik}+\mb g_{-1}e^{-ik} =\begin{pmatrix}
    \Omega+J\sin k& i \Delta \cos k\\
    i \Delta \cos k& -\Omega+J \sin k
\end{pmatrix},
\end{align*}
with eigenvalues
\begin{align*}
    \omega_{\pm}(k)&= J \sin k\pm \sqrt{\Omega^2-\Delta^2\cos^2k}.
\end{align*}
The expression under the square root is negative, and hence the system is bulk-unstable for $\Omega<\Delta$. The value $\Omega=\Delta$ corresponds to an exceptional point, while between $\Delta\leq \Omega\leq J$ the system is gapless. The eigenvalues 
$$\sigma(\boldsymbol{\tau}_3\mb g(k)) =\Omega \pm \sqrt{\Delta^2\cos^2k+J^2\sin^2k}.$$ 
When $J>\Delta$, the expression under the square root is maximized at $\pm k/2$ and the condition for thermodynamic stability becomes $\Omega>J$.

\subsubsection{The gapped harmonic chain with imaginary hopping}

The internal coupling matrices for Eqs.\,(\ref{TRSB}) in the main text are:
\begin{align*}
    \mb g_0&=\begin{pmatrix}
        \Omega&0\\
        0&-\Omega
    \end{pmatrix}, \quad \mb g_1=\frac{1}{2}\begin{pmatrix}
      -J-i\gamma &-J\\
      J&J-i\gamma
    \end{pmatrix},\quad  \mb g_{-1}=\frac{1}{2}\begin{pmatrix}
      -J+i\gamma &-J\\
      J&J+i\gamma
    \end{pmatrix},
\end{align*}
whereby we obtain:
\begin{align*}
    \mb g(k)&=\begin{pmatrix}
        \Omega-J\cos k+\gamma \sin k&-J\cos k\\
        J\cos k&-\Omega+J\cos k+\gamma \sin k
    \end{pmatrix}.
\end{align*}
The eigenvalues of the Bloch dynamical matrix are now $\gamma \sin k\pm \sqrt{\Omega(\Omega-2J\cos k)}$. The eigenvalues 
$$\sigma(\boldsymbol{\tau}_3\mb g(k))=\Omega-J\cos k \pm \sqrt{J^2\cos ^2k+\gamma^2\sin^2k}.$$ 
Accordingly, dynamical stability holds as long as $\Omega>2J>0$, while thermodynamic stability is broken when $\gamma^2> \gamma_c^2 = \tfrac{1}{2} \Omega^2 \big(1-\sqrt{1-(2J/\Omega )^2} \big)$.

\section{Other forms of boundary-dependent stability disagreement}
\label{app: Boundary-dependent stability}

In our classification of finite and infinite-size disagreements, we have focused on systems which display dynamical stability differences between OBCs and SIBCs, and the dynamics these disagreements engender. Yet, in principle, other types of disagreement can appear and give rise to interesting dynamics in QBLs.  

Incidentally, the representative systems we have looked at are in the {\em same} dynamical stability phase for SIBCs, as well as BIBCs. This, however, is not generically true. A notable counterexample can be found in \cite{BarnettEdgeInstab}, where a Su-Schrieffer-Heeger Hamiltonian, with modified on-site terms is considered. The reference finds that, under a certain range of parameters, while all the bulk modes are dynamically stable, there are unstable edge-modes that appear for all system-size. Furthermore, through the use of the matrix WH factorization, we can show that the complex edge modes persist in the infinite-size limit.  Thus, while this model is dynamically stable under PBCs and BIBCs, it is unstable under OBCs and SIBCs. This is yet another instance of a single boundary having a dramatic effect on the dynamical stability of a system.

\section{Pure bosonic Gaussian states}
\label{app: Gaussian states}

\subsection{Basic properties}

In this appendix, we review the basic properties of pure bosonic, Gaussian states, as they prove to be a useful tool for demonstrating the effect of dynamical metastability on entanglement production in Sec.\,\ref{EE}. A Gaussian state is uniquely specified by its first two cumulants, the mean-vector $\vec{m} = \braket{{\cal R}}$ and the CM defined in Eq.\,\eqref{covM}, 
$$\mathbf{\Gamma}_{ij} = \braket{\{{\cal R}_i - \braket{{\cal R}_i},{\cal R}_j - \braket{{\cal R}_j}\}}, \quad {\cal R}=[x_1,p_1,\ldots, x_N,p_N]^T, \quad [{\cal R}_k,{\cal R}_l]=(-\boldsymbol{\tau}_2)_{kl}\mathds{1}\equiv i\mb \Omega_{kl}\mathds{1}.$$ 
As noted in the main text, the symplectic form $\boldsymbol{\Omega}$ encodes the CCRs in the quadrature basis. The bipartite EE of any subsystem in a Gaussian state is completely encoded in the CM \cite{AdessoEE,Rigol2018}. Therefore, we recount the necessary and sufficient requirements for a CM to be that of a pure, bosonic Gaussian state.

In the quadrature basis, any CM corresponding to a physical state is a positive semi-definite, real, symmetric matrix. The Heisenberg uncertainty relation obeyed by the quadratures imposes an additional constraint, which translates into the following condition on the CM \cite{GaussianCov,AdessoEE}:
\begin{align}
\label{state condition}
    \mb \Gamma+i\boldsymbol{\Omega}\geq 0.
\end{align}
By the Williamson theorem, any symmetric, positive $2N\times 2N$ matrix can be decomposed as:
\begin{align}
\label{CM decomp}
    \mb \Gamma= \mb S^T \boldsymbol{\nu} \mb S,
\end{align} with $\mb S$ a \textit{symplectic transformation}, one that preserves CCRs, $\mb S^T \boldsymbol{\Omega} \mb S=\boldsymbol{\Omega}$, and $\boldsymbol{\nu}=\bigoplus\limits_{j=1}^{N}\nu_j \mathbb{1}_2$, with $\nu_j>0$ the \textit{symplectic eigenvalues} of $\mb\Gamma$. The symplectic eigenvalues of $\mb \Gamma$ are the positive eigenvalues of $i\boldsymbol{\Omega}\mb\Gamma$, as $\sigma(i\boldsymbol{\Omega}\mb\Gamma)=\{\pm \nu_j\}$. Finally, imposing Eq.\,\eqref{state condition}, it follows that a valid CM has to satisfy $\nu_j\geq 1$.

The purity condition on a Gaussian state, $\tr(\rho^2)=1$, is equivalent \cite{AdessoPurity,AdessoEE} to $\prod\limits_{i=1}^{N}\nu_i^{-1}=1$. Using Eq.\,\eqref{CM decomp} and the property $\det \mb S=1$, leads to $\det\mb \Gamma=\prod\limits_{j=1}^{N}\nu_j^2$. Therefore, the purity of a Gaussian state requires $\frac{1}{\sqrt{\det{\mb \Gamma}}}=1$. This fixes $\nu_j=1, \forall j$ and saturates the minimum uncertainty in Eq.\,\eqref{state condition}. Notably, it follows that every pure Gaussian state is the vacuum state with respect to its Williamson normal form in Eq.\,\eqref{CM decomp}.

\subsection{Construction of generic pure Gaussian states}

Now we provide a roadmap for constructing generic pure Gaussian states. From the purity condition, the matrix $\boldsymbol{\nu}$ in Eq. \eqref{CM decomp} is fixed. Therefore, we need only to find a general form for $\mb S$. Any symplectic transformation has an Euler (or Bloch-Messiah) decomposition $\mb S=\mb O \bigoplus\limits_{j=1}^{N}\begin{pmatrix}
    e^{r_j}&0\\
    0&e^{-r_j}
\end{pmatrix} \mb O{}'$, with $r_j\in \mathbb{R}$ and $\mb O,\mb O'$ symplectic and orthogonal \cite{QRdecomp,ArvindSymp}. Furthermore, the structure of a generic symplectic orthogonal matrix can be expressed as 
$$\mb O=\begin{pmatrix}
   \mb  X&\mb Y\\
    -\mb Y&\mb X
\end{pmatrix},  \qquad \mb X\mb X^T+\mb Y\mb Y^T=1,\quad \mb X\mb Y^T-\mb Y\mb X^T=0.$$ 
This fact may be seen by computing the (unique) polar decomposition of $\mathbf{S}$ and then diagonalizing the positive definite part. Notably, $\mb X, \mb Y$ that satisfy the latter conditions give $\mb X+i\mb Y=\mb Z$, with $\mb Z$ unitary. It follows that any $2N\times 2N$ symplectic matrix may be computed from two $N\times N$ unitary matrices (defining $\mathbf{O}$ and $\mathbf{O'}$) and a set of $N$ real numbers $r_j$. In case of generating a random symplectic transformation, one may sample these two unitaries from the uniform distribution associated to the Haar measure. The real numbers $r_j$ may, for instance, be sampled from a normal distribution with arbitrarily specified mean and variance. This arbitrariness reflects the fact that there is no uniform distribution for the space of symplectic matrices due to its intrinsic noncompactness.

\section{Linear response for quadratic bosonic Lindbladians}
\label{app:LR}

In this appendix, we recount the basic linear response theory for Lindbladians of Ref.\,\cite{ZanardiResponse} and apply it to the special case of a linearly driven QBL. The general setup consists of the unperturbed Lindbladian $\mathcal{L}_0$ subject to a time-dependent perturbation of the form $\xi(t) \mathcal{L}_1$, with $\xi(t)$ some time-dependent scalar function and $\mathcal{L}_1$ a valid Lindbladian. Given an initial condition $\rho_\text{in}$ at $t=0$, the full time dependence is given by:
\begin{align*}
\rho(t) = \mathcal{E}^t(\rho_\text{in}),\quad \mathcal{E}^t = \mathcal{T}\exp\left[\int_{0}^t d\tau\left(\mathcal{L}_0+\xi(\tau)\mathcal{L}_1\right)\right],
\end{align*}
with $\mathcal{T}$ the time-ordering symbol. We also denote by $\mathcal{E}_0^t = e^{t\mathcal{L}_0} = \mathcal{E}^t|_{\xi=0}$ the unperturbed propagator and $\rho_0(t) = \mathcal{E}_0^t(\rho_\text{in})$ the state at time $t$ for the unperturbed system. Given an observable $A$, linear response theory is concerned with evaluating the quantity:
\begin{align*}
\delta\braket{A}(t) \equiv \braket{A}(t) - \braket{A}_0(t) = \tr[A\rho(t)] - \tr[A\rho_0(t)].
\end{align*}
These quantities are most easily expressed in the interaction picture dynamics, whose propagator is given by $\mathcal{E}_I^t \equiv \mathcal{E}_0^{-t}\mathcal{E}^t$. Note that, while formally $\mathcal{E}^{-t}_0$ is not defined as $\mathcal{E}_0^t$ forms only a semigroup, we may consistently define it as $\mathcal{E}_0^{-t} = e^{-t\mathcal{L}_0}$.  The interaction-picture propagator satisfies the differential equation:
\begin{align*}
\frac{d}{dt}\mathcal{E}_I^t = \xi(t) \mathcal{L}_I^t \mathcal{E}_I^t,\quad \mathcal{L}_I^t \equiv \mathcal{E}_0^{-t}\mathcal{L}_1 \mathcal{E}_0^t,
\end{align*}
where we have defined the interaction picture generator $\mathcal{L}_I^t$. It follows that
\begin{align*}
\delta\braket{A}(t) = \tr[A(\mathcal{E}^t-\mathcal{E}^t_0)\rho_\text{in}]=\tr[A_0(t)(\mathcal{E}_I^t- 1)\rho_\text{in}], \quad A_0(t)\equiv (\mathcal{E}^t_0)^\star A, 
\end{align*}
where $1$ is the identity superoperator and 
 we have invoked the unperturbed Heisenberg dynamics of the operator $A$, i.e., $A_0(t)$. One may verify that
\begin{align*}
\mathcal{E}_I^t-1 = \int_0^t d\tau\, \xi(\tau)\mathcal{L}_I^\tau \mathcal{E}_I^\tau,
\end{align*}
which is expanded in a Dyson series. Keeping only the term proportional to $t$, one obtains
\begin{align*}
\delta\braket{A}(t) = \int_0^\infty d\tau \xi(\tau) \chi_A(t,\tau),\quad \chi_A(t,\tau) \equiv \Theta(t-\tau)\tr\left[ A_0(t) \mathcal{L}_I^\tau\rho_\text{in}\right],
\end{align*}
where we have uncovered the linear susceptibility $\chi(t,\tau)$. For later convenience, we can rearrange this to obtain
\begin{align*}
 \chi_{A}(t,\tau) = \Theta(t-\tau)\tr\left[ (\mathcal{E}_0^\tau)^\star(\mathcal{L}_1^\star(A_0(t-\tau))) \rho_\text{in}\right],
\end{align*}
which is completely in terms of the Heisenberg picture. In the case where we have several perturbations, $\sum_{j}\xi_j(t)\mathcal{L}_j$, linearity allows us to individually compute the response of $A$ to each individual perturbation. It follows that:
\begin{align*}
\delta\braket{A}(t) = \int_0^\infty d\tau\,\sum_{j=1}^n \xi_j(\tau) \chi_{A,j}(t,\tau),\quad  \chi_{A,j}(t,\tau) \equiv \Theta(t-\tau)\tr\left[ A_0(t) \mathcal{L}_{j,I}^\tau\rho_\text{in}\right] ,
\end{align*}
where $\mathcal{L}_{j,I}^\tau = \mathcal{E}_0^{-\tau}\mathcal{L}_j\mathcal{E}_0^\tau$ and 
\begin{align}
\label{multiresp}
 \chi_{A,j}(t,\tau) = \Theta(t-\tau)\tr\left[(\mathcal{E}_0^\tau)^\star(\mathcal{L}_j^\star(A_0(t-\tau))) \rho_\text{in}\right].
 \end{align}

Let us now specialize the above formalism to QBLs subjected to linear driving. Specifically, consider a QBL $\mathcal{L}_0$ and consider linear Hamiltonian perturbations, i.e., $H_0\mapsto H_0+H_d(t)$ with:
\begin{align*}
H_d(t) = i\sum_j z_j^*(t) a_j - z_j(t) a_j^\dag = i\vec{\beta}(t)^\dag \bm{\tau}_3\Phi,\quad \vec{\beta}(t) \equiv [z_1(t),z_1^*(t),\ldots,z_N(t),z_N^*(t)]^T,
\end{align*}
where we have utilized the usual $\bm{\tau}$-matrices. In the language of the general theory, we have a number of perturbations \footnote{Technically, the superoperators $-i[\Phi_j,\cdot]$ are not valid Lindbladians since $\Phi_j\neq\Phi_j^\dag$. In this sense, it would be more appropriate to rotate to a quadrature basis whereby the perturbations take the form $\xi'(t)\mathcal{L}_j$, with $\xi'(t)$ real and $\mathcal{L}_j'$ a commutator with $x_j$ or $p_j$. Ultimately, the results will be the same. }:
\begin{align*}
\mathcal{L}_0 &\mapsto \mathcal{L}_0 + \sum_{j=1}^{2N}\xi_j(t)\mathcal{L}_j,\quad \xi_j(t) = i(\vec{\beta}^\dag(t)\bm{\tau}_3)_j,\quad \mathcal{L}_j = -i[\Phi_j,\cdot]. 
\end{align*} 
To evaluate the response of $\braket{\Phi_i}$, we compute the response functions $\chi_{ij}(t,\tau) \equiv \chi_{\Phi_i,j}(t,\tau)$ using the formula in Eq.\,\eqref{multiresp} with $A = \Phi_i$. First off, we note that:
\begin{align*}
\Phi_i(t-\tau) = (\mathbf{V}(t-\tau)\Phi)_i,\quad \mathbf{V}(t) = e^{-i\mathbf{G}_0t},
\end{align*}
in terms of the dynamical matrix $\mathbf{G}_0$ of the unperturbed system. Proceeding, bosonic algebra provides great simplification:
\begin{align*}
(\mathcal{E}_0^t)^\star(\mathcal{L}_j^\star(\Phi_i(t-\tau))) 
&= (\mathcal{E}_0^t)^\star\left( i[\Phi_j^\dag,\Phi_i(t-\tau)]\right)
= - i(\mathbf{V}(t-\tau)\bm{\tau}_3)_{ij} 1_\text{F},
\end{align*}
with $1_\text{F}$ the Fock space identity. We thus conclude:
\begin{align*}
\chi_{ij}(t,\tau) =  - i\Theta(t-\tau)(\mathbf{V}(t-\tau)\bm{\tau}_3)_{ij},
\end{align*}
which is both \textit{time-translation invariant} and \textit{state-independent}.  In fact, the former can be seen as a consequence of the latter since, in particular, one may choose a stationary state as the initial condition. Altogether,
\begin{align*}
\delta \braket{\Phi_i}(t) = \int_0^\infty d\tau\,\sum_{j=1}^n \chi_{ij}(t,\tau) \xi_j(\tau) = \int_0^\infty d\tau\, \Theta(t-\tau)\left(\mathbf{V}(t-\tau)\vec{\beta}^*(\tau)\right)_i
\end{align*}
Defining the displacement vector $\delta\braket{\Phi}(t)$ with $i$-th element $\delta\braket{\Phi_i}(t)$, we may more succinctly write:
\begin{align*}
\delta \braket{\Phi}(t) =  \int_0^\infty d\tau\, \Theta(t-\tau)\mathbf{V}(t-\tau)\vec{\beta}^*(\tau).
\end{align*}
By decomposing $\delta \braket{\Phi}(t)$ into its frequency components via a Fourier transform, we find:
\begin{align*}
    \delta \braket{\widetilde{\Phi}}(\omega) = \int_{-\infty}^\infty\, dt e^{i\omega t} \delta \braket{\Phi}(t) &= \int_{-\infty}^\infty dt e^{i\omega t}\int_0^\infty d\tau\, \Theta(t-\tau)\mathbf{V}(t-\tau)\vec{\beta}^*(\tau) 
    \\
    &= \int_{-\infty}^\infty dt' e^{i\omega t'}\Theta(t')\mathbf{V}(t')\int_0^\infty d\tau\, e^{i\omega \tau}\vec{\beta}^*(\tau) 
\end{align*}
Adopting the convention that the perturbation is turned on at $t=0$, the integration range of the rightmost integral can be extended to $\tau\in(-\infty,\infty)$ and recognized as the Fourier transform of $\vec{\beta}^*(\tau)$, which we denote by $\vec{b}(\omega)$. We further write:
\begin{align*}
    \bm{\chi}(\omega) = \int_{-\infty}^\infty dt' e^{i\omega t'}\Theta(t')\mathbf{V}(t') = \int_0^\infty dt \,e^{-i(\mathbf{G}_0-\omega)t},
\end{align*}
so that $\braket{\widetilde{\Phi}}(\omega)=\bm{\chi}(\omega)\vec{b}(\omega)$. Locality in frequency space is characteristic of linear response theory more generally. Importantly, the \textit{frequency space response function} $  \bm{\chi}(\omega)$ can be computed exactly
\begin{align*}
    \bm{\chi}(\omega) = \int_0^\infty dt \,e^{-i(\mathbf{G}_0-\omega)t} = i(\omega-\mathbf{G}_0)^{-1},
\end{align*}
as long as the largest real parts of the eigenvalues of $(\mathbf{G}_0-\omega)$ are bounded away from the imaginary axis on the left half plane. In particular, the frequency response function is well-defined for any QBL with a strictly negative stability gap. If the stability gap is zero (as can happen in the case of a dynamically stable QBH, for instance), we may introduce a regularization parameter $\eta>0$ and compute:  
\begin{align*}
    \bm{\chi}(\omega) = \lim_{\eta\to 0}\int_0^\infty dt \, e^{-\eta t}e^{-i(\mathbf{G}_0-\omega)t} = \lim_{\eta\to 0}i(\omega-i\eta-\mathbf{G}_0)^{-1} = i(\omega-\mathbf{G}_0)^{-1}.
\end{align*}

Thus, as mentioned in the main text, the frequency-space response is proportional to the resolvent, $\mathbf{R}(z) \equiv (\mathbf{G}_0-z)^{-1}$, of the dynamical matrix evaluated at $z=\omega\in\mathbb{R}$.

\twocolumngrid

\end{document}